\documentclass[11pt]{article}

\usepackage[utf8]{inputenc}
\usepackage[english]{babel}
\usepackage[T1]{fontenc}
\usepackage{amsmath}
\usepackage{amssymb}
\usepackage{amsthm}
\usepackage{eucal}
\usepackage{qcircuit}
\usepackage{thmtools}
\usepackage[scaled=.92]{helvet}
\usepackage[
a4paper,
top=1in,
bottom=1in,
left=1in,
right=1in]{geometry}

\usepackage{mathtools}
\usepackage{microtype}
\usepackage{mathpazo}
\usepackage{xcolor}

\allowdisplaybreaks

\definecolor{citegreen}{HTML}{208054}
\definecolor{citeblue}{HTML}{0055cc}

\usepackage[hypertexnames=false]{hyperref}
\hypersetup{
    colorlinks=true,
    urlcolor=citeblue,
    linkcolor=citeblue,
    citecolor=citegreen,
}

\newtheorem{theorem}{Theorem}[section]
\newtheorem{proposition}[theorem]{Proposition}
\newtheorem{lemma}[theorem]{Lemma}
\newtheorem{corollary}[theorem]{Corollary}
\usepackage{authblk}

\renewcommand{\le}{\leqslant}
\renewcommand{\leq}{\leqslant}

\renewcommand{\geq}{\geqslant}

\title{Fast Random Compilers for Time-Dependent and Time-Independent Lindbladian Simulation}

\author[1]{Youngjin~Seo\thanks{\texttt{yjseo@kias.re.kr}}\protect\footnotemark[3]
}

\author[2,3]{Dhrumil~Patel\thanks{\texttt{djp265@vt.edu}}\thanks{The first two authors contributed equally.}
}
\author[1]{Hyukjoon~Kwon}
\author[4]{Mark~M.~Wilde}

\affil[1]{School of Computational Sciences, Korea Institute for Advanced Study, Seoul 02455, Republic of Korea}
\affil[2]{Department of Computer Science, Virginia Tech, Alexandria, Virginia 22305, USA}
\affil[3]{Department of Computer Science, Cornell University, Ithaca, New York 14850, USA}

\affil[4]{School of Electrical and Computer Engineering, Cornell University, Ithaca, New York 14850, USA}

\date{October 4, 2026}

\begin{document}

\maketitle

\begin{abstract}
    Randomized product formulas provide simple and flexible methods for quantum simulation. 
    While qDRIFT-type randomized simulation has been extended to Lindbladian dynamics, higher-order randomized compilers have primarily been developed for Hamiltonian simulation. 
    In this work, we develop first- and second-order randomized sampling algorithms for Lindbladian simulation in both the time-independent and time-dependent settings.
    Building on first-order qDRIFT-type Lindbladian simulation, we introduce a qSWIFT-type second-order correction for time-independent Lindbladians with a finite local decomposition. 
    For $\lambda t\geq 1$, where $\lambda$ is the sum of the Hamiltonian coefficients and dissipative rates, the resulting approximation achieves normalized diamond-norm error at most $\varepsilon$ using $r=O(\lambda^2t^2/\sqrt{\varepsilon})$ time slices, improving the $\varepsilon^{-1}$ precision dependence of the first-order qDRIFT-type approximation. 
    We then develop first- and second-order continuous-time sampling algorithms for general time-dependent Lindbladians, with the second-order construction likewise achieving an $\varepsilon^{-1/2}$ dependence on the target precision in the number of time slices. 
    We further extend the time-dependent construction to a linear-combination input model over local Hamiltonian and dissipative generators, allowing the randomized sampler to be formulated in terms of local evolutions rather than the full time-dependent Lindbladian.
    Because the second-order corrected maps need not be CPTP, we use them to estimate observable expectation values rather than to deterministically prepare output states.
\end{abstract}

\tableofcontents

\section{Introduction}

\subsection{Background and Motivation}

Quantum simulation is one of the central applications of quantum computation.
Starting from early product formula constructions for simulating local quantum dynamics~\cite{Lloyd1996UniversalQuantumSimulators}, Hamiltonian simulation has developed into a fundamental primitive in the theory of  quantum algorithms.
Major approaches include product-formula methods and refined Trotter error analyses~\cite{Suzuki1991GeneralTheoryFractal,Childs2021TheoryTrotterError}, sparse Hamiltonian simulation~\cite{Berry2007EfficientQuantumAlgorithms,Berry2015HamiltonianSimulationNearly}, quantum walk methods~\cite{Childs2010RelationshipBetweenContinuous,Berry2015HamiltonianSimulationNearly}, linear combination of unitaries and truncated Taylor series methods~\cite{Childs2012HamiltonianSimulationLinear,Berry2015SimulatingHamiltonianDynamics}, and quantum signal processing and qubitization~\cite{Low2017OptimalHamiltonianSimulation,Low2019HamiltonianSimulationQubitization}.
These developments have enabled increasingly efficient simulation of closed quantum systems across a broad range of algorithmic settings.

In many realistic physical settings, however, quantum systems are not closed. 
They interact with their surrounding environment, and their dynamics are more naturally described by quantum master equations~\cite{breuer2002TheoryOpenQuantum,rivas2012OpenQuantumSystems}. 
Under the Markovian approximation, the evolution is generated by a Lindbladian in Gorini--Kossakowski--Sudarshan--Lindblad form~\cite{Gorini1976CompletelyPositive,Lindblad1976GeneratorsQuantumDynamical}.
Such dynamics arise in quantum optics and many-body systems, and they also provide a framework for describing noise and engineered dissipation in quantum devices~\cite{wiseman2009QuantumMeasurementControl,weiss2012QuantumDissipativeSystems,poyatos1996QuantumReservoirEngineering,verstraete2009QuantumComputationDissipation}. 
Efficiently simulating Lindbladian dynamics is therefore important both for modeling open quantum systems and for using dissipation as an algorithmic resource.

Quantum algorithms for Lindbladian simulation have been developed in several directions. 
Early work studied efficient product-formula simulation of local Lindbladian dynamics~\cite{Kliesch2011DissipativeQuantumChurch}. 
Subsequent work developed efficient algorithms for sparse Markovian quantum dynamics~\cite{Childs2017EfficientSimulationSparse}, while algorithms based on linear-combination techniques for quantum channels achieved improved scaling in the simulation time and precision~\cite{Cleve2017EfficientQuantumAlgorithms}.
Wave matrix Lindbladization (WML) introduced a sample-based input model in which Lindblad operators are encoded into quantum program states, with extensions to general Lindbladians and linear-combinations and polynomials~\cite{Patel2023WMLI,Patel2023WMLII}. 
The sampling-based WML algorithm was subsequently implemented for digital simulation of the nonresonant open Tavis--Cummings model~\cite{Sims2025DigitalQuantumSimulations}.
Higher-order series-expansion methods were later developed for Markovian open-system simulation~\cite{Li2023SimulatingMarkovianOpen}.
More recent works have explored randomized, trajectory-inspired, and product-formula-based approaches to Lindblad evolution~\cite{Chen2025RandomizedMethodSimulating,Peng2025QuantumTrajectoryInspired,Borras2025QuantumAlgorithmSimulate,David2026FasterQuantumSimulation}.

In parallel, randomized methods have become an important approach to Hamiltonian simulation~\cite{Childs2019FasterQuantumSimulation,Campbell2019RandomCompiler}. 
The qDRIFT algorithm shows that a random product formula can simulate Hamiltonian dynamics with a cost independent of the number of Hamiltonian terms, depending instead on the one norm of the coefficients~\cite{Campbell2019RandomCompiler}. 
Higher-order randomized methods, such as qSWIFT and randomized multi-product formulas, further improve the dependence on the target precision while retaining a simple sampling-based structure~\cite{Nakaji2024HighOrderRandomized,Faehrmann2022RandomizingMultiProduct}. 
These developments motivate the question of whether analogous randomized compilers can be developed for open-system dynamics.

Time dependence introduces another essential challenge. 
External driving and control naturally lead to time-dependent Hamiltonians in closed systems. 
In open systems, time-dependent Hamiltonian terms, dissipation rates, or jump operators lead to time-dependent Lindbladians~\cite{barthel2012QuasilocalityEfficient}. 
For closed systems, time-dependent Hamiltonian simulation has been studied using truncated Dyson series~\cite{Kieferova2019SimulatingDynamicsTimeDependent}, interaction-picture methods~\cite{Low2018HamiltonianSimulationInteraction}, permutation expansions~\cite{Chen2021QuantumAlgorithmTimeDependent}, and algorithms with $L^1$-norm scaling~\cite{Berry2020TimeDependentHamiltonian}. 
For open systems, time-dependent Lindbladian simulation is more delicate because the evolution is described by a time-ordered exponential of nonunitary generators and the approximation must preserve, or at least carefully control, completely positive trace-preserving dynamics~\cite{barthel2012QuasilocalityEfficient,Li2023SimulatingMarkovianOpen,Borras2025QuantumAlgorithmSimulate}.

\subsection{Summary of Results}

In this work, we develop randomized sampling algorithms for both time-independent and time-dependent Lindbladian simulation. 
First, for time-independent Lindbladians, we recall a qDRIFT-type averaged channel, following the extension of qDRIFT ideas from Hamiltonian simulation to Lindbladian dynamics~\cite{Campbell2019RandomCompiler,Chen2025RandomizedMethodSimulating}. 
We then introduce a qSWIFT-type second-order correction for Lindbladian dynamics, inspired by higher-order randomized compilers for Hamiltonian simulation~\cite{Nakaji2024HighOrderRandomized}. 
This correction is obtained by comparing the Taylor expansion of the exact Lindblad evolution with that of the averaged qDRIFT channel, yielding an averaged approximation whose normalized diamond-norm error scales as $O(1/r^2)$ in the number of time slices.
We also briefly discuss how the same expansion suggests a hierarchy of higher-order qSWIFT-type corrections.

Second, we consider general time-dependent Lindbladians. 
We formulate a first-order continuous-time sampling algorithm whose averaged channel matches the first-order term of the time-ordered evolution. 
We then construct a second-order correction by inserting a correction map into one time slice, obtaining improved $O(1/r^2)$ scaling in normalized diamond norm. 
We further extend the construction to a linear-combination input model, in which the time-dependent Lindbladian is decomposed into local Hamiltonian and dissipative generators and the algorithm samples both a local term and a time within each interval.

Finally, because the second-order corrected maps are not necessarily CPTP channels, we describe how to estimate observable expectation values using swift-map-based Hadamard-test-type constructions. 
Under the stated implementation assumptions, these procedures give unbiased estimators for the corresponding second-order approximations.
Our time-slice bounds assume access to the required sampling procedures, evolution and dilation primitives, state preparation, and observable measurements. 
Classical preprocessing and finite-precision synthesis costs are not included in the time-slice count.

\section{Time-Independent Lindbladian Simulation}

\subsection{Problem Statement}

We begin by considering the simulation of a time-independent Lindblad master equation
\begin{equation}
    \frac{d \rho}{dt} = \mathcal{L}(\rho) \coloneqq -i[H, \rho] + \sum_{k=1}^{K} \ell_k \left(L_k\rho L_k^\dagger - \frac{1}{2} \{L_k^\dagger L_k, \rho\}\right) \label{eq:time-indep-Lind},
\end{equation}
where $H$ is a time-independent Hamiltonian, each $L_k$ is a normalized Lindblad operator satisfying $\Vert L_k\Vert=1$, and $\ell_k> 0$ is the corresponding dissipative rate. 
Equivalently, the $k$th physical jump operator can be written as $\sqrt{\ell_k}L_k$.
The superoperator $\mathcal{L}$ in \eqref{eq:time-indep-Lind} is called a Lindbladian, and the notation $\{A, B\}$ refers to the anti-commutator of operators $A$ and $B$, i.e., $\{A, B\} \coloneqq AB + BA$. 

We assume that the Hamiltonian admits a local decomposition
\begin{equation}
    H \coloneqq \sum_{j=1}^{J} h_{j} H_{j},\label{eq:H-def}
\end{equation}
where $h_{j}>0$, $ \Vert H_{j} \Vert = 1$, and each $H_{j}$ acts nontrivially on at most a constant number of qubits.  
We also assume that each Lindblad operator $L_k$ is local, in the sense that it acts nontrivially on at most a constant number of qubits.
Such locality assumptions are standard in the simulation of open many-body systems governed by local or short-range Lindbladians~\cite{barthel2012QuasilocalityEfficient}.

By simulating the Lindbladian evolution for time $t$, we mean implementing, or approximating, the quantum channel $e^{t\mathcal{L}}$, which is the solution map of \eqref{eq:time-indep-Lind}, where
\begin{equation}
    e^{t\mathcal{L}}(\rho) = \sum_{s=0}^\infty \frac{t^s}{s!}\mathcal{L}^s(\rho) ,
    \label{eq:Lind-expand}
\end{equation}
and $\mathcal{L}^s$ denotes $s$ sequential applications of the superoperator $\mathcal{L}$. 
The equality above simply comes from the Taylor series expansion of the exponential. 
We measure the cost of an algorithm by counting the number of local operations $\{e^{-i\tau H_{j}}\}_j$ and $\{e^{\tau \mathcal{L}_{k}}\}_k$, where $\{\mathcal{L}_{k}\}_k$ are the local single-operator dissipative Lindbladians defined as
\begin{align}
    \mathcal{L}_{k}(\rho) \coloneqq L_{k}\rho L_{k}^{\dagger} - \frac{1}{2} \left \{L_{k}^{\dagger}L_{k}, \rho \right\}.
\end{align}

\subsection{First-Order qDRIFT-Type Sampling Algorithm}

In this subsection, we recall a qDRIFT-type sampling algorithm for Lindbladian simulation~\cite{Chen2025RandomizedMethodSimulating} and specialize it to the finite local decomposition considered in this work.

\medskip
\noindent
\textbf{qDRIFT-type Sampling Algorithm --- } Let $\lambda \coloneqq \sum_{j=1}^{J} h_{j} + \sum_{k=1}^{K} \ell_{k}$ and $\tau \coloneqq t/r$. Starting from an input state $\rho$, repeat the following steps $r$ times:
\begin{enumerate}
    \item Sample an index $i\in \{1,\ldots,J+K\}$ from the following probability distribution:
    \begin{align}
        p_{i} = 
        \begin{cases}
          \frac{h_{i}}{\lambda}, & i \in \{ 1, \ldots, J \},\\
          \frac{\ell_{i-J}}{\lambda}, & i \in \{ J+1, \ldots, J+K\}.
        \end{cases}
        \label{eq:prob-den-time-ind}
    \end{align}
    \item If $i \in \{1, \ldots, J\}$, then apply the unitary channel $\rho \mapsto e^{-iH_i\lambda\tau}\rho e^{iH_i\lambda\tau}$.
    \item Else if $i \in \{J+1, \ldots, J+K\}$, then apply the quantum channel $e^{\lambda\tau\mathcal{L}_{i-J}}$ to the input state.
\end{enumerate}
Return the final state.

The averaged one-step channel associated with this procedure is
\begin{equation}
\mathcal{Q}_{\tau}(\rho)
\coloneqq
\sum_{j=1}^{J}
\frac{h_j}{\lambda}
e^{-iH_j\lambda\tau}\rho e^{iH_j\lambda\tau} +
\sum_{k=1}^{K}
\frac{\ell_k}{\lambda}
e^{\lambda\tau\mathcal{L}_k}(\rho).
\label{eq:qdrift-one-step-channel}
\end{equation}
Thus, the averaged output channel after $r$ steps is $\mathcal{Q}_{\tau}^{r}$.

The following bound follows by specializing the average-channel analysis of~\cite{Chen2025RandomizedMethodSimulating} to the finite local decomposition considered here.
\begin{proposition}[qDRIFT-type Lindbladian simulation]
\label{prop:qdrift-t-algo-gc}
Let $\mathcal{L}$ be the Lindbladian in~\eqref{eq:time-indep-Lind}, and let $\mathcal{Q}_{\tau}$ be the averaged one-step channel in~\eqref{eq:qdrift-one-step-channel}.
If $\tau=t/r$ and
\begin{equation}
r \geq \frac{2\lambda^2t^2}{\varepsilon},
\end{equation}
then
\begin{equation}
\frac{1}{2}
\left\Vert
e^{t\mathcal{L}} -\mathcal{Q}_{\tau}^{r}
\right\Vert_{\diamond}
\leq
\varepsilon.
\end{equation}
\end{proposition}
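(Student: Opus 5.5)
The argument is a telescoping bound combined with a one-step Taylor comparison. Write $\mathcal{L} = \sum_{i} p_i \lambda \mathcal{K}_i$, where $\mathcal{K}_i(\rho) = -i[H_i,\rho]$ for $i\le J$ and $\mathcal{K}_i = \mathcal{L}_{i-J}$ for $i>J$. Each $\mathcal{K}_i$ generates a CPTP semigroup, and $\|\mathcal{K}_i\|_\diamond \le 2$, since $\|H_i\|=1$ and $\|L_k\|=1$. Both $e^{\tau\mathcal{L}}$ and $\mathcal{Q}_\tau$ are channels with diamond norm $1$, so the standard telescoping identity
\begin{equation}
e^{t\mathcal{L}} - \mathcal{Q}_\tau^r = \sum_{m=0}^{r-1} e^{(r-1-m)\tau\mathcal{L}}\bigl(e^{\tau\mathcal{L}}-\mathcal{Q}_\tau\bigr)\mathcal{Q}_\tau^{m}
\end{equation}
together with submultiplicativity of the diamond norm gives $\|e^{t\mathcal{L}}-\mathcal{Q}_\tau^r\|_\diamond \le r\,\|e^{\tau\mathcal{L}}-\mathcal{Q}_\tau\|_\diamond$. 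The task therefore reduces to a single-step bound of order $\lambda^2\tau^2$.

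For the one-step error, note that the zeroth- and first-order Taylor terms agree: $\mathcal{Q}_\tau = \mathrm{id} + \tau\sum_i p_i\lambda\mathcal{K}_i + \dots = \mathrm{id} + \tau\mathcal{L} + \dots$. Rather than summing full Taylor tails, which would give a constant like $e^{2\lambda\tau}$, I will use the integral remainder form so that the constant stays clean. For a generator $\mathcal{G}$ of a contraction semigroup,
\begin{equation}
e^{s\mathcal{G}} - \mathrm{id} - s\mathcal{G} = \int_0^s (s-u)\,e^{u\mathcal{G}}\mathcal{G}^2\,du,
\end{equation}
so $\|e^{s\mathcal{G}}-\mathrm{id}-s\mathcal{G}\|_\diamond \le \tfrac{s^2}{2}\|\mathcal{G}\|_\diamond^2$. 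Apply this with $\mathcal{G}=\mathcal{L}$ and $s=\tau$, using $\|\mathcal{L}\|_\diamond\le \sum_i p_i\lambda\|\mathcal{K}_i\|_\diamond \le 2\lambda$. This gives a remainder of at most $2\lambda^2\tau^2$. Apply it also to each $\mathcal{G}=\lambda\mathcal{K}_i$ and average with weights $p_i$, which again gives at most $2\lambda^2\tau^2$. The first-order terms cancel, so by the triangle inequality $\|e^{\tau\mathcal{L}}-\mathcal{Q}_\tau\|_\diamond \le 4\lambda^2\tau^2$.

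Combining the two bounds, $\tfrac12\|e^{t\mathcal{L}}-\mathcal{Q}_\tau^r\|_\diamond \le \tfrac12\, r\cdot 4\lambda^2 t^2/r^2 = 2\lambda^2t^2/r$, which is at most $\varepsilon$ exactly when $r\ge 2\lambda^2t^2/\varepsilon$.

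The main obstacle is obtaining exactly the constant $2$. This relies on the bound $\|\mathcal{K}_i\|_\diamond\le 2$ and on using the integral remainder with contractivity $\|e^{u\mathcal{G}}\|_\diamond\le1$, rather than a crude series bound. For the dissipator, $\|L\rho L^\dagger\|_1+\|\tfrac12\{L^\dagger L,\rho\}\|_1\le 2\|\rho\|_1$ holds even when the map is tensored with an identity on an ancilla, so the diamond-norm bound carries over. If a sharper triangle-inequality step turns out to be needed, a fallback is to bound $\|\mathcal{L}^2 - \sum_i p_i\lambda^2\mathcal{K}_i^2\|$ jointly. The crude split above already yields the stated constant, though, so I will use it.
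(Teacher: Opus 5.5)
Your proposal is correct and follows essentially the same route as the paper's proof. Both arguments telescope over the $r$ steps, match the zeroth- and first-order terms, and bound each second-order remainder by $\tfrac{u^2}{2}\Vert\mathcal{G}\Vert_{\diamond}^2$ via the integral remainder. With $\Vert\mathcal{L}\Vert_{\diamond}\leq 2\lambda$ and $\Vert\mathcal{K}_i\Vert_{\diamond}\leq 2$, this gives the one-step bound $4\lambda^2\tau^2$ and hence $2\lambda^2t^2/r$.
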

\begin{proof}
   See Appendix~\ref{app:qdrift-t-algo-gc}.
\end{proof}

\subsection{Second-Order qSWIFT-Type Sampling Algorithm}

In the previous subsection, we recalled a qDRIFT-type first-order approximation for simulating the Lindbladian evolution $e^{t\mathcal{L}}$~\cite{Chen2025RandomizedMethodSimulating} and specialized it to the finite local decomposition considered in this work. 
We now introduce a qSWIFT-type second-order correction to this averaged qDRIFT construction. 
In particular, we focus on a second-order correction, which improves the error dependence from $O(1/r)$ to $O(1/r^2)$ at the level of the averaged approximation map.

The idea is adapted from the qSWIFT algorithm originally developed for Hamiltonian simulation~\cite{Nakaji2024HighOrderRandomized}. 
Here, we formulate an analogous correction for Lindbladian dynamics by comparing the Taylor expansion of the exact one-step evolution $e^{\tau\mathcal{L}}$ with that of the averaged qDRIFT one-step channel. 
This correction will be used to construct a second-order approximation to the target Lindbladian evolution.

Recall that $\mathcal{Q}_{\tau}$ denotes the averaged one-step qDRIFT channel. For brevity, we introduce elementary generators $\mathcal{M}_{\ell}$ so that the Lindbladian can be written as
\begin{equation}
    \mathcal{L} = \lambda \sum_{\ell=1}^{J+K} p_{\ell}\mathcal{M}_{\ell},
\end{equation}
where $\mathcal{M}_{\ell}$ is either the Hamiltonian generator $\mathcal{H}_j(\rho)\coloneqq -i[H_j,\rho]$ or the single-operator dissipative Lindbladian $\mathcal{L}_k$, depending on the index $\ell$, and $p_{\ell}$ is the sampling probability defined in \eqref{eq:prob-den-time-ind}.
Using this, we can rewrite the qDRIFT channel as
\begin{equation}
    \mathcal{Q}_{\tau} =  \sum_{\ell} p_{\ell} e^{\lambda\tau \mathcal{M}_{\ell}},
\end{equation}
where $\tau \coloneqq t/r$. Using the Taylor expansion of the exponential function, we get
\begin{align}
    \mathcal{Q}_{\tau} & =  \mathcal{I} + \sum_{\ell} p_{\ell} \lambda\tau \mathcal{M}_{\ell} + \sum_{\ell} p_{\ell} \sum_{m=2}^{\infty} \frac{(\lambda\tau )^m}{m!} \mathcal{M}_{\ell}^m\\
    & = \mathcal{I} + \tau \mathcal{L} + \sum_{\ell} p_{\ell} \sum_{m=2}^{\infty} \frac{(\lambda\tau )^m}{m!} \mathcal{M}_{\ell}^m.
\end{align}
Similarly, we can expand the target map $e^{\tau \mathcal{L}}$ as
\begin{equation}
    e^{\tau \mathcal{L}} = \mathcal{I} + \tau \mathcal{L} + \sum_{m=2}^{\infty} \frac{\tau^m}{m!} \mathcal{L}^m.
\end{equation}
Subtracting the expansion of $\mathcal{Q}_{\tau}$ from that of $e^{\tau\mathcal{L}}$, we obtain
\begin{align}
    e^{\tau \mathcal{L}}  & =  \mathcal{Q}_{\tau} + \sum_{m=2}^{\infty} \frac{\tau^m}{m!} \mathcal{L}^m -  \sum_{\ell} p_{\ell} \sum_{m=2}^{\infty} \frac{\lambda^m\tau^m}{m!} \mathcal{M}_{\ell}^m \\
    & = \mathcal{Q}_{\tau} + \sum_{m=2}^{\infty} \frac{\lambda^m\tau^m}{m!} \left  (\left(\frac{\mathcal{L}}{\lambda}\right)^m - \sum_{\ell} p_{\ell} \mathcal{M}_{\ell}^m\right)\\
    & = \mathcal{Q}_{\tau} + \sum_{m=2}^{\infty} \frac{\lambda^m\tau^m}{m!} \left  (\widetilde{\mathcal{L}}^m - \sum_{\ell} p_{\ell} \mathcal{M}_{\ell}^m\right)\\
    & = \mathcal{Q}_{\tau} + \sum_{m=2}^{\infty} \frac{\lambda^m\tau^m}{m!}  \mathcal{L}^{(m)},
    \label{eq:qswift-correction}
\end{align}
where we define $\widetilde{\mathcal{L}} \coloneqq \mathcal{L}/\lambda$ and
\begin{equation}
      \mathcal{L}^{(m)} \coloneqq \widetilde{\mathcal{L}}^m - \sum_{\ell} p_{\ell} \mathcal{M}_{\ell}^m.\label{eq:qswift-generator-approx}
\end{equation}
From \eqref{eq:qswift-correction}, the leading correction to the averaged qDRIFT one-step map $\mathcal{Q}_{\tau}$ is the second-order term
\begin{equation}
    \frac{\lambda^2\tau^2}{2}\mathcal{L}^{(2)}.
\end{equation}
Thus, a second-order correction to the $r$-step averaged qDRIFT map is obtained by inserting this term at each of the $r$ possible positions, while leaving the remaining $r-1$ time slices as $\mathcal{Q}_{\tau}$, where products of linear maps are understood as compositions with the rightmost map acting first.
We therefore define the second-order qSWIFT map as
\begin{equation}
    \mathcal{Q}^{(2)}_t
    \coloneqq
    \mathcal{Q}_{\tau}^{r} + \frac{\lambda^2\tau^2}{2}\sum_{k=0}^{r-1}\mathcal{Q}_{\tau}^{r-1-k}\mathcal{L}^{(2)}\mathcal{Q}_{\tau}^{k}.
    \label{def:qswift-map}
\end{equation}
The summation accounts for all possible positions of a single second-order correction among the $r$ qDRIFT steps.

Next, in the following theorem, we analyze how well the second-order qSWIFT-type map approximates the target Lindbladian evolution $e^{t\mathcal{L}}$.

\begin{theorem}[Second-order qSWIFT-type Lindbladian approximation]
\label{thm:qswift-map-error}
Let $\mathcal{L}$ be the time-independent Lindbladian in \eqref{eq:time-indep-Lind}, and let $\mathcal{Q}^{(2)}_t$ be the second-order qSWIFT-type map defined in \eqref{def:qswift-map}. Then, for $\tau=t/r$, the normalized diamond-norm error satisfies
\begin{equation}
    \frac{1}{2}\left\Vert e^{t\mathcal{L}}-\mathcal{Q}^{(2)}_t\right\Vert_{\diamond}
    \leq
    \frac{4\lambda^3t^3}{3r^2} + \frac{4(r-1)\lambda^4t^4}{r^3}.
\end{equation}
Consequently,
\begin{equation}
    \frac{1}{2}\left\Vert e^{t\mathcal{L}}-\mathcal{Q}^{(2)}_t\right\Vert_{\diamond}
    = O\!\left(\frac{\lambda^3t^3+\lambda^4t^4}{r^2}\right).
\end{equation}
In particular, when $\lambda t\geq 1$, it suffices to choose
\begin{equation}
    r = O\!\left(\frac{\lambda^2t^2}{\sqrt{\varepsilon}}\right)
\end{equation}
to ensure
\begin{equation}
    \frac{1}{2}\left\Vert e^{t\mathcal{L}}-\mathcal{Q}^{(2)}_t\right\Vert_{\diamond}
    \leq\varepsilon.
\end{equation}
\end{theorem}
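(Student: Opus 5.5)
We will follow the standard telescoping route. Write $\mathcal{U}\coloneqq e^{\tau\mathcal{L}}$ and $\mathcal{D}\coloneqq \mathcal{U}-\mathcal{Q}_\tau=\sum_{m\ge2}\frac{\lambda^m\tau^m}{m!}\mathcal{L}^{(m)}$ by \eqref{eq:qswift-correction}. Expanding $e^{t\mathcal{L}}=(\mathcal{Q}_\tau+\mathcal{D})^r$ by multilinearity separates it into three groups of terms. The first group has no factor $\mathcal{D}$, giving $\mathcal{Q}_\tau^r$. The second group has exactly one $\mathcal{D}$ placed at position $k$, giving $\sum_k \mathcal{Q}_\tau^{r-1-k}\mathcal{D}\mathcal{Q}_\tau^k$. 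The third group has two or more factors $\mathcal{D}$. Subtracting \eqref{def:qswift-map}, the error $e^{t\mathcal{L}}-\mathcal{Q}^{(2)}_t$ equals
\[
\sum_{k=0}^{r-1}\mathcal{Q}_\tau^{r-1-k}\Big(\mathcal{D}-\tfrac{\lambda^2\tau^2}{2}\mathcal{L}^{(2)}\Big)\mathcal{Q}_\tau^{k}\;+\;(\text{terms with}\ge 2\ \mathcal{D}\text{'s}).
\]
A cleaner route for the second part is a two-level telescoping: $\mathcal{U}^r-\mathcal{Q}_\tau^r=\sum_k \mathcal{U}^{r-1-k}\mathcal{D}\mathcal{Q}_\tau^k$. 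Then replace $\mathcal{U}^{r-1-k}$ by $\mathcal{Q}_\tau^{r-1-k}$ using once more $\mathcal{U}^{n}-\mathcal{Q}_\tau^{n}=\sum_j \mathcal{U}^{n-1-j}\mathcal{D}\mathcal{Q}_\tau^j$. This leaves the double sum $\sum_{k}\sum_{j<r-1-k}\mathcal{U}^{\cdot}\mathcal{D}\mathcal{Q}_\tau^{\cdot}\mathcal{D}\mathcal{Q}_\tau^k$, which has $r(r-1)/2$ terms. Since $\mathcal{U}$ and $\mathcal{Q}_\tau$ are CPTP, they have diamond norm $1$, and submultiplicativity reduces everything to norms of $\mathcal{D}$ and of its tail.

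The key estimates are as follows. Each $\mathcal{M}_\ell$ satisfies $\|\mathcal{M}_\ell\|_\diamond\le 2$, since $\|\mathcal{H}_j\|_\diamond\le 2\|H_j\|$ and $\|\mathcal{L}_k\|_\diamond\le 2\|L_k\|^2$. Hence $\|\widetilde{\mathcal{L}}\|_\diamond\le 2$ and $\|\mathcal{L}^{(m)}\|_\diamond\le 2\cdot 2^m$. This gives
\[
\|\mathcal{D}\|_\diamond\le 2\sum_{m\ge2}\frac{(2\lambda\tau)^m}{m!}=2\big(e^{2\lambda\tau}-1-2\lambda\tau\big),
\qquad
\Big\|\mathcal{D}-\tfrac{\lambda^2\tau^2}{2}\mathcal{L}^{(2)}\Big\|_\diamond\le 2\sum_{m\ge3}\frac{(2\lambda\tau)^m}{m!}.
\]
A sharper route for the tail uses the integral-remainder form of Taylor's theorem applied separately to $e^{\tau\mathcal{L}}$ and to each $e^{\lambda\tau\mathcal{M}_\ell}$. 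Each exponential is then contractive, so the third-order remainders are bounded by $\frac{(2\lambda\tau)^3}{6}$ with no exponential factor. Summing the two gives $\frac{8\lambda^3\tau^3}{3}$, and $r$ copies of it yield $\frac{8\lambda^3t^3}{3r^2}$, which is $\frac{4\lambda^3t^3}{3r^2}$ after the factor $\tfrac12$. In the same way, the second-order remainders give $\|\mathcal{D}\|_\diamond\le \frac{(2\lambda\tau)^2}{2}\cdot 2=4\lambda^2\tau^2$. The double sum is therefore bounded by $\frac{r(r-1)}{2}\cdot 16\lambda^4\tau^4=\frac{8(r-1)\lambda^4t^4}{r^3}$, which becomes $\frac{4(r-1)\lambda^4t^4}{r^3}$ after normalization. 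This matches the stated constants, so the integral-remainder route is the one to commit to.

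We expect the main obstacle to be bookkeeping rather than conceptual. First, the second-level telescoping must be arranged so that every remaining factor is a genuine CPTP map ($\mathcal{U}$ or $\mathcal{Q}_\tau$) and never the non-CPTP correction. Second, the remainder bounds must use contractivity of $e^{s\mathcal{L}}$ and $e^{s\mathcal{M}_\ell}$ for $s\ge0$, rather than the crude series bound. The latter would introduce $e^{2\lambda\tau}$ factors, and these would require $\lambda\tau\le1$ assumptions not present in the statement.

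Finally, for $\lambda t\ge1$ we have $\lambda^3t^3\le\lambda^4t^4$, so the error is at most $\frac{16\lambda^4t^4}{3r^2}$. Choosing $r\ge \frac{4\lambda^2t^2}{\sqrt{3\varepsilon}}$, which is $O(\lambda^2t^2/\sqrt{\varepsilon})$, ensures the error is at most $\varepsilon$.
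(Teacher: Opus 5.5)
Your proposal is correct and is essentially the paper's proof. It uses the same integral-remainder bounds, $\frac{8}{3}\lambda^3\tau^3$ for the one-step residual and $4\lambda^2\tau^2$ for the one-step qDRIFT error, and the same two-level telescoping with $r(r-1)/2$ pairs. The only cosmetic difference is in where you place the factors: you put $\mathcal{Q}_\tau$ powers on the left of the residual and $\mathcal{D}$ in the second sum, while the paper uses $e^{\tau\mathcal{L}}$ powers on the left and the correction $\tfrac{\lambda^2\tau^2}{2}\mathcal{L}^{(2)}$ in the second sum; both satisfy the same $4\lambda^2\tau^2$ bound, so the constants coincide.
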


\begin{proof}
We first bound the one-step error that remains after including the second-order correction. 
From the definition of $\mathcal{L}^{(2)}$ in \eqref{eq:qswift-generator-approx},
\begin{equation}
    \frac{\lambda^2\tau^2}{2}\mathcal{L}^{(2)} =
    \frac{\tau^2}{2}\mathcal{L}^2 - \frac{\lambda^2\tau^2}{2}\sum_{\ell}p_{\ell}\mathcal{M}_{\ell}^2.
    \label{eq:qswift-second-order-correction}
\end{equation}
Thus, this correction exactly accounts for the difference between the second-order terms of the exact one-step evolution and the averaged qDRIFT channel. 
Using $\mathcal{L}=\lambda\sum_{\ell}p_{\ell}\mathcal{M}_{\ell}$ and $\sum_{\ell}p_{\ell}=1$, we can write
\begin{equation}
    e^{\tau\mathcal{L}}-\mathcal{Q}_{\tau}-\frac{\lambda^2\tau^2}{2}\mathcal{L}^{(2)}=
    \left(e^{\tau\mathcal{L}}-\mathcal{I}-\tau\mathcal{L}-\frac{\tau^2}{2}\mathcal{L}^2\right) -\sum_{\ell}p_{\ell}\left(e^{\lambda\tau\mathcal{M}_{\ell}}-\mathcal{I}-\lambda\tau\mathcal{M}_{\ell}-\frac{\lambda^2\tau^2}{2}\mathcal{M}_{\ell}^2\right).
    \label{eq:qswift-one-step-residual-expansion}
\end{equation}

For any Lindbladian generator $\mathcal{G}$ and $u\geq0$, Taylor's formula with integral remainder gives
\begin{equation}
    e^{u\mathcal{G}}-\mathcal{I}-u\mathcal{G}-\frac{u^2}{2}\mathcal{G}^2 
    = \int_0^u dv\,\frac{(u-v)^2}{2}e^{v\mathcal{G}}\circ\mathcal{G}^3.
\end{equation}
Since $e^{v\mathcal{G}}$ is a quantum channel for $v\geq0$, the same argument as in the first-order analysis gives
\begin{equation}
    \left\Vert e^{u\mathcal{G}}-\mathcal{I}-u\mathcal{G}-\frac{u^2}{2}\mathcal{G}^2 \right\Vert_{\diamond}
    \leq
    \frac{u^3}{6}\left\Vert\mathcal{G}\right\Vert_{\diamond}^3.
\end{equation}

Applying this bound to the two types of remainder terms in \eqref{eq:qswift-one-step-residual-expansion}, and using $\left\Vert\mathcal{L}\right\Vert_{\diamond}\leq2\lambda$, $\left\Vert\mathcal{M}_{\ell}\right\Vert_{\diamond}\leq2$, and $\sum_{\ell}p_{\ell}=1$, we obtain
\begin{align}
    \left\Vert e^{\tau\mathcal{L}}-\mathcal{Q}_{\tau}-\frac{\lambda^2\tau^2}{2}\mathcal{L}^{(2)} \right\Vert_{\diamond}
    &\leq
    \frac{\tau^3}{6}\left\Vert\mathcal{L}\right\Vert_{\diamond}^3 + \frac{\lambda^3\tau^3}{6}\sum_{\ell}p_{\ell}\left\Vert\mathcal{M}_{\ell}\right\Vert_{\diamond}^3\\
    &\leq
    \frac{8}{3}\lambda^3\tau^3.
    \label{eq:qswift-one-step-residual}
\end{align}

Using~\eqref{eq:qswift-second-order-correction}, together with the triangle inequality, we have
\begin{align}
    \left\Vert \frac{\lambda^2\tau^2}{2}\mathcal{L}^{(2)} \right\Vert_{\diamond}
    &\leq
    \frac{\lambda^2\tau^2}{2}\left(\left\Vert\frac{\mathcal{L}}{\lambda}\right\Vert_{\diamond}^2
    + \sum_{\ell}p_{\ell}\left\Vert\mathcal{M}_{\ell}\right\Vert_{\diamond}^2\right)\\
    &\leq 4\lambda^2\tau^2.
    \label{eq:qswift-second-order-correction-bound}
\end{align}

We now compare the full $r$-step exact evolution with the second-order qSWIFT-type map. 
From \eqref{def:qswift-map},
\begin{equation}
    e^{t\mathcal{L}}-\mathcal{Q}^{(2)}_t
    = \left[\left(e^{\tau\mathcal{L}}\right)^r-\mathcal{Q}_{\tau}^{r}\right] - \sum_{i=0}^{r-1}\mathcal{Q}_{\tau}^{r-1-i} \frac{\lambda^2\tau^2}{2}\mathcal{L}^{(2)}\mathcal{Q}_{\tau}^{i}.
\end{equation}
Applying the telescoping expansion to the bracketed difference and then adding and subtracting the same single-insertion term with $\mathcal{Q}_{\tau}^{r-1-i}$ replaced by $\left(e^{\tau\mathcal{L}}\right)^{r-1-i}$, we obtain
\begin{align}
    e^{t\mathcal{L}}-\mathcal{Q}^{(2)}_t
    &= \sum_{i=0}^{r-1}\left(e^{\tau\mathcal{L}}\right)^{r-1-i} \left(e^{\tau\mathcal{L}}-\mathcal{Q}_{\tau}-\frac{\lambda^2\tau^2}{2}\mathcal{L}^{(2)}\right) \mathcal{Q}_{\tau}^{i} \notag\\
    &\quad
    + \sum_{i=0}^{r-1}\left[\left(e^{\tau\mathcal{L}}\right)^{r-1-i}-\mathcal{Q}_{\tau}^{r-1-i}\right]
    \frac{\lambda^2\tau^2}{2}\mathcal{L}^{(2)}\mathcal{Q}_{\tau}^{i}.
    \label{eq:qswift-global-error-decomposition}
\end{align}
The first sum contains the residual one-step error after subtracting the second-order correction. The second sum accounts for the difference between the exact and qDRIFT evolutions following the inserted correction.
Since both $e^{\tau\mathcal{L}}$ and $\mathcal{Q}_{\tau}$ are quantum channels, the first sum is bounded using \eqref{eq:qswift-one-step-residual} as
\begin{equation}
    \frac{1}{2} \left\Vert\sum_{i=0}^{r-1}\left(e^{\tau\mathcal{L}}\right)^{r-1-i}\left(e^{\tau\mathcal{L}} - \mathcal{Q}_{\tau} - \frac{\lambda^2\tau^2}{2}\mathcal{L}^{(2)}\right) \mathcal{Q}_{\tau}^{i}\right\Vert_{\diamond}
    \leq \frac{4}{3}r\lambda^3\tau^3.
\end{equation}
For the second sum, we use the one-step qDRIFT bound
\begin{equation}
    \left\Vert e^{\tau\mathcal{L}}-\mathcal{Q}_{\tau}\right\Vert_{\diamond}
    \leq 4\lambda^2\tau^2,
\end{equation}
which follows from the first-order Taylor-remainder analysis used in the proof of Proposition~\ref{prop:qdrift-t-algo-gc}, since the exact and averaged qDRIFT one-step maps agree through first order.
Therefore, another telescoping expansion gives, for every nonnegative integer $m$,
\begin{equation}
    \left\Vert\left(e^{\tau\mathcal{L}}\right)^m -\mathcal{Q}_{\tau}^{m}\right\Vert_{\diamond}
    \leq m\left\Vert e^{\tau\mathcal{L}}-\mathcal{Q}_{\tau}\right\Vert_{\diamond}
    \leq 4m\lambda^2\tau^2.
\end{equation}
Together with \eqref{eq:qswift-second-order-correction-bound}, this gives
\begin{align}
    \frac{1}{2}\left\Vert e^{t\mathcal{L}}-\mathcal{Q}^{(2)}_t\right\Vert_{\diamond}
    &\leq
    \frac{4}{3}r\lambda^3\tau^3 + \frac{1}{2}\sum_{i=0}^{r-1}\left(4(r-1-i)\lambda^2\tau^2\right)\left(4\lambda^2\tau^2\right)\\
    &= 
    \frac{4}{3}r\lambda^3\tau^3 + 4r(r-1)\lambda^4\tau^4\\
    &=
    \frac{4\lambda^3t^3}{3r^2} + \frac{4(r-1)\lambda^4t^4}{r^3}.
\end{align}

Since $(r-1)/r\leq1$, it follows that
\begin{equation}
    \frac{1}{2}\left\Vert e^{t\mathcal{L}}-\mathcal{Q}^{(2)}_t\right\Vert_{\diamond}
    = O\!\left(\frac{\lambda^3t^3+\lambda^4t^4}{r^2}\right).
\end{equation}
If additionally $\lambda t\geq1$, this simplifies to
\begin{equation}
    \frac{1}{2}\left\Vert e^{t\mathcal{L}}-\mathcal{Q}^{(2)}_t\right\Vert_{\diamond}
    = O\!\left(\frac{\lambda^4t^4}{r^2}\right).
\end{equation}
Therefore, choosing
\begin{equation}
    r = O\!\left(\frac{\lambda^2t^2}{\sqrt{\varepsilon}}\right)
\end{equation}
is sufficient to ensure
\begin{equation}
    \frac{1}{2}\left\Vert e^{t\mathcal{L}}-\mathcal{Q}^{(2)}_t\right\Vert_{\diamond}
    \leq\varepsilon.
\end{equation}
This concludes the proof.
\end{proof}

\subsection{Implementation of the qSWIFT Map}
\label{sec:qswift-implementation}

Since the second-order qSWIFT map $\mathcal{Q}^{(2)}_t$ is trace-preserving but need not be completely positive, it cannot in general be implemented directly as a quantum channel. Nevertheless, in most applications, the primary interest lies in computing the expectation value of a given observable with respect to some quantum state. We also consider this setting in what follows. Specifically, we aim to compute the expectation value of an observable $G$ with respect to the state obtained by applying the exact quantum channel $e^{t\mathcal{L}}$ to an initial state $\rho$:
\begin{equation}
    g \coloneqq \operatorname{Tr}\!\left[Ge^{t\mathcal{L}}(\rho)\right].
\end{equation}
However, since $e^{t\mathcal{L}}$ cannot be implemented exactly in general, we instead estimate the second-order approximation to this expectation value,
\begin{equation}
    g^{(2)} \coloneqq \operatorname{Tr}\!\left[G\mathcal{Q}^{(2)}_t(\rho)\right].
\end{equation}
From Theorem~\ref{thm:qswift-map-error}, we have
\begin{equation}
\left | g-g^{(2)}\right|
\leq\left\Vert G\right\Vert\left\Vert e^{t\mathcal{L}}-\mathcal{Q}^{(2)}_t \right\Vert_{\diamond}
= O\!\left(\frac{\left\Vert G\right\Vert(\lambda^3t^3+\lambda^4t^4)}{r^2}\right).
\end{equation}
In particular, when $\lambda t\geq1$, this becomes
\begin{equation}
\left |g-g^{(2)}\right|= O\!\left(\frac{\left\Vert G\right\Vert\lambda^4t^4}{r^2}\right).
\end{equation}
Thus, to guarantee accuracy within a tolerance $\varepsilon > 0$, it suffices to choose
\begin{equation}
r=O\!\left(\frac{\lambda^2t^2\left\Vert G\right\Vert^{1/2}}{\varepsilon^{1/2}}\right),
\end{equation}
so that $|g - g^{(2)}|\leq \varepsilon$. Compared with the standard qDRIFT approximation, this gives a quadratic improvement in the dependence on the approximation error parameter $\varepsilon$ at the level of the number of time slices $r$.

In what follows, we demonstrate how to estimate $g^{(2)}$. From the definition of $\mathcal{Q}^{(2)}_t$ in~\eqref{def:qswift-map}, we have
\begin{align}
    g^{(2)} & = \operatorname{Tr}\!\left[G\mathcal{Q}_{\tau}^r(\rho)\right] + \frac{\lambda^2\tau^2}{2} \sum_{k=0}^{r-1}\operatorname{Tr}\!\left[G\mathcal{Q}_{\tau}^{r-1-k}\mathcal{L}^{(2)}\mathcal{Q}_{\tau}^{k}(\rho)\right]\\
    & \eqqcolon g^{(1)} + \delta g.
\end{align}
Here, $g^{(1)}\coloneqq \operatorname{Tr}\!\left[G\mathcal{Q}_{\tau}^r(\rho)\right]$ denotes the expectation value associated with the first-order qDRIFT approximation, while $\delta g\coloneqq \frac{\lambda^2\tau^2}{2} \sum_{k=0}^{r-1}\operatorname{Tr}\!\left[G\mathcal{Q}_{\tau}^{r-1-k}\mathcal{L}^{(2)}\mathcal{Q}_{\tau}^{k}(\rho)\right]$ denotes the second-order correction.
To estimate $g^{(1)}$, we apply the qDRIFT one-step procedure with time step $\tau$ repeatedly for $r$ steps, so that the averaged map is $\mathcal{Q}_{\tau}^{r}$. We then measure $G$, repeat this procedure multiple times, and take the sample mean. This yields an unbiased estimator of $g^{(1)}$. However, the estimation of the second term $\delta g$ is more involved and requires further decomposition.

To estimate $\delta g$, we express the correction as a sum over the position at which $\mathcal{L}^{(2)}$ is inserted:
\begin{equation}
    \delta g = \frac{\lambda^2\tau^2}{2} \sum_{k=0}^{r-1} \delta g_k\label{eq:g-g_k},
\end{equation}
where each term $\delta g_k$ is given by
\begin{equation}\label{eq:qswift_GQLQ}
    \delta g_k \coloneqq \operatorname{Tr}\!\left[ G\mathcal{Q}_{\tau}^{r-1-k}\mathcal{L}^{(2)}\mathcal{Q}_{\tau}^{k}(\rho) \right].
\end{equation}
From \eqref{eq:qswift-generator-approx} with $m=2$, we have
\begin{equation}
\mathcal{L}^{(2)} = \widetilde{\mathcal{L}}^2 - \sum_{\ell=1}^{J+K} p_{\ell}\mathcal{M}_{\ell}^2
= \sum_{\ell_1,\ell_2=1}^{J+K} p_{\ell_1}p_{\ell_2} \mathcal{M}_{\ell_1}\mathcal{M}_{\ell_2} - \sum_{\ell_1=1}^{J+K} p_{\ell_1}\mathcal{M}_{\ell_1}^2.
\end{equation}
By substituting this into \eqref{eq:qswift_GQLQ}, we obtain
\begin{equation}
\delta g_k = \sum_{\ell_1,\ell_2=1}^{J+K} p_{\ell_1}p_{\ell_2} \operatorname{Tr}\!\left[G\mathcal{Q}_{\tau}^{r-1-k} \mathcal{M}_{\ell_1}\mathcal{M}_{\ell_2}\mathcal{Q}_{\tau}^{k}(\rho)\right]
- \sum_{\ell_1=1}^{J+K} p_{\ell_1} \operatorname{Tr}\!\left[G\mathcal{Q}_{\tau}^{r-1-k}\mathcal{M}_{\ell_1}^2\mathcal{Q}_{\tau}^{k}(\rho)\right].\label{eq:expression-g_k}
\end{equation}
Note that both sums involve terms of a similar structure. The second sum can be viewed as the special case $\ell_2=\ell_1$. Thus, the basic building block is
\begin{equation}
\operatorname{Tr}\!\left[G\mathcal{Q}_{\tau}^{r-1-k}\mathcal{M}_{\ell_1}\mathcal{M}_{\ell_2}\mathcal{Q}_{\tau}^{k}(\rho)\right],
\label{eq:l_1-l_2-term}
\end{equation}
for given $\ell_1,\ell_2\in[J+K]$.

In order to estimate this building block, we employ a qSWIFT control qubit which we initialize in the state $|+\rangle \coloneqq (|0\rangle + |1\rangle)/\sqrt{2}$. 
The initial joint state of the control qubit and system qubits is therefore
\begin{equation}
    |+\rangle\!\langle+| \otimes \rho
    = \frac{1}{2}
    \begin{pmatrix}
        \rho & \rho \\
        \rho & \rho
    \end{pmatrix},
\end{equation}
where we use a block-matrix notation in which each block corresponds to an operator acting on the system Hilbert space.

We first apply the qDRIFT channel $\mathcal{Q}_{\tau}$ exactly $k$ times to the system (leaving the qSWIFT control qubit untouched). This yields
\begin{equation}
    |+\rangle\!\langle+| \otimes \mathcal{Q}_{\tau}^{k}(\rho)
    = \frac{1}{2}
    \begin{pmatrix}
        \mathcal{Q}_{\tau}^{k}(\rho) & \mathcal{Q}_{\tau}^{k}(\rho) \\
        \mathcal{Q}_{\tau}^{k}(\rho) & \mathcal{Q}_{\tau}^{k}(\rho)
    \end{pmatrix}.
\end{equation}

Before applying a particular map $\mathcal{M}_{\ell}$ (for some $\ell \in [J+K]$), we define \emph{swift maps} that allow us to implement $\mathcal{M}_{\ell}$ coherently on the off-diagonal blocks.

If $\mathcal{M}_{\ell}$ is a Hamiltonian term with Hamiltonian $H_{\ell}$, that is, $\mathcal{M}_{\ell}(\sigma) = -i[H_{\ell}, \sigma]$ for some $\ell \leq J$, then we define two swift maps $\mathcal{S}^{(0)}_{\ell}$ and $\mathcal{S}^{(1)}_{\ell}$, each acting on the off-diagonal block $\sigma$ as follows:
\begin{align}
    \begin{pmatrix}
        \cdot & \sigma \\
        \sigma & \cdot
    \end{pmatrix}
    &\xrightarrow{\mathcal{S}^{(0)}_{\ell}}
    \begin{pmatrix}
        \cdot & i\sigma H_{\ell} \\
        -i H_{\ell}\sigma & \cdot
    \end{pmatrix}, \\[0.3em]
    \begin{pmatrix}
        \cdot & \sigma \\
        \sigma & \cdot
    \end{pmatrix}
    &\xrightarrow{\mathcal{S}^{(1)}_{\ell}}
    \begin{pmatrix}
        \cdot & -i H_{\ell} \sigma \\
        i\sigma H_{\ell} & \cdot
    \end{pmatrix}.
\end{align}
Here the symbol ``$\cdot$'' indicates the diagonal blocks, whose transformation we do not display since only the off-diagonal terms are relevant for our construction. The full swift maps on arbitrary joint control-system operators will be defined below. It is clear that summing the above two transformations gives
\begin{equation}
    \begin{pmatrix}
        \cdot & \sigma \\
        \sigma & \cdot
    \end{pmatrix}
    \xrightarrow{\mathcal{S}^{(0)}_{\ell} + \mathcal{S}^{(1)}_{\ell}}
    \begin{pmatrix}
        \cdot & \mathcal{M}_{\ell}(\sigma) \\
        \mathcal{M}_{\ell}(\sigma) & \cdot
    \end{pmatrix}.
\end{equation}

If instead $\mathcal{M}_{\ell}$ is a Lindbladian term with Lindblad operator $L_{\ell-J}$, that is, $\mathcal{M}_{\ell}(\sigma) = L_{\ell-J}\sigma L^{\dagger}_{\ell-J} - \tfrac{1}{2}\{L_{\ell-J}^{\dagger}L_{\ell-J}, \sigma\}$ for some $J < \ell \leq J+K$, we employ the following three swift maps $\mathcal{S}^{(2)}_{\ell}$, $\mathcal{S}^{(3)}_{\ell}$, and $\mathcal{S}^{(4)}_{\ell}$:
\begin{align}
    \begin{pmatrix}
        \cdot & \sigma \\
        \sigma & \cdot
    \end{pmatrix}
    &\xrightarrow{\mathcal{S}^{(2)}_{\ell}}
    \begin{pmatrix}
        \cdot & L_{\ell-J} \sigma L_{\ell-J}^{\dagger} \\
        L_{\ell-J} \sigma L_{\ell-J}^{\dagger} & \cdot
    \end{pmatrix}, \\[0.3em]
    \begin{pmatrix}
        \cdot & \sigma \\
        \sigma & \cdot
    \end{pmatrix}
    &\xrightarrow{\mathcal{S}^{(3)}_{\ell}}
    \begin{pmatrix}
        \cdot & -\tfrac12 L_{\ell-J}^{\dagger} L_{\ell-J} \sigma \\
        -\tfrac12 \sigma L_{\ell-J}^{\dagger} L_{\ell-J} & \cdot
    \end{pmatrix}, \\[0.3em]
    \begin{pmatrix}
        \cdot & \sigma \\
        \sigma & \cdot
    \end{pmatrix}
    &\xrightarrow{\mathcal{S}^{(4)}_{\ell}}
    \begin{pmatrix}
        \cdot & -\tfrac12 \sigma L_{\ell-J}^{\dagger} L_{\ell-J} \\
        -\tfrac12 L_{\ell-J}^{\dagger} L_{\ell-J} \sigma & \cdot
    \end{pmatrix}.
\end{align}

The swift maps introduced above cannot in general be realized directly as deterministic unitary operations. 
In particular, for the Hamiltonian swift maps, the assumption $\Vert H_{\ell}\Vert=1$ does not imply that $H_{\ell}$ is unitary, while for the Lindbladian swift maps, the Lindblad operator $L_{\ell-J}$ need not be unitary.
However, under the locality and normalization assumptions introduced above, all of these swift maps can be realized as heralded local primitives using unitary dilations.
Indeed, since each $H_j$ and $L_k$ is local and satisfies $\Vert H_j\Vert=\Vert L_k\Vert=1$, the operators appearing in these maps remain local.
We show below that the corresponding block operators are contractions and therefore admit unitary dilations.

More explicitly, for system operators $A$ and $B$, define
\begin{equation}
    \mathsf{K}(A,B) \coloneqq |0\rangle\!\langle 0|\otimes A + |1\rangle\!\langle 1|\otimes B.
    \label{eq:swift-block-operator}
\end{equation}
The operator $\mathsf{K}(A,B)$ acts on the qSWIFT control qubit and the system.
For an arbitrary joint control-system operator written in block form as
\begin{equation}
    \Omega =
    \begin{pmatrix}
        \Omega_{00} & \Omega_{01} \\
        \Omega_{10} & \Omega_{11}
    \end{pmatrix},
\end{equation}
we have
\begin{equation}
    \mathsf{K}(A,B)\Omega \mathsf{K}(A,B)^\dagger =
    \begin{pmatrix}
        A\Omega_{00}A^\dagger & A\Omega_{01}B^\dagger \\
        B\Omega_{10}A^\dagger & B\Omega_{11}B^\dagger
    \end{pmatrix}.
\end{equation}

For a Hamiltonian term with $\ell\leq J$, define
\begin{equation}
    \mathsf{K}_{\ell}^{(0)} \coloneqq \mathsf{K}(I,-iH_{\ell}), \qquad
    \mathsf{K}_{\ell}^{(1)} \coloneqq \mathsf{K}(-iH_{\ell},I).
\end{equation}
For a Lindbladian term with $J<\ell\leq J+K$, define
\begin{align}
    \mathsf{K}_{\ell}^{(2)} &\coloneqq \mathsf{K}(L_{\ell-J},L_{\ell-J}), \\
    \mathsf{K}_{\ell}^{(3)} &\coloneqq \mathsf{K}(-L_{\ell-J}^\dagger L_{\ell-J}/2,I), \\
    \mathsf{K}_{\ell}^{(4)} &\coloneqq \mathsf{K}(I,-L_{\ell-J}^\dagger L_{\ell-J}/2).
\end{align}
The corresponding swift maps are defined on an arbitrary joint control-system operator $\Omega$ by
\begin{equation}
    \mathcal{S}_{\ell}^{(b)}(\Omega)
    \coloneqq
    \mathsf{K}_{\ell}^{(b)}\Omega \mathsf{K}_{\ell}^{(b)\dagger}.
\end{equation}
These definitions reproduce the off-diagonal actions displayed above.

To construct the corresponding unitary dilations, we first verify that the block operators defined above are contractions.
Since $\mathsf{K}(A,B)$ is block diagonal with respect to the qSWIFT control qubit, $\lVert \mathsf{K}(A,B)\rVert=\max\{\lVert A\rVert,\lVert B\rVert\}$.
Using $\Vert H_{\ell}\Vert=1$, $\Vert L_{\ell-J}\Vert=1$, and $\lVert L_{\ell-J}^\dagger L_{\ell-J}\rVert=1$, each of the operators $\mathsf{K}_{\ell}^{(b)}$ defined above satisfies $\lVert \mathsf{K}_{\ell}^{(b)}\rVert\leq1$.

For any contraction $C$, the Halmos unitary dilation~\cite{Levy2014DilationTheory} is given by
\begin{equation}
    U_C
    \coloneqq
    \begin{pmatrix}
        C & \sqrt{I-CC^\dagger} \\
        \sqrt{I-C^\dagger C} & -C^\dagger
    \end{pmatrix}.
    \label{eq:unitary-dilation-swift}
\end{equation}
By construction, $(\langle 0|\otimes I)U_C(|0\rangle\otimes I)=C$. 
Taking $C=\mathsf{K}_{\ell}^{(b)}$ therefore gives a unitary acting on one additional heralding qubit together with the qSWIFT control qubit and the local system register.
The resulting finite-dimensional unitary can in principle be decomposed into a sequence of two-level unitaries and implemented as a quantum circuit~\cite{Hu2020QuantumAlgorithmEvolving}.

Operationally, for either a Hamiltonian or Lindbladian swift map, we initialize the heralding qubit in the state $|0\rangle$, apply the corresponding unitary dilation, and measure the heralding qubit.
If the heralding measurement fails, the contribution of that shot is recorded as zero and retained in the sample mean.
Averaging over the heralding outcomes then reproduces the unnormalized transformation associated with $\mathsf{K}_{\ell}^{(b)}$, so no additional normalization by the success probability is required.

Summing the maps $\mathcal{S}^{(2)}_{\ell}$, $\mathcal{S}^{(3)}_{\ell}$, and $\mathcal{S}^{(4)}_{\ell}$ yields
\begin{equation}
    \begin{pmatrix}
        \cdot & \sigma \\
        \sigma & \cdot
    \end{pmatrix}
    \xrightarrow{\mathcal{S}^{(2)}_{\ell} + \mathcal{S}^{(3)}_{\ell} + \mathcal{S}^{(4)}_{\ell}}
    \begin{pmatrix}
        \cdot & \mathcal{M}_{\ell}(\sigma) \\
        \mathcal{M}_{\ell}(\sigma) & \cdot
    \end{pmatrix}.
\end{equation}

After applying the two swift-map primitives, we apply $\mathcal{Q}_{\tau}^{r-1-k}$ to the system and measure $X\otimes G$, with $X$ acting on the qSWIFT control qubit and $G$ acting on the system. 
If any heralding measurement fails, the outcome of that shot is assigned the value zero. 
With this convention, the contributions of the swift maps can be combined to reproduce the building block in~\eqref{eq:l_1-l_2-term}.

Now to rewrite the building block in~\eqref{eq:l_1-l_2-term}, we define the following set:
\begin{equation}
    \mathcal{B}_{\ell} \coloneqq \begin{cases}
    \{0,1\}, & \ell \le J \quad \text{(Hamiltonian term)}, \\[0.3em]
    \{2,3,4\}, & J < \ell \le J+K \quad \text{(Lindbladian term)}.
\end{cases}\label{eq:definition-which-term}
\end{equation}
Using this definition, the building block can be rewritten as follows:
\begin{multline}
    \operatorname{Tr}\!\left[ G\mathcal{Q}_{\tau}^{r-1-k}\mathcal{M}_{\ell_1} \mathcal{M}_{\ell_2}\mathcal{Q}_{\tau}^{k}(\rho) \right]\\
    = \sum_{\substack{b_1 \in \mathcal{B}_{\ell_1}\\b_2 \in \mathcal{B}_{\ell_2}}
    }\operatorname{Tr}\!\left[( X\otimes G)\left( \mathrm{id} \otimes \mathcal{Q}_{\tau}^{r-1-k}\right)\mathcal{S}_{\ell_1}^{(b_1)} \mathcal{S}_{\ell_2}^{(b_2)}\left( \mathrm{id} \otimes \mathcal{Q}_{\tau}^{k}\right)\left(|+\rangle\!\langle+|\otimes \rho\right) \right].\label{eq:target-quanti-alternate-exp}
\end{multline}

Finally, using~\eqref{eq:g-g_k},~\eqref{eq:expression-g_k}, and the expression of the building block in~\eqref{eq:target-quanti-alternate-exp}, we can rewrite $\delta g$ as follows:
\begin{align}
    \delta g & = \frac{\lambda^2\tau^2}{2} \sum_{k=0}^{r-1} \Bigg(\sum_{\ell_1, \ell_2=1}^{J+K}p_{\ell_1}p_{\ell_2}\sum_{\substack{b_1 \in \mathcal{B}_{\ell_1}\\b_2 \in \mathcal{B}_{\ell_2}}
    }\operatorname{Tr}\!\left[( X\otimes G)\left( \mathrm{id} \otimes \mathcal{Q}_{\tau}^{r-1-k}\right)\mathcal{S}_{\ell_1}^{(b_1)} \mathcal{S}_{\ell_2}^{(b_2)}\left( \mathrm{id} \otimes \mathcal{Q}_{\tau}^{k}\right)\left(|+\rangle\!\langle+|\otimes \rho\right) \right] \nonumber\\
    & \qquad - \sum_{\ell_1=1}^{J+K}p_{\ell_1}\sum_{b_1,b_2\in\mathcal{B}_{\ell_1}}\operatorname{Tr}\!\left[( X\otimes G)\left( \mathrm{id} \otimes \mathcal{Q}_{\tau}^{r-1-k}\right)\mathcal{S}_{\ell_1}^{(b_1)} \mathcal{S}_{\ell_1}^{(b_2)}\left( \mathrm{id} \otimes \mathcal{Q}_{\tau}^{k}\right)\left(|+\rangle\!\langle+|\otimes \rho\right) \right]\Bigg)\\
    & = \frac{\lambda^2 t^2}{r}\sum_{s=0}^{1}\sum_{k=0}^{r-1}\sum_{\ell_1, \ell_2=1}^{J+K} \sum_{\substack{b_1 \in \mathcal{B}_{\ell_1}\\b_2 \in \mathcal{B}_{\ell_2}}
    } P(s, k, \ell_1, \ell_2, b_1, b_2)\nonumber\\ 
    & \qquad \times \frac{(-1)^s}{p_{b_1}^{(\ell_1)} p_{b_2}^{(\ell_2)}} \operatorname{Tr}\!\left[( X\otimes G)\left( \mathrm{id} \otimes \mathcal{Q}_{\tau}^{r-1-k}\right)\mathcal{S}_{\ell_1}^{(b_1)} \mathcal{S}_{\ell_2}^{(b_2)}\left( \mathrm{id} \otimes \mathcal{Q}_{\tau}^{k}\right)\left(|+\rangle\!\langle+|\otimes \rho\right) \right]\label{eq:delta-g-final-expression},
\end{align}
where the probability distribution $ P(s, k, \ell_1, \ell_2, b_1, b_2)$ is defined as
\begin{equation}\label{eq:pd-qswift}
    P(s, k, \ell_1, \ell_2, b_1, b_2) \coloneqq \begin{cases}
\frac{1}{2} \cdot \frac{1}{r} p_{\ell_1} p_{\ell_2} p_{b_1}^{(\ell_1)} p_{b_2}^{(\ell_2)}, & s = 0, \\[0.3em]
\frac{1}{2} \cdot \frac{1}{r} p_{\ell_1} \delta_{\ell_1,\ell_2}p_{b_1}^{(\ell_1)} p_{b_2}^{(\ell_1)}, & s = 1.
\end{cases}
\end{equation}
Here the probabilities $ p_{b}^{(\ell)}$ are defined by $ p_{b}^{(\ell)} \coloneqq \tfrac{1}{|\mathcal{B}_{\ell}|}$.

From~\eqref{eq:delta-g-final-expression}, it is clear that one can use sampling to estimate $\delta g$. In the following, we present an algorithm whose output is an unbiased estimator of $\delta g$.\\

\noindent
\textbf{Unbiased Estimator for $\delta g$.}  
To estimate $\delta g$, repeat the following procedure independently several times and take the sample mean of the recorded outcomes:
\begin{enumerate}
    \item Sample $s\in\{0,1\}$ uniformly and sample $k\in\{0,\ldots,r-1\}$ uniformly.
    \item If $s=0$, sample $\ell_1$ and $\ell_2$ independently from the distribution $\{p_\ell\}_{\ell=1}^{J+K}$. 
    If $s=1$, sample~$\ell_1$ from $\{p_\ell\}_{\ell=1}^{J+K}$ and set $\ell_2=\ell_1$.
    \item Sample $b_1\in\mathcal{B}_{\ell_1}$ and $b_2\in\mathcal{B}_{\ell_2}$ uniformly. 
    Denote the corresponding probabilities by $p_{b_1}^{(\ell_1)}$ and $p_{b_2}^{(\ell_2)}$, where $p_b^{(\ell)}=1/|\mathcal{B}_\ell|$.
    \item Prepare the joint initial state $|+\rangle\!\langle+| \otimes \rho$, where the qSWIFT control qubit is initialized in $|+\rangle$ and the system in $\rho$.
    \item Apply the qDRIFT one-step procedure $k$ times to the system, so that the averaged map is $\mathcal{Q}_{\tau}^{k}$.
    \item Apply $\mathcal{S}_{\ell_2}^{(b_2)}$ first and then $\mathcal{S}_{\ell_1}^{(b_1)}$, so that the overall composition is $\mathcal{S}_{\ell_1}^{(b_1)}\mathcal{S}_{\ell_2}^{(b_2)}$. 
    For each Hamiltonian or Lindbladian swift map, initialize the heralding qubit in $|0\rangle$, apply the corresponding local unitary dilation, and measure the heralding qubit.
    \item If any heralding measurement fails, record $0$ for this shot. 
    Otherwise, apply the qDRIFT one-step procedure $r-1-k$ more times to the system, so that the averaged map is $\mathcal{Q}_{\tau}^{r-1-k}$, and measure $X\otimes G$, with $X$ acting on the qSWIFT control qubit and $G$ acting on the system.
    \item For a successful shot, multiply the measurement outcome by $(-1)^s/(p_{b_1}^{(\ell_1)}p_{b_2}^{(\ell_2)})$, scale it by $\lambda^2t^2/r$, and record the result.
\end{enumerate}
The expectation value of the recorded outcome is $\delta g$, so the sample mean gives an unbiased estimator of $\delta g$.

\subsection{Sketch of Higher-Order qSWIFT-Type Corrections}

The second-order construction above uses only the leading correction term in the expansion~\eqref{eq:qswift-correction}. 
The same idea can be extended formally to higher orders.
For $m\geq 2$, define the linear one-step correction map
\begin{equation}\label{eq:delta-m-definition}
    \Delta_m \coloneqq\frac{\lambda^m\tau^m}{m!}\mathcal{L}^{(m)}.
\end{equation}
Then \eqref{eq:qswift-correction} can be rewritten as
\begin{equation}\label{eq:qswift-delta-expansion}
    e^{\tau\mathcal{L}} = \mathcal{Q}_{\tau} + \sum_{m=2}^{\infty} \Delta_m.
\end{equation}
Expanding the $r$-fold product of this one-step expression produces terms containing insertions of the correction maps $\Delta_m$ among the averaged qDRIFT steps. 
The second-order construction keeps the baseline term $\mathcal{Q}_{\tau}^{r}$ and all terms with a single insertion of $\Delta_2$, which gives~\eqref{def:qswift-map}.

To go one order further, the next correction would include not only all terms with a single insertion of $\Delta_3$, but also the terms with two insertions of $\Delta_2$. 
Indeed, after summing over possible insertion positions, both types of terms contribute at the same formal order in $1/r$. 
Higher-order qSWIFT-type corrections can be constructed in the same way by expanding the $r$-fold product and retaining all correction-insertion patterns that contribute at the desired order.

This order-counting also explains the expected improvement in the dependence on the target precision. 
The first-order qDRIFT-type approximation leaves an $O(r^{-1})$ error, leading to an $\varepsilon^{-1}$ dependence in the number of time slices. 
The second-order construction matches the $r$-fold expansion through order $O(r^{-1})$ and yields an $O(r^{-2})$ error, leading to the $\varepsilon^{-1/2}$ dependence proved above. 
Formally, a $p$-th order qSWIFT-type construction that matches the $r$-fold expansion through order $O(r^{-(p-1)})$ would be expected to have a residual error of order $O(r^{-p})$, and hence an $\varepsilon^{-1/p}$ dependence in the number of time slices, up to problem-dependent factors in $\lambda t$.

This discussion is only a formal sketch. 
A rigorous higher-order theorem would require a separate diamond-norm error analysis for the resulting remainder terms. 
We leave a complete analysis of the full higher-order hierarchy for future work.

\section{Time-Dependent Lindbladian Simulation}

\subsection{Problem Statement}

The time-dependent Lindblad master equation is given as follows:
\begin{equation}
    \frac{d \rho(t)}{dt} = \mathcal{L}(t)(\rho(t)) \coloneqq -i[H(t), \rho(t)] + \sum_{k=1}^{K} \left[L_k(t)\rho(t)L_k^\dagger(t) - \frac{1}{2} \{L_k^\dagger(t)L_k(t), \rho(t)\}\right] \label{eq:time-dep-Lind},
\end{equation}
where $H(t)$ is a time-dependent Hamiltonian and $L_1(t), \ldots, L_K(t)$ are time-dependent Lindblad operators, which are not Hermitian in general. Unlike the time-independent case, the solution generally requires time ordering. Its formal solution can be expressed as
\begin{equation}
    \rho(t) = \operatorname{exp}_{\mathcal{T}}\!\left(\int_0^t d\tau\ \mathcal{L}(\tau) \right)\!(\rho(0)),
\end{equation}
where $\operatorname{exp}_{\mathcal{T}}$ is the time-ordered exponential map. For conciseness, we introduce the shorthand notation:
\begin{equation}
    \mathcal{E}(t, v) \coloneqq \operatorname{exp}_{\mathcal{T}}\!\left(\int_v^t d\tau\ \mathcal{L}(\tau) \right),\label{eq:target-qc}
\end{equation}
with $\mathcal{E}(v,v)=\mathcal{I}$.
This allows us to compactly write:
\begin{equation}
    \rho(t) =  \mathcal{E}(t, 0)\!(\rho(0)).
\end{equation}
Furthermore, we can also rewrite the Lindblad master equation~\eqref{eq:time-dep-Lind} in the quantum channel form as
\begin{equation}
    \frac{d \mathcal{E}(t, 0)}{dt} = \mathcal{L}(t) \circ \mathcal{E}(t, 0).
    \label{eq:rewrite-TD-Lindblad}
\end{equation}
We assume that $\mathcal{L}(\tau)$ is bounded and piecewise continuous on $[0,t]$ and nonzero everywhere.
Under this assumption, the differential equation above holds almost everywhere in time.
Our aim is to approximately simulate the quantum channel $\mathcal{E}(t, 0)$ on a quantum computer with an approximation error of $\varepsilon > 0$.

\subsection{Technical Lemma}

Before moving forward with the algorithm, we present the following technical lemma, which we will use later for the error analysis. This lemma is concerned with the scaled version of the map $\mathcal{E}(t, v)$, defined as
\begin{equation}
    \mathcal{E}_s(t, v) \coloneqq \operatorname{exp}_{\mathcal{T}}\!\left(\int_v^t d\tau\ s\mathcal{L}(\tau) \right)\label{eq:scaled-E-channel},
\end{equation}
where $s$ is a scalar parameter.
Specifically, in this lemma, we analyze the derivative of $\mathcal{E}_s(t, v)$ with respect to $s$:
\begin{equation}
    \frac{d}{ds}\mathcal{E}_s(t, v). 
\end{equation}

\begin{lemma} \label{lem:time-dep-deriv-for-err-bnd}
The following holds:
\begin{equation}
        \frac{d \mathcal{E}_s(t, v)}{ds} = \int_v^t d\tau\ \mathcal{E}_s(t, \tau) \circ \mathcal{L}(\tau) \circ \mathcal{E}_s(\tau, v).
\end{equation}
\end{lemma}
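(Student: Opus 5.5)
We will prove the identity with a Duhamel-type (variation-of-parameters) argument. We fix $v$ and treat both sides as functions of the final time $t$. Define
\begin{equation*}
    \mathcal{F}_s(t) \coloneqq \frac{d}{ds}\mathcal{E}_s(t,v), \qquad \mathcal{R}_s(t) \coloneqq \int_v^t d\tau\, \mathcal{E}_s(t,\tau)\circ\mathcal{L}(\tau)\circ\mathcal{E}_s(\tau,v).
\end{equation*}
We will show that both satisfy the same linear inhomogeneous differential equation in $t$ with the same initial condition at $t=v$. Uniqueness of solutions then gives $\mathcal{F}_s=\mathcal{R}_s$.

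\textbf{Step 1: the equation for $\mathcal{F}_s$.} The scaled analogue of \eqref{eq:rewrite-TD-Lindblad} is
\begin{equation*}
    \partial_t\mathcal{E}_s(t,v)=s\mathcal{L}(t)\circ\mathcal{E}_s(t,v), \qquad \mathcal{E}_s(v,v)=\mathcal{I}.
\end{equation*}
Differentiating in $s$ and interchanging $\partial_s$ with $\partial_t$ gives
\begin{equation*}
    \partial_t\mathcal{F}_s(t)=\mathcal{L}(t)\circ\mathcal{E}_s(t,v)+s\mathcal{L}(t)\circ\mathcal{F}_s(t),
\end{equation*}
and $\mathcal{F}_s(v)=\partial_s\mathcal{I}=0$. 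The interchange is justified by the Dyson series
\begin{equation*}
    \mathcal{E}_s(t,v)=\sum_n s^n\int_{v\le\tau_1\le\cdots\le\tau_n\le t}\mathcal{L}(\tau_n)\cdots\mathcal{L}(\tau_1).
\end{equation*}
Since $\mathcal{L}$ is bounded on a finite-dimensional space, this series converges absolutely, with the $n$-th term bounded by $(|s|\,\sup\Vert\mathcal{L}\Vert\,(t-v))^n/n!$. It is therefore entire in $s$, and it can be differentiated termwise in $s$ and, almost everywhere, in $t$.

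\textbf{Step 2: the equation for $\mathcal{R}_s$.} Differentiating $\mathcal{R}_s(t)$ in $t$ by the Leibniz rule produces a boundary term and an integrand term:
\begin{equation*}
    \partial_t\mathcal{R}_s(t)=\mathcal{E}_s(t,t)\circ\mathcal{L}(t)\circ\mathcal{E}_s(t,v)+\int_v^t d\tau\,\partial_t\mathcal{E}_s(t,\tau)\circ\mathcal{L}(\tau)\circ\mathcal{E}_s(\tau,v).
\end{equation*}
Since $\mathcal{E}_s(t,t)=\mathcal{I}$, the boundary term is $\mathcal{L}(t)\circ\mathcal{E}_s(t,v)$. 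Using $\partial_t\mathcal{E}_s(t,\tau)=s\mathcal{L}(t)\circ\mathcal{E}_s(t,\tau)$, the integrand term equals $s\mathcal{L}(t)\circ\mathcal{R}_s(t)$. We also have $\mathcal{R}_s(v)=0$. So $\mathcal{R}_s$ satisfies the same equation as $\mathcal{F}_s$ with the same initial condition.

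\textbf{Step 3: uniqueness.} The difference $\mathcal{D}=\mathcal{F}_s-\mathcal{R}_s$ solves
\begin{equation*}
    \partial_t\mathcal{D}=s\mathcal{L}(t)\circ\mathcal{D}, \qquad \mathcal{D}(v)=0.
\end{equation*}
In integral form, $\Vert\mathcal{D}(t)\Vert\le |s|\sup\Vert\mathcal{L}\Vert\int_v^t\Vert\mathcal{D}\Vert$. Grönwall's inequality then forces $\mathcal{D}\equiv0$. This works for piecewise continuous bounded $\mathcal{L}$ because only absolute continuity in $t$ is needed. Equivalently, the solution is $\mathcal{D}(t)=\mathcal{E}_s(t,v)\mathcal{D}(v)=0$.

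\textbf{Expected obstacle.} The delicate point is regularity rather than algebra. Because $\mathcal{L}$ is only piecewise continuous, the $t$-derivatives hold only almost everywhere. We will therefore phrase Steps 1–3 in integrated form, where all the functions involved are absolutely continuous in $t$.

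As a cross-check, and as an alternative route, the same identity can be obtained directly from the Dyson series. Differentiating $s^n$ gives $n s^{n-1}$. For each of the $n$ positions $\tau_j$ in the ordered product, we pull out $\mathcal{L}(\tau_j)$ and recognise the factors to its left and right as the Dyson series of $\mathcal{E}_s(t,\tau_j)$ and $\mathcal{E}_s(\tau_j,v)$; the factor $n$ accounts for the $n$ choices of $j$. Justifying the resummation here requires only absolute convergence, which we have from Step 1.
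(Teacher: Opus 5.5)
Your proposal is correct and follows essentially the same route as the paper. Both derive the inhomogeneous equation $\partial_t Y = \mathcal{L}(t)\circ\mathcal{E}_s(t,v) + s\,\mathcal{L}(t)\circ Y$ with $Y(v)=0$ for $Y=\frac{d}{ds}\mathcal{E}_s(t,v)$, justified via the Dyson series, and then identify its solution by variation of parameters; you make that last step explicit by checking the candidate and invoking Gr\"onwall uniqueness, while the paper simply cites the formula. Your Dyson-series resummation is a valid independent check that the paper does not use.
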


\begin{proof}
First, we obtain the integral form of $\mathcal{E}_s(t, v)$ using the fundamental theorem of calculus. Since $\mathcal{L}(\tau)$ is bounded and piecewise continuous, the differential equation in physical time holds almost everywhere, while the corresponding integral identity is valid for all endpoints. Thus,
\begin{equation}
    \mathcal{E}_s(t, v) - \mathcal{E}_s(v, v) = \int_v^t d\tau\, \frac{d \mathcal{E}_s(\tau, v)}{d\tau} = s \int_v^t d\tau\, \mathcal{L}(\tau) \circ \mathcal{E}_s(\tau, v).
\end{equation}
Since $\mathcal{E}_s(v, v) = \mathcal{I}$, where $\mathcal{I}$ is the identity channel, we obtain
\begin{equation}
   \mathcal{E}_s(t, v) = \mathcal{I} + s \int_v^t d\tau\, \mathcal{L}(\tau) \circ \mathcal{E}_s(\tau, v).
\end{equation}
The Dyson series for $\mathcal{E}_s(t,v)$ converges uniformly for $s$ in compact intervals, which justifies differentiation with respect to $s$.
Taking the derivative with respect to $s$ on both sides of the above equation, we get
\begin{equation}
    \frac{d \mathcal{E}_s(t, v)}{ds} = \int_v^t d\tau\, \mathcal{L}(\tau) \circ \mathcal{E}_s(\tau, v) + s \int_v^t d\tau\, \mathcal{L}(\tau) \circ \frac{d \mathcal{E}_s(\tau, v)}{ds} \label{eq:deriv-E-s}.
\end{equation}
Let
\begin{equation}
    Y(t)\coloneqq \frac{d}{ds}\mathcal{E}_s(t,v).
\end{equation}
Differentiating \eqref{eq:deriv-E-s} with respect to $t$ for almost every $t$, we obtain
\begin{equation}
    \frac{dY(t)}{dt} = \mathcal{L}(t)\circ\mathcal{E}_s(t,v) + s\mathcal{L}(t)\circ Y(t).
\end{equation}
Moreover, since $\mathcal{E}_s(v,v)=\mathcal{I}$ is independent of $s$, we have
\begin{equation}
    Y(v) = \frac{d}{ds}\mathcal{E}_s(v,v) = 0.
\end{equation}
Applying the variation-of-parameters formula to this inhomogeneous equation yields
\begin{equation}
    Y(t) = \int_v^t d\tau\, \mathcal{E}_s(t,\tau) \circ \mathcal{L}(\tau) \circ \mathcal{E}_s(\tau,v).
\end{equation}
Substituting back $Y(t)=\tfrac{d}{ds}\mathcal{E}_s(t,v)$ gives
\begin{equation}
    \frac{d \mathcal{E}_s(t,v)}{ds} = \int_v^t d\tau\, \mathcal{E}_s(t,\tau)\circ \mathcal{L}(\tau)\circ\mathcal{E}_s(\tau,v),
\end{equation}
as claimed.
\end{proof}

\subsection{First-Order Sampling Algorithm}\label{sec:first-order-sampling-algorithm}

Recall that our target is to simulate the quantum channel $\mathcal{E}(t, 0)$ defined in~\eqref{eq:target-qc}. 
Following the continuous qDRIFT approach for time-dependent Hamiltonian simulation in~\cite{Berry2020TimeDependentHamiltonian}, we extend the same norm-weighted time-sampling idea to Lindbladian dynamics.
On a single interval $[v,t]$, the target propagator satisfies the first-order expansion
\begin{equation}
    \mathcal{E}(t,v) = \mathcal{I} + \int_v^t d\tau\,\mathcal{L}(\tau) + \text{higher-order terms}.
\end{equation}
Motivated by this expansion, we define a randomized one-step channel whose average matches the first-order term.
Specifically, we sample a time $\tau\in[v,t]$ from a probability density $p(\tau)$ and apply the channel $e^{\mathcal{L}(\tau)/p(\tau)}$.
The resulting averaged channel is
\begin{equation}
    \mathcal{A}(t, v) \coloneqq \int_{v}^{t} d\tau\ p(\tau)\ e^{\frac{\mathcal{L}(\tau)}{p(\tau)}}\label{eq:rand-qc},
\end{equation}
where $p(\tau)$ is a probability density function on $\tau \in [v, t]$ given by
\begin{align}
    p(\tau) & \coloneqq \frac{\left \Vert \mathcal{L}(\tau) \right \Vert_{\diamond}}{\left \Vert \mathcal{L} \right \Vert_{\diamond, 1}}\label{eq:prob-den}, \\
    \left \Vert \mathcal{L} \right \Vert_{\diamond, 1} & \coloneqq \int_v^t d\tau \left \Vert \mathcal{L}(\tau) \right \Vert_{\diamond}.
\end{align}
To simulate the full evolution on $[0,t]$, we compose the one-step averaged channels over a partition $0=t_0<t_1<\cdots<t_{r-1}<t_r=t$.
The full averaged sampler is
\begin{equation}
    \mathcal{A}(t_r,t_{r-1}) \circ \cdots \circ \mathcal{A}(t_2,t_1) \circ \mathcal{A}(t_1,t_0).
\end{equation}
Equivalently, with products of maps ordered from right to left, we write
\begin{align}
    \prod_{i=0}^{r-1} \mathcal{A}(t_{i+1}, t_{i})&\coloneqq\mathcal{A}(t_r,t_{r-1}) \circ \cdots \circ \mathcal{A}(t_2,t_1) \circ \mathcal{A}(t_1,t_0)\notag\\ &=  \prod_{i=0}^{r-1}\left[ \int_{t_i}^{t_{i+1}} d\tau\ p^{(i)}(\tau)\ e^{\frac{\mathcal{L}(\tau)}{p^{(i)}(\tau)}}\right]\label{eq:comp-protocol},
\end{align}
where each $p^{(i)}(\tau)$ is a probability density function defined analogously to~\eqref{eq:prob-den} over the interval $[t_i, t_{i+1}]$:
\begin{equation}
    p^{(i)}(\tau) \coloneqq \frac{\left \Vert \mathcal{L}(\tau) \right \Vert_{\diamond}}{\left \Vert \mathcal{L} \right \Vert_{\diamond, 1}^{(i)}},\label{eq:prob-den-i}
\end{equation}
with
\begin{equation}
    \left \Vert \mathcal{L} \right \Vert_{\diamond, 1}^{(i)} \coloneqq \int_{t_i}^{t_{i+1}} d\tau\ \left \Vert \mathcal{L}(\tau) \right \Vert_{\diamond}. 
\end{equation}
We assume that each density $p^{(i)}$ can be sampled efficiently. 
We choose the partition so that each interval has the same integrated Lindbladian norm:
\begin{equation}
    \left \Vert \mathcal{L} \right \Vert_{\diamond, 1}^{(i)} = \int_{t_i}^{t_{i+1}} d\tau\ \left \Vert \mathcal{L}(\tau) \right \Vert_{\diamond} =  \frac{1}{r}\int_{0}^{t} d\tau\ \left \Vert \mathcal{L}(\tau) \right \Vert_{\diamond} = \frac{1}{r}\left \Vert \mathcal{L} \right \Vert_{\diamond, 1}.\label{eq:partition}
\end{equation}
Such a partition exists by applying the intermediate value theorem to the cumulative function $F(x)=\int_0^x d\tau\,\Vert\mathcal{L}(\tau)\Vert_{\diamond}$.

We now analyze the approximation error by bounding the diamond distance between the target channel $\mathcal{E}(t, 0)$ and the sampler $ \prod_{i=0}^{r-1} \mathcal{A}(t_{i+1}, t_{i})$. 
\begin{theorem}\label{thm:error-analysis-gen}
    Let $\mathcal{L}(\tau)$ be a time-dependent Lindbladian as defined in~\eqref{eq:time-dep-Lind} for all $\tau \in [0, t]$. We assume that it is bounded and piecewise continuous on $[0,t]$, and it is nonzero everywhere. Let $\mathcal{E}(t, 0) =  \operatorname{exp}_{\mathcal{T}}\!\left(\int_0^t d\tau\ \mathcal{L}(\tau) \right)$ be the solution to the corresponding Lindblad master equation, and let $\mathcal{A}(t, v)$ be the quantum channel defined in~\eqref{eq:rand-qc}. Let $0=t_0<t_1<\cdots<t_r=t$ be a partition satisfying~\eqref{eq:partition}. Then, for every positive integer $r$, the following holds:
    \begin{equation}
        \left \Vert \mathcal{E}(t, 0) - \prod_{i=0}^{r-1} \mathcal{A}(t_{i+1}, t_{i}) \right \Vert_{\diamond} \leq \frac{\left \Vert \mathcal{L} \right \Vert_{\diamond, 1}^2}{r}. 
    \end{equation}
\end{theorem}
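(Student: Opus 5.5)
We reduce the global error to a sum of one-step errors and then bound each one-step error by a second-order remainder. Since every $\mathcal{A}(t_{i+1},t_i)$ is an average of the channels $e^{\mathcal{L}(\tau)/p^{(i)}(\tau)}$ (each $\mathcal{L}(\tau)/p^{(i)}(\tau)$ is a Lindbladian, being a positive multiple of one), each $\mathcal{A}$ is a quantum channel. Likewise each $\mathcal{E}(t_{i+1},t_i)$ is a channel, and $\mathcal{E}(t,0)=\prod_i\mathcal{E}(t_{i+1},t_i)$. The standard telescoping identity, together with diamond-norm submultiplicativity and $\Vert\Phi\Vert_\diamond=1$ for channels, gives
\begin{equation}
\left\Vert \mathcal{E}(t,0)-\prod_{i=0}^{r-1}\mathcal{A}(t_{i+1},t_i)\right\Vert_\diamond\le\sum_{i=0}^{r-1}\left\Vert\mathcal{E}(t_{i+1},t_i)-\mathcal{A}(t_{i+1},t_i)\right\Vert_\diamond .
\end{equation}
By \eqref{eq:partition}, it then suffices to show that, for each interval $[v,w]$ with $\Lambda\coloneqq\int_v^w d\tau\,\Vert\mathcal{L}(\tau)\Vert_\diamond$, we have $\Vert\mathcal{E}(w,v)-\mathcal{A}(w,v)\Vert_\diamond\le\Lambda^2$. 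Summing $r$ copies of $(\Vert\mathcal{L}\Vert_{\diamond,1}/r)^2$ then gives the claimed bound.

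For the one-step bound, we subtract and add the first-order term $\mathcal{I}+\int_v^w d\tau\,\mathcal{L}(\tau)$. For the sampler, $\int p\,(\mathcal{I}+\mathcal{L}/p)$ reproduces this term exactly. For each fixed $\tau$, the Taylor integral remainder $e^{\mathcal{G}}-\mathcal{I}-\mathcal{G}=\int_0^1 dx\,(1-x)\,e^{x\mathcal{G}}\mathcal{G}^2$, with $e^{x\mathcal{G}}$ a channel, gives $\Vert e^{\mathcal{G}}-\mathcal{I}-\mathcal{G}\Vert_\diamond\le\frac12\Vert\mathcal{G}\Vert_\diamond^2$. Taking $\mathcal{G}=\mathcal{L}(\tau)/p(\tau)$ and $\Vert\mathcal{L}(\tau)\Vert_\diamond/p(\tau)=\Lambda$, we get
\begin{equation}
\left\Vert\mathcal{A}(w,v)-\mathcal{I}-\int_v^w d\tau\,\mathcal{L}(\tau)\right\Vert_\diamond\le\int_v^w d\tau\,p(\tau)\,\frac{\Lambda^2}{2}=\frac{\Lambda^2}{2}.
\end{equation}

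For the exact evolution we need the analogous bound $\Vert\mathcal{E}(w,v)-\mathcal{I}-\int_v^w\mathcal{L}\Vert_\diamond\le\Lambda^2/2$. This is where Lemma~\ref{lem:time-dep-deriv-for-err-bnd} enters. The lemma expresses $\frac{d}{ds}\mathcal{E}_s(w,v)$, and at $s=0$ it equals $\int_v^w\mathcal{L}$. Differentiating once more in $s$ (using the lemma again on the inner factors $\mathcal{E}_s(w,\tau)$ and $\mathcal{E}_s(\tau,v)$) gives an expression for $\frac{d^2}{ds^2}\mathcal{E}_s$ as a double time-ordered integral of terms of the form $\mathcal{E}_s\mathcal{L}(\tau_1)\mathcal{E}_s\mathcal{L}(\tau_2)\mathcal{E}_s$. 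We then write
\begin{equation}
\mathcal{E}_1-\mathcal{I}-\left.\frac{d}{ds}\mathcal{E}_s\right|_{s=0}=\int_0^1 ds\,(1-s)\,\frac{d^2}{ds^2}\mathcal{E}_s .
\end{equation}
For $s\in[0,1]$, each $\mathcal{E}_s(\cdot,\cdot)$ is generated by $s\mathcal{L}(\tau)$, which is still a Lindbladian, so it is a channel with diamond norm $1$. The second derivative is then bounded by $2\int_{v\le\tau_2\le\tau_1\le w}\Vert\mathcal{L}(\tau_1)\Vert_\diamond\Vert\mathcal{L}(\tau_2)\Vert_\diamond=\Lambda^2$, and the $\int_0^1(1-s)\,ds=\tfrac12$ factor yields $\Lambda^2/2$.

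As an alternative, we can avoid the second derivative by using the Dyson integral form directly: $\mathcal{E}(w,v)-\mathcal{I}-\int\mathcal{L}=\int_v^w d\tau\,\mathcal{L}(\tau)\bigl(\mathcal{E}(\tau,v)-\mathcal{I}\bigr)$. Combined with $\Vert\mathcal{E}(\tau,v)-\mathcal{I}\Vert_\diamond\le\int_v^\tau\Vert\mathcal{L}\Vert_\diamond$, this gives the same $\Lambda^2/2$. The triangle inequality then gives the one-step bound $\Lambda^2$.

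The main obstacle is the rigor of this exact-evolution remainder, namely the justification of the Dyson and integral identities when $\mathcal{L}$ is only piecewise continuous. We handle this using the a.e. differential equation and the integral form already established in the proof of Lemma~\ref{lem:time-dep-deriv-for-err-bnd}.
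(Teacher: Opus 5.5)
Your proposal is correct and follows essentially the same route as the paper: telescoping into one-step errors, exact matching of the first-order term $\mathcal{I}+\int_v^w d\tau\,\mathcal{L}(\tau)$, and second-order Taylor remainders in the scaling parameter, each bounded by $\Lambda^2/2$ because channels have unit diamond norm (the paper's nested integral $\int_0^1 ds_1\int_0^{s_1}ds_2$ is your $\int_0^1(1-s)\,ds$). Your alternative for the exact part, $\mathcal{E}(w,v)-\mathcal{I}-\int_v^w d\tau\,\mathcal{L}(\tau)=\int_v^w d\tau\,\mathcal{L}(\tau)\circ(\mathcal{E}(\tau,v)-\mathcal{I})$, is a slightly more elementary variant that avoids Lemma~\ref{lem:time-dep-deriv-for-err-bnd} and the $s$-differentiation entirely, and it gives the same constant.
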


\begin{proof}
    Using the composition property of the time-ordered propagator $\mathcal{E}(t,v)=\mathcal{E}(t,u)\circ\mathcal{E}(u,v)$ \cite[Eq.~(3.77)]{breuer2002TheoryOpenQuantum}, together with a telescoping expansion and the fact that quantum channels have diamond norm one, we obtain
    \begin{align}
        \left \Vert \mathcal{E}(t, 0) - \prod_{i=0}^{r-1} \mathcal{A}(t_{i+1}, t_{i}) \right \Vert_{\diamond} \leq \sum_{i=0}^{r-1} \left \Vert \mathcal{E}(t_{i+1}, t_{i}) - \mathcal{A}(t_{i+1}, t_{i}) \right \Vert_{\diamond}\label{eq:sum-split-error}.
    \end{align}
    
    Now, for each $i\in\{0,\ldots,r-1\}$ and $s\in[0,1]$, let us define a scaled version of the quantum channel $\mathcal{A}(t_{i+1}, t_{i})$:
    \begin{equation}
        \mathcal{A}_s(t_{i+1}, t_{i}) \coloneqq \int_{t_i}^{t_{i+1}} d\tau\ p^{(i)}(\tau)\ e^{s\frac{\mathcal{L}(\tau)}{p^{(i)}(\tau)}}\label{eq:scaled-rand-qc}.
    \end{equation}
    This implies that $ \mathcal{A}_1(t_{i+1}, t_{i}) =  \mathcal{A}(t_{i+1}, t_{i})$ and $\mathcal{A}_0(t_{i+1}, t_{i}) = \mathcal{I}$. Similarly, using the scaled version of the channel $\mathcal{E}(t, v)$, as defined in~\eqref{eq:scaled-E-channel}, for all $i\in\{0,\ldots,r-1\}$, we have $\mathcal{E}_1(t_{i+1},t_i)=\mathcal{E}(t_{i+1},t_i)$ and $\mathcal{E}_0(t_{i+1},t_i)=\mathcal{I}$.

    Now consider the following:
    \begin{align}
        \left \Vert \mathcal{E}(t_{i+1}, t_{i}) - \mathcal{A}(t_{i+1}, t_{i}) \right \Vert_{\diamond} & = \left \Vert \mathcal{E}_1(t_{i+1}, t_{i}) - \mathcal{A}_1(t_{i+1}, t_{i}) \right \Vert_{\diamond} \label{eq:use-1}\\
        & =  \left \Vert \mathcal{E}_1(t_{i+1}, t_{i}) - \mathcal{E}_0(t_{i+1}, t_{i}) + \mathcal{A}_0(t_{i+1}, t_{i}) - \mathcal{A}_1(t_{i+1}, t_{i}) \right \Vert_{\diamond}\\
        &  =  \left \Vert \mathcal{E}_1(t_{i+1}, t_{i}) - \mathcal{A}_1(t_{i+1}, t_{i}) - \left (\mathcal{E}_0(t_{i+1}, t_{i}) - \mathcal{A}_0(t_{i+1}, t_{i})\right) \right \Vert_{\diamond},
    \end{align}
    where the second equality follows due to the fact that $\mathcal{E}_0(t_{i+1}, t_{i}) = \mathcal{A}_0(t_{i+1}, t_{i}) = \mathcal{I}$.
    Using the fundamental theorem of calculus, consider that
    \begin{align}
         \left \Vert \mathcal{E}(t_{i+1}, t_{i}) - \mathcal{A}(t_{i+1}, t_{i}) \right \Vert_{\diamond} &  =  \left \Vert \int_0^1 ds_1\ \frac{d}{ds_1} \left(\mathcal{E}_{s_1}(t_{i+1}, t_{i}) - \mathcal{A}_{s_1}(t_{i+1}, t_{i}) \right) \right \Vert_{\diamond}\label{eq:first-fun-cal}.
    \end{align}
    First, by Lemma~\ref{lem:time-dep-deriv-for-err-bnd},
    \begin{align}
        \left.\frac{d}{ds_1} \mathcal{E}_{s_1}(t_{i+1}, t_{i})\right |_{s_1 = 0} & = \left. \int_{t_{i}}^{t_{i+1}} d\tau\, \mathcal{E}_{s_1}(t_{i+1}, \tau) \circ \mathcal{L}(\tau) \circ \mathcal{E}_{s_1}(\tau, t_{i}) \right |_{s_1 = 0}\\
        & = \int_{t_{i}}^{t_{i+1}} d\tau\, \mathcal{L}(\tau)\label{eq:first-order-E-identity}.
    \end{align}
    On the other hand, direct differentiation of $\mathcal{A}_{s_1}$ gives
    \begin{align}
         \left.\frac{d}{ds_1} \mathcal{A}_{s_1}(t_{i+1}, t_{i})\right |_{s_1 = 0} & = \left.\frac{d}{ds_1} \int_{t_i}^{t_{i+1}} d\tau\, p^{(i)}(\tau)\ e^{s_1\frac{\mathcal{L}(\tau)}{p^{(i)}(\tau)}}\right|_{s_1 = 0}\\
         & = \left.\int_{t_i}^{t_{i+1}} d\tau\, p^{(i)}(\tau)\ \frac{\mathcal{L}(\tau)}{p^{(i)}(\tau)}\, e^{s_1\frac{\mathcal{L}(\tau)}{p^{(i)}(\tau)}}\right |_{s_1 = 0}\\
         & = \int_{t_{i}}^{t_{i+1}} d\tau\, \mathcal{L}(\tau)\label{eq:first-order-A-identity}.
    \end{align}
    Since the two first-order terms in~\eqref{eq:first-order-E-identity} and~\eqref{eq:first-order-A-identity} are equal, we can add and subtract this common term inside the integrand in~\eqref{eq:first-fun-cal}. Thus,
    \begin{align}
        & \left \Vert \mathcal{E}(t_{i+1}, t_{i}) - \mathcal{A}(t_{i+1}, t_{i}) \right \Vert_{\diamond} \notag\\ 
        &=  \left \Vert \int_0^1 ds_1\, \left (\frac{d}{ds_1} \mathcal{E}_{s_1}(t_{i+1}, t_{i}) - \left.\frac{d}{ds_1} \mathcal{E}_{s_1}(t_{i+1}, t_{i})\right |_{s_1 = 0} +  \left.\frac{d}{ds_1} \mathcal{A}_{s_1}(t_{i+1}, t_{i})\right |_{s_1 = 0} - \frac{d}{ds_1} \mathcal{A}_{s_1}(t_{i+1}, t_{i})\right) \right \Vert_{\diamond}\\
        &=  \Bigg \Vert \int_0^1 ds_1\ \Bigg(\frac{d}{ds_1} \mathcal{E}_{s_1}(t_{i+1}, t_{i}) - \frac{d}{ds_1} \mathcal{A}_{s_1}(t_{i+1}, t_{i}) \notag\\
        & \hspace{6.5cm}- \left (\left.\frac{d}{ds_1} \mathcal{E}_{s_1}(t_{i+1}, t_{i})\right |_{s_1 = 0} -  \left.\frac{d}{ds_1} \mathcal{A}_{s_1}(t_{i+1}, t_{i})\right |_{s_1 = 0}\right) \Bigg) \Bigg \Vert_{\diamond}.
    \end{align}
    Applying the fundamental theorem of calculus again, we get
    \begin{align}
        \left \Vert \mathcal{E}(t_{i+1}, t_{i}) - \mathcal{A}(t_{i+1}, t_{i}) \right \Vert_{\diamond}
        & = \left \Vert \int_0^1 ds_1\,  \int_{0}^{s_1} ds_2\, \frac{d^2}{ds_2^2} \left(\mathcal{E}_{s_2}(t_{i+1}, t_{i}) - \mathcal{A}_{s_2}(t_{i+1}, t_{i}) \right) \right \Vert_{\diamond}\\
        & \leq  \int_0^1 ds_1\,  \int_{0}^{s_1} ds_2\, \left(\left \Vert \frac{d^2}{ds_2^2} \mathcal{E}_{s_2}(t_{i+1}, t_{i}) \right \Vert_{\diamond}  + \left \Vert \frac{d^2}{ds_2^2} \mathcal{A}_{s_2}(t_{i+1}, t_{i}) \right \Vert_{\diamond}\right)\label{eq:second-order}.
    \end{align}

    Now let us bound these terms one by one:
    \begin{align}
        & \left \Vert \frac{d^2}{ds_2^2} \mathcal{E}_{s_2}(t_{i+1}, t_{i}) \right \Vert_{\diamond} \notag\\
        & = \left \Vert \frac{d}{ds_2} \left ( \int_{t_{i}}^{t_{i+1}} d\tau\ \mathcal{E}_{s_2}(t_{i+1}, \tau) \circ \mathcal{L}(\tau) \circ \mathcal{E}_{s_2}(\tau, t_{i})  \right) \right \Vert_{\diamond}\label{eq:bound-first-term}\\
        & = \left \Vert\int_{t_{i}}^{t_{i+1}} d\tau\ \left(\frac{d}{ds_2} \mathcal{E}_{s_2}(t_{i+1}, \tau) \circ \mathcal{L}(\tau) \circ \mathcal{E}_{s_2}(\tau, t_{i})  +  \mathcal{E}_{s_2}(t_{i+1}, \tau) \circ \mathcal{L}(\tau) \circ \frac{d}{ds_2} \mathcal{E}_{s_2}(\tau, t_{i}) \right)\right \Vert_{\diamond}\\
        & \leq  \int_{t_{i}}^{t_{i+1}} d\tau\ \left(\left \Vert \frac{d}{ds_2} \mathcal{E}_{s_2}(t_{i+1}, \tau) \circ \mathcal{L}(\tau) \circ \mathcal{E}_{s_2}(\tau, t_{i}) \right \Vert_{\diamond}  +  \left \Vert \mathcal{E}_{s_2}(t_{i+1}, \tau) \circ \mathcal{L}(\tau) \circ \frac{d}{ds_2} \mathcal{E}_{s_2}(\tau, t_{i}) \right \Vert_{\diamond}\right)\\
         & \leq  \int_{t_{i}}^{t_{i+1}} d\tau\ \Bigg(\left \Vert \frac{d}{ds_2} \mathcal{E}_{s_2}(t_{i+1}, \tau) \right \Vert_{\diamond} \left \Vert \mathcal{L}(\tau) \right \Vert_{\diamond} \left \Vert \mathcal{E}_{s_2}(\tau, t_{i}) \right \Vert_{\diamond}  \notag \\
         & \hspace{6cm} +  \left \Vert \mathcal{E}_{s_2}(t_{i+1}, \tau) \right \Vert_{\diamond} \left \Vert \mathcal{L}(\tau)\right \Vert_{\diamond} \left \Vert \frac{d}{ds_2} \mathcal{E}_{s_2}(\tau, t_{i}) \right \Vert_{\diamond}\Bigg)\\
         & = \int_{t_{i}}^{t_{i+1}} d\tau\ \left \Vert \mathcal{L}(\tau) \right \Vert_{\diamond} \Bigg(\left \Vert \frac{d}{ds_2} \mathcal{E}_{s_2}(t_{i+1}, \tau) \right \Vert_{\diamond} + \left \Vert \frac{d}{ds_2} \mathcal{E}_{s_2}(\tau, t_{i}) \right \Vert_{\diamond}\Bigg).\label{eq:bound-first-term-int}
    \end{align}
    The first equality follows from Lemma~\ref{lem:time-dep-deriv-for-err-bnd}. The first inequality uses the triangle inequality for the diamond norm, and the second uses submultiplicativity. In the last equality, we used that $\mathcal{E}_{s_2}$ is a quantum channel for $s_2\in[0,1]$, and hence has diamond norm one.
    From Lemma~\ref{lem:time-dep-deriv-for-err-bnd} and the same norm estimates as above, we have, for all $v\leq t$ and $s\in[0,1]$,
    \begin{align}
         \left \Vert \frac{d}{ds}  \mathcal{E}_s(t, v) \right \Vert_{\diamond} & = \left \Vert \int_v^t d\tau'\, \mathcal{E}_s(t, \tau') \circ \mathcal{L}(\tau') \circ \mathcal{E}_s(\tau', v)  \right \Vert_{\diamond}\\
         & \leq \int_v^t d\tau'\, \left \Vert \mathcal{E}_s(t, \tau') \right \Vert_{\diamond}  \left \Vert\mathcal{L}(\tau')\right \Vert_{\diamond}  \left \Vert\mathcal{E}_s(\tau', v)  \right \Vert_{\diamond}\\
         & = \int_v^t d\tau'\, \left \Vert\mathcal{L}(\tau')\right \Vert_{\diamond}.
         \label{eq:E-s-deriv-bound}
    \end{align}
    Applying~\eqref{eq:E-s-deriv-bound} to the two derivative terms in~\eqref{eq:bound-first-term-int}, we obtain
    \begin{align}
       \left \Vert \frac{d^2}{ds_2^2} \mathcal{E}_{s_2}(t_{i+1}, t_{i}) \right \Vert_{\diamond} & \leq \int_{t_{i}}^{t_{i+1}} d\tau\ \left \Vert \mathcal{L}(\tau) \right \Vert_{\diamond} \Bigg(\int_{\tau}^{t_{i+1}} d\tau'\ \left \Vert\mathcal{L}(\tau')\right \Vert_{\diamond} + \int_{t_{i}}^{\tau} d\tau'\ \left \Vert\mathcal{L}(\tau')\right \Vert_{\diamond} \Bigg)\\
       & = \left (\int_{t_{i}}^{t_{i+1}} d\tau\ \left \Vert \mathcal{L}(\tau) \right \Vert_{\diamond}\right)^2\\
       & = \left(\left \Vert \mathcal{L} \right \Vert_{\diamond, 1}^{(i)}\right)^2\label{eq:bound-first-term-sec}.
    \end{align}

Next we bound the second term in~\eqref{eq:second-order}. We have
\begin{align}
    \left \Vert \frac{d^2}{ds_2^2} \mathcal{A}_{s_2}(t_{i+1}, t_{i}) \right \Vert_{\diamond} & =  \left \Vert \frac{d^2}{ds_2^2} \left (\int_{t_i}^{t_{i+1}} d\tau\, p^{(i)}(\tau)\ e^{s_2\frac{\mathcal{L}(\tau)}{p^{(i)}(\tau)}} \right) \right \Vert_{\diamond}\\
    & \leq \int_{t_i}^{t_{i+1}} d\tau\, p^{(i)}(\tau)\  \left \Vert \frac{d^2}{ds_2^2} e^{s_2\frac{\mathcal{L}(\tau)}{p^{(i)}(\tau)}} \right \Vert_{\diamond}\\
    & = \int_{t_i}^{t_{i+1}} d\tau\, p^{(i)}(\tau)\  \left \Vert  \frac{\mathcal{L}(\tau)}{p^{(i)}(\tau)} \circ \frac{\mathcal{L}(\tau)}{p^{(i)}(\tau)} \circ e^{s_2\frac{\mathcal{L}(\tau)}{p^{(i)}(\tau)}} \right \Vert_{\diamond}\\
    & \leq \int_{t_i}^{t_{i+1}} d\tau\,  \frac{\left \Vert  \mathcal{L}(\tau) \right \Vert_{\diamond}^2}{p^{(i)}(\tau)} \left \Vert e^{s_2\frac{\mathcal{L}(\tau)}{p^{(i)}(\tau)}} \right \Vert_{\diamond}\\
    & = \int_{t_i}^{t_{i+1}} d\tau\, \frac{\left \Vert \mathcal{L}(\tau) \right \Vert_{\diamond}^2}{p^{(i)}(\tau)} \\
    & = \left(\left \Vert \mathcal{L} \right \Vert_{\diamond, 1}^{(i)}\right)^2.\label{eq:bound-sec-term-sec}
\end{align}
The first inequality follows from the triangle inequality for the diamond norm and the positivity of the density $p^{(i)}$. The second inequality uses submultiplicativity of the diamond norm. In the following equality, we used that $e^{s_2\mathcal{L}(\tau)/p^{(i)}(\tau)}$ is a quantum channel for $s_2\in[0,1]$, and hence has diamond norm one. The last equality follows from the definition of $p^{(i)}$ in~\eqref{eq:prob-den-i}.

Substituting the bounds~\eqref{eq:bound-first-term-sec} and~\eqref{eq:bound-sec-term-sec} into~\eqref{eq:second-order}, we obtain
\begin{align}
    \left \Vert \mathcal{E}(t_{i+1}, t_{i}) - \mathcal{A}(t_{i+1}, t_{i}) \right \Vert_{\diamond} \leq \int_0^1 ds_1\, \int_{0}^{s_1} ds_2\, 2 \left(\left \Vert \mathcal{L} \right \Vert_{\diamond, 1}^{(i)}\right)^2 = \left(\left \Vert \mathcal{L} \right \Vert_{\diamond, 1}^{(i)}\right)^2.
\end{align}
Applying this local bound to~\eqref{eq:sum-split-error} gives
\begin{align}
    \left \Vert \mathcal{E}(t, 0) - \prod_{i=0}^{r-1} \mathcal{A}(t_{i+1}, t_{i}) \right \Vert_{\diamond} \leq \sum_{i=0}^{r-1} \left(\left \Vert \mathcal{L} \right \Vert_{\diamond, 1}^{(i)}\right)^2 = r \frac{\left \Vert \mathcal{L} \right \Vert_{\diamond, 1}^2}{r^2} = \frac{\left \Vert \mathcal{L} \right \Vert_{\diamond, 1}^2}{r},
\end{align}
where the first equality follows from the equipartition condition~\eqref{eq:partition}.
\end{proof}

\subsection{First-Order Sampling Algorithm for Linear Combination Input Model}\label{sec:first-order-lc}

In this section, we assume that the Lindbladian $\mathcal{L}(\tau)$, for all $\tau \in [0, t]$, has the following structure:
\begin{enumerate}
    \item The Hamiltonian $H(\tau)$ is represented as a sum of local Hamiltonians $H_j(\tau)$, that is,
    \begin{equation}
        H(\tau) = \sum_{j=1}^{J} H_j(\tau)\label{eq:local-hamil}.
    \end{equation}
    By ``local'', we mean that the operator is acting locally on a constant number of qubits.
    \item Each Lindblad operator $L_k(\tau)$ is a local operator.
\end{enumerate}
Defining the superoperators
\begin{align}
    \mathcal{H}_j(\tau)(\cdot) & \coloneqq -i [H_j(\tau), (\cdot)],\\
    \mathcal{L}_k(\tau)(\cdot) & \coloneqq L_k(\tau) (\cdot) L_k^\dagger(\tau) - \frac{1}{2} \left\{L_k^\dagger(\tau) L_k(\tau), (\cdot) \right\},
\end{align}
we can express the Lindbladian $\mathcal{L}(\tau)$ in the following form:
\begin{equation}
    \mathcal{L}(\tau) = \sum_{j=1}^J\underbrace{\mathcal{H}_j(\tau)}_{\eqqcolon \mathcal{M}_j(\tau)} + \sum_{k=1}^K \underbrace{\mathcal{L}_k(\tau)}_{\eqqcolon \mathcal{M}_{J+k}(\tau)} = \sum_{\ell = 1}^{J+K} \mathcal{M}_{\ell}(\tau)\label{eq:def-lind-lc}.
\end{equation}
We assume that each $\mathcal{M}_{\ell}(\tau)$ is bounded and piecewise continuous on $[0,t]$ and nonzero everywhere.

Our goal remains the simulation of the quantum channel $\mathcal{E}(t, 0) = \operatorname{exp}_{\mathcal{T}}\!\left(\int_0^t d\tau\, \mathcal{L}(\tau) \right)$ but now exploiting this additional structure of $\mathcal{L}(\tau)$. 
To measure the cost of the algorithm (gate complexity), we count the number of gates of the form $e^{c(\tau)\mathcal{M}_{\ell}(\tau)}$, where $c(\tau)$ is some coefficient that depends on the time $\tau$ and each gate $e^{c(\tau)\mathcal{M}_{\ell}(\tau)}$ either corresponds to a Hamiltonian simulation gate $e^{c(\tau)\mathcal{H}_{j}(\tau)}$ or a Lindbladian simulation gate $e^{c(\tau)\mathcal{L}_{k}(\tau)}$. 
We therefore treat these local evolutions as elementary operations in our cost model and count the number of such operations rather than tracking their decomposition into primitive one- and two-qubit gates.

This construction also extends the linear-combination continuous qDRIFT approach of~\cite{Berry2020TimeDependentHamiltonian} from time-dependent Hamiltonian simulation to local Hamiltonian and dissipative generators.
As in the preceding subsection, we define the following randomized quantum channel:
\begin{equation}
    \mathcal{A}^{\operatorname{(LC)}}(t, v) \coloneqq \sum_{\ell=1}^{J+K}\int_{v}^{t} d\tau\ p(\ell,\tau)\, e^{\frac{\mathcal{M}_{\ell}(\tau)}{p(\ell,\tau)}}\label{eq:rand-qc-lc},
\end{equation}
where $p(\ell,\tau)$ is a joint probability distribution over $\ell \in \{1,\ldots,J+K\}$ and $\tau \in [v, t]$, given by
\begin{equation}
     p(\ell,\tau)  \coloneqq 
        \frac{\left \Vert \mathcal{M}_{\ell}(\tau) \right \Vert_{\diamond}}{\left \Vert \mathcal{L} \right \Vert_{\diamond, 1, 1}},\label{eq:prob-den-lc}
\end{equation}
with 
\begin{equation}
    \left \Vert \mathcal{L} \right \Vert_{\diamond, 1, 1} \coloneqq \sum_{\ell=1}^{J+K} \int_{v}^{t} d\tau\, \left \Vert \mathcal{M}_{\ell}(\tau) \right \Vert_{\diamond}.
\end{equation}

The complete sampler is formed by composing such channels $r$ times:
\begin{equation}
    \prod_{i=0}^{r-1} \mathcal{A}^{\operatorname{(LC)}}(t_{i+1}, t_{i}) =   \prod_{i=0}^{r-1} \left[\sum_{\ell=1}^{J+K}\int_{t_{i}}^{t_{i+1}} d\tau\ p^{(i)}(\ell,\tau)\, e^{\frac{\mathcal{M}_{\ell}(\tau)}{p^{(i)}(\ell,\tau)}}\right]\label{eq:comp-protocol-lc},
\end{equation}
where each $p^{(i)}(\ell,\tau)$ is a joint probability distribution over $\ell\in\{1,\ldots,J+K\}$ and $\tau\in[t_i,t_{i+1}]$, defined analogously to~\eqref{eq:prob-den-lc} by
\begin{equation}
      p^{(i)}(\ell,\tau)  \coloneqq  
        \frac{\left \Vert \mathcal{M}_{\ell}(\tau) \right \Vert_{\diamond}}{\left \Vert \mathcal{L} \right \Vert_{\diamond, 1, 1}^{(i)}},\label{eq:prob-den-lc-i}
\end{equation}
with 
\begin{equation}
    \left \Vert \mathcal{L} \right \Vert_{\diamond, 1, 1}^{(i)} \coloneqq  \sum_{\ell=1}^{J+K} \int_{t_i}^{t_{i+1}} d\tau\ \left \Vert \mathcal{M}_{\ell}(\tau) \right \Vert_{\diamond}.
\end{equation}
We assume that, for each $i$, we can efficiently sample a pair $(\ell,\tau)$ from the joint distribution $p^{(i)}(\ell,\tau)$.
We choose the time points $0=t_0<t_1<\cdots<t_r=t$ such that, for all $i$,
\begin{equation}
    \left \Vert \mathcal{L} \right \Vert_{\diamond, 1, 1}^{(i)} = \frac{1}{r}\left \Vert \mathcal{L} \right \Vert_{\diamond, 1, 1}.\label{eq:partition-LC}
\end{equation}
Such a partition exists by applying the intermediate value theorem to the cumulative function $F(x) = \sum_{\ell=1}^{J+K} \int_0^x d\tau\, \left \Vert \mathcal{M}_{\ell}(\tau) \right \Vert_{\diamond}.$

The following theorem then shows how well the sampler $\prod_{i=0}^{r-1} \mathcal{A}^{\operatorname{(LC)}}(t_{i+1}, t_{i})$ approximates the target channel $\mathcal{E}(t, 0)$.

\begin{theorem} 
\label{thm:error-analysis-lc}
Let $\mathcal{L}(\tau)$ be a time-dependent Lindbladian as defined in~\eqref{eq:def-lind-lc} for all $\tau \in [0, t]$. We assume that each $\mathcal{M}_{\ell}(\tau)$ is bounded and piecewise continuous on $[0,t]$ and is nonzero everywhere. Let $\mathcal{E}(t, 0) =  \operatorname{exp}_{\mathcal{T}}\!\left(\int_0^t d\tau\, \mathcal{L}(\tau) \right)$ be the solution to the corresponding Lindblad master equation, and let $\mathcal{A}^{\operatorname{(LC)}}(t, v)$ be the randomized quantum channel as defined in~\eqref{eq:rand-qc-lc}. Let $0=t_0<t_1<\cdots<t_r=t$ be a partition satisfying~\eqref{eq:partition-LC}. Then, for every positive integer $r$, the following holds:    
\begin{equation}
        \left \Vert \mathcal{E}(t, 0) - \prod_{i=0}^{r-1} \mathcal{A}^{\operatorname{(LC)}}(t_{i+1}, t_{i}) \right \Vert_{\diamond} \leq \frac{\left \Vert \mathcal{L} \right \Vert_{\diamond, 1, 1}^2}{r}. 
\end{equation}
\end{theorem}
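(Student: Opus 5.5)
The proof will mirror that of Theorem~\ref{thm:error-analysis-gen}, with the single time variable replaced by the pair $(\ell,\tau)$. First, the composition property $\mathcal{E}(t,v)=\mathcal{E}(t,u)\circ\mathcal{E}(u,v)$ and a telescoping expansion reduce the global error to a sum of one-step errors. This uses that $\mathcal{E}$ and each $\mathcal{A}^{(\mathrm{LC})}(t_{i+1},t_i)$ are quantum channels, hence of unit diamond norm. The channel property of $\mathcal{A}^{(\mathrm{LC})}$ holds because it is a convex mixture of the channels $e^{\mathcal{M}_\ell(\tau)/p^{(i)}(\ell,\tau)}$, each generated by a Hamiltonian or single-operator dissipator with positive coefficient. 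So
\begin{equation}
\left\Vert \mathcal{E}(t,0)-\prod_{i=0}^{r-1}\mathcal{A}^{(\mathrm{LC})}(t_{i+1},t_i)\right\Vert_\diamond\leq\sum_{i=0}^{r-1}\left\Vert\mathcal{E}(t_{i+1},t_i)-\mathcal{A}^{(\mathrm{LC})}(t_{i+1},t_i)\right\Vert_\diamond .
\end{equation}

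For each interval I introduce the scaled map $\mathcal{A}^{(\mathrm{LC})}_s(t_{i+1},t_i)\coloneqq\sum_\ell\int_{t_i}^{t_{i+1}}d\tau\,p^{(i)}(\ell,\tau)\,e^{s\mathcal{M}_\ell(\tau)/p^{(i)}(\ell,\tau)}$ together with $\mathcal{E}_s$ from~\eqref{eq:scaled-E-channel}. Both reduce to $\mathcal{I}$ at $s=0$. Their first $s$-derivatives at $s=0$ coincide:
\begin{itemize}
\item by Lemma~\ref{lem:time-dep-deriv-for-err-bnd}, the derivative of $\mathcal{E}_s$ at $s=0$ is $\int_{t_i}^{t_{i+1}}\mathcal{L}(\tau)\,d\tau$;
\item direct differentiation gives $\sum_\ell\int_{t_i}^{t_{i+1}}\mathcal{M}_\ell(\tau)\,d\tau$ for $\mathcal{A}^{(\mathrm{LC})}_s$, which is the same quantity by~\eqref{eq:def-lind-lc}.
\end{itemize}
Applying the fundamental theorem of calculus twice, exactly as in the proof of Theorem~\ref{thm:error-analysis-gen}, bounds the one-step error by $\int_0^1ds_1\int_0^{s_1}ds_2$ of the sum of the diamond norms of the second derivatives.

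The second derivative of $\mathcal{E}_s$ is handled verbatim as before, using Lemma~\ref{lem:time-dep-deriv-for-err-bnd} and~\eqref{eq:E-s-deriv-bound}. It is bounded by $\bigl(\int_{t_i}^{t_{i+1}}\Vert\mathcal{L}(\tau)\Vert_\diamond d\tau\bigr)^2$. The triangle inequality $\Vert\mathcal{L}(\tau)\Vert_\diamond\leq\sum_\ell\Vert\mathcal{M}_\ell(\tau)\Vert_\diamond$ then bounds this by $\bigl(\Vert\mathcal{L}\Vert^{(i)}_{\diamond,1,1}\bigr)^2$. For the sampler I bound
\begin{equation}
\left\Vert\tfrac{d^2}{ds^2}\mathcal{A}^{(\mathrm{LC})}_s\right\Vert_\diamond\leq\sum_\ell\int d\tau\,\frac{\Vert\mathcal{M}_\ell(\tau)\Vert_\diamond^2}{p^{(i)}(\ell,\tau)}\,\bigl\Vert e^{s\mathcal{M}_\ell/p^{(i)}}\bigr\Vert_\diamond .
\end{equation}
Each exponential is a channel and so has unit norm. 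Inserting~\eqref{eq:prob-den-lc-i} for $p^{(i)}(\ell,\tau)$ then makes the right-hand side exactly $\bigl(\Vert\mathcal{L}\Vert^{(i)}_{\diamond,1,1}\bigr)^2$.

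Integrating over $s_1,s_2$ gives a factor $1/2$ against the factor $2$ from the two terms, so the one-step error is at most $\bigl(\Vert\mathcal{L}\Vert^{(i)}_{\diamond,1,1}\bigr)^2$. Summing over $i$ with the equipartition~\eqref{eq:partition-LC} yields $r\cdot\Vert\mathcal{L}\Vert_{\diamond,1,1}^2/r^2$, which is the claim.

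The only genuinely new point, and the step to be careful with, is the $\mathcal{E}_s$ term. There the natural quantity is $\int\Vert\mathcal{L}\Vert_\diamond$, and the argument relies on the one-directional inequality $\int\Vert\mathcal{L}\Vert_\diamond\leq\Vert\mathcal{L}\Vert_{\diamond,1,1}$ per interval. Two further checks are needed: that the $\mathcal{E}_s$ are channels for $s\in[0,1]$, and that the nonvanishing assumption on each $\mathcal{M}_\ell$ keeps $p^{(i)}(\ell,\tau)>0$, so the division is legitimate.
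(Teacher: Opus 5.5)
Your proposal is correct and follows the paper's proof essentially step by step. The shared steps are: telescoping into one-step errors; introducing the scaled sampler $\mathcal{A}^{(\mathrm{LC})}_s$ and matching first $s$-derivatives at $s=0$; applying the fundamental theorem of calculus twice; bounding the exact propagator's second derivative by $\bigl(\Vert\mathcal{L}\Vert^{(i)}_{\diamond,1}\bigr)^2\leq\bigl(\Vert\mathcal{L}\Vert^{(i)}_{\diamond,1,1}\bigr)^2$ and the sampler's by $\bigl(\Vert\mathcal{L}\Vert^{(i)}_{\diamond,1,1}\bigr)^2$ via the definition of $p^{(i)}(\ell,\tau)$; and summing with the equipartition condition, while the side facts you flag ($\mathcal{E}_s$ being a channel for $s\in[0,1]$ and $p^{(i)}(\ell,\tau)>0$) are exactly those the paper uses.
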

\begin{proof}
The proof follows a similar line of reasoning as that of Theorem~\ref{thm:error-analysis-gen}, but we include it here for completeness.

Using the composition property of the time-ordered propagator, together with a telescoping expansion and the fact that quantum channels have diamond norm one, we obtain
    \begin{align}
        \left \Vert \mathcal{E}(t, 0) - \prod_{i=0}^{r-1} \mathcal{A}^{\operatorname{(LC)}}(t_{i+1}, t_{i}) \right \Vert_{\diamond} \leq \sum_{i=0}^{r-1} \left \Vert \mathcal{E}(t_{i+1}, t_{i}) - \mathcal{A}^{\operatorname{(LC)}}(t_{i+1}, t_{i}) \right \Vert_{\diamond}\label{eq:sum-split-error-lc}.
    \end{align}
For each $i\in \{0, \ldots , r-1\}$ and $s\in[0,1]$, we define a scaled version of the quantum channel $\mathcal{A}^{\operatorname{(LC)}}(t_{i+1}, t_{i})$:
\begin{equation}
    \mathcal{A}^{\operatorname{(LC)}}_s(t_{i+1}, t_{i}) \coloneqq \sum_{\ell = 1}^{J+K}\int_{t_i}^{t_{i+1}} d\tau\, p^{(i)}(\ell,\tau)\ e^{s\frac{\mathcal{M}_{\ell}(\tau)}{p^{(i)}(\ell,\tau)}}.\label{eq:scaled-rand-qc-lc}
\end{equation}
From this, we have  $\mathcal{A}^{\operatorname{(LC)}}_1(t_{i+1}, t_{i}) =  \mathcal{A}^{\operatorname{(LC)}}(t_{i+1}, t_{i})$ and $\mathcal{A}^{\operatorname{(LC)}}_0(t_{i+1}, t_{i}) = \mathcal{I}$. Similarly, using the scaled version of the channel $\mathcal{E}(t, v)$, as defined in~\eqref{eq:scaled-E-channel}, for all $i \in \{0,\ldots,r-1\}$, we have $ \mathcal{E}_1(t_{i+1}, t_{i}) =  \mathcal{E}(t_{i+1}, t_{i})$ and $\mathcal{E}_0(t_{i+1}, t_{i}) = \mathcal{I}$.

Now using the fundamental theorem of calculus and following a similar line of reasoning as that of~\eqref{eq:use-1}--\eqref{eq:first-fun-cal}, we get
\begin{align}
         \left \Vert \mathcal{E}(t_{i+1}, t_{i}) - \mathcal{A}^{\operatorname{(LC)}}(t_{i+1}, t_{i}) \right \Vert_{\diamond} &  =  \left \Vert \int_0^1 ds_1\, \frac{d}{ds_1} \left(\mathcal{E}_{s_1}(t_{i+1}, t_{i}) - \mathcal{A}^{\operatorname{(LC)}}_{s_1}(t_{i+1}, t_{i}) \right) \right \Vert_{\diamond}\label{eq:first-fun-cal-lc}.
\end{align}
Also, note that the derivatives $\frac{d}{ds_1} \mathcal{E}_{s_1}(t_{i+1}, t_{i}) $ and $\frac{d}{ds_1} \mathcal{A}^{\operatorname{(LC)}}_{s_1}(t_{i+1}, t_{i}) $ match at $s_1 = 0$:
\begin{align}
         \left.\frac{d}{ds_1} \mathcal{A}^{\operatorname{(LC)}}_{s_1}(t_{i+1}, t_{i})\right |_{s_1 = 0} & = \left.\frac{d}{ds_1} \sum_{\ell = 1}^{J+K}\int_{t_i}^{t_{i+1}} d\tau\, p^{(i)}(\ell,\tau)\, e^{s_1\frac{\mathcal{M}_{\ell}(\tau)}{p^{(i)}(\ell,\tau)}}\right |_{s_1 = 0}\\
         & = \sum_{\ell = 1}^{J+K} \int_{t_i}^{t_{i+1}} d\tau\, p^{(i)}(\ell,\tau)\ \frac{\mathcal{M}_{\ell}(\tau)}{p^{(i)}(\ell,\tau)}\ \left.e^{s_1\frac{\mathcal{M}_{\ell}(\tau)}{p^{(i)}(\ell,\tau)}}\right |_{s_1 = 0}\\
         & = \int_{t_i}^{t_{i+1}} d\tau\, \sum_{\ell = 1}^{J+K} \mathcal{M}_{\ell}(\tau)\ \left. e^{s_1\frac{\mathcal{M}_{\ell}(\tau)}{p^{(i)}(\ell,\tau)}}\right |_{s_1 = 0}\\
         & = \int_{t_{i}}^{t_{i+1}} d\tau\, \mathcal{L}(\tau)\\
         & = \left.\frac{d}{ds_1} \mathcal{E}_{s_1}(t_{i+1}, t_{i})\right |_{s_1 = 0},
    \end{align}
where the last equality follows from~\eqref{eq:first-order-E-identity}.
Therefore, using the fundamental theorem of calculus again in~\eqref{eq:first-fun-cal-lc}, we get
 \begin{align}
        \left \Vert \mathcal{E}(t_{i+1}, t_{i}) - \mathcal{A}^{\operatorname{(LC)}}(t_{i+1}, t_{i}) \right \Vert_{\diamond}
        & = \left \Vert \int_0^1 ds_1\, \int_{0}^{s_1} ds_2\, \frac{d^2}{ds_2^2} \left(\mathcal{E}_{s_2}(t_{i+1}, t_{i}) - \mathcal{A}^{\operatorname{(LC)}}_{s_2}(t_{i+1}, t_{i}) \right) \right \Vert_{\diamond}\\
        & \leq  \int_0^1 ds_1\,  \int_{0}^{s_1}  ds_2\,\left( \left \Vert \frac{d^2}{ds_2^2} \mathcal{E}_{s_2}(t_{i+1}, t_{i}) \right \Vert_{\diamond}  + \left \Vert \frac{d^2}{ds_2^2} \mathcal{A}^{\operatorname{(LC)}}_{s_2}(t_{i+1}, t_{i}) \right \Vert_{\diamond}\right).\label{eq:second-order-lc}
    \end{align}

Next, we bound these two terms sequentially from above. We already have a bound on the first term from the proof of Theorem~\ref{thm:error-analysis-gen} (see~\eqref{eq:bound-first-term-sec}):
\begin{equation}
    \left \Vert \frac{d^2}{ds_2^2} \mathcal{E}_{s_2}(t_{i+1}, t_{i}) \right \Vert_{\diamond} \leq \left(\left \Vert \mathcal{L} \right \Vert_{\diamond, 1}^{(i)}\right)^2 \leq \left(\left \Vert \mathcal{L} \right \Vert_{\diamond, 1, 1}^{(i)}\right)^2 \label{eq:first-term-bound-lc}.
\end{equation}
For the second term, we follow the same argument as in~\eqref{eq:bound-sec-term-sec}, now applied term by term to the linear-combination sampler. 
Since each $\mathcal{M}_{\ell}(\tau)$ is either a Hamiltonian generator or a Lindbladian generator, the map $e^{s_2\mathcal{M}_{\ell}(\tau)/p^{(i)}(\ell,\tau)}$ is a quantum channel for $s_2\in[0,1]$, and hence has diamond norm one. 
Thus,
\begin{align}
    \left \Vert \frac{d^2}{ds_2^2} \mathcal{A}^{\operatorname{(LC)}}_{s_2}(t_{i+1}, t_{i}) \right \Vert_{\diamond} & =  \left \Vert \frac{d^2}{ds_2^2} \left (\sum_{\ell = 1}^{J+K} \int_{t_i}^{t_{i+1}} d\tau\, p^{(i)}(\ell,\tau)\ e^{s_2\frac{\mathcal{M}_{\ell}(\tau)}{p^{(i)}(\ell,\tau)}} \right) \right \Vert_{\diamond}\\
    & \leq \sum_{\ell = 1}^{J+K} \int_{t_i}^{t_{i+1}} d\tau\, p^{(i)}(\ell,\tau)\, \left \Vert \frac{d^2}{ds_2^2} e^{s_2\frac{\mathcal{M}_{\ell}(\tau)}{p^{(i)}(\ell,\tau)}} \right \Vert_{\diamond}\\
    & = \sum_{\ell = 1}^{J+K} \int_{t_i}^{t_{i+1}} d\tau\, p^{(i)}(\ell,\tau)\, \left \Vert  \frac{\mathcal{M}_{\ell}(\tau)}{p^{(i)}(\ell,\tau)} \circ \frac{\mathcal{M}_{\ell}(\tau)}{p^{(i)}(\ell,\tau)} \circ e^{s_2\frac{\mathcal{M}_{\ell}(\tau)}{p^{(i)}(\ell,\tau)}} \right \Vert_{\diamond}\\
    & \leq \sum_{\ell = 1}^{J+K} \int_{t_i}^{t_{i+1}} d\tau\, \frac{\left \Vert \mathcal{M}_{\ell}(\tau) \right \Vert_{\diamond}^2}{p^{(i)}(\ell,\tau)} \left \Vert e^{s_2\frac{\mathcal{M}_{\ell}(\tau)}{p^{(i)}(\ell,\tau)}} \right \Vert_{\diamond}\\
    & = \sum_{\ell = 1}^{J+K} \int_{t_i}^{t_{i+1}} d\tau\,  \frac{\left \Vert \mathcal{M}_{\ell}(\tau) \right \Vert_{\diamond}^2}{p^{(i)}(\ell,\tau)} \\
    & = \left(\left \Vert \mathcal{L} \right \Vert_{\diamond, 1, 1}^{(i)}\right)^2 \label{eq:sec-term-bound-lc}.
\end{align}

Substituting the bounds~\eqref{eq:first-term-bound-lc} and~\eqref{eq:sec-term-bound-lc} into~\eqref{eq:second-order-lc}, we obtain
\begin{equation}
    \left \Vert \mathcal{E}(t_{i+1}, t_{i}) - \mathcal{A}^{\operatorname{(LC)}}(t_{i+1}, t_{i}) \right \Vert_{\diamond}\leq  \int_0^1 ds_1\,  \int_{0}^{s_1}  ds_2\, 2\left(\left \Vert \mathcal{L} \right \Vert_{\diamond, 1, 1}^{(i)}\right)^2=\left(\left \Vert \mathcal{L} \right \Vert_{\diamond, 1, 1}^{(i)}\right)^2.
\end{equation}
Applying this local bound to~\eqref{eq:sum-split-error-lc} gives
\begin{align}
    \left \Vert \mathcal{E}(t, 0) - \prod_{i=0}^{r-1} \mathcal{A}^{\operatorname{(LC)}}(t_{i+1}, t_{i}) \right \Vert_{\diamond} \leq \sum_{i=0}^{r-1} \left(\left \Vert \mathcal{L} \right \Vert_{\diamond, 1, 1}^{(i)}\right)^2 =   \frac{\left \Vert \mathcal{L} \right \Vert_{\diamond, 1, 1}^{2}}{r},
\end{align}
where the equality follows from the equipartition condition~\eqref{eq:partition-LC}.
\end{proof}

Recall that we aim to find the gate complexity, that is, the number of gates of the form $e^{c(\tau)\mathcal{M}_{\ell}(\tau)}$. The following result regarding the gate complexity immediately follows from the above theorem:
\begin{corollary}
    Let $\mathcal{L}(\tau) = \sum_{\ell=1}^{J+K} \mathcal{M}_{\ell}(\tau)$ be a time-dependent Lindbladian as defined in~\eqref{eq:def-lind-lc}, for some $J, K \in \mathbb{N}$ and for all $\tau \in [0, t]$. Each summand $\mathcal{M}_{\ell}(\tau)$ is either a Hamiltonian generator or a Lindbladian generator, and we assume that it is bounded and piecewise continuous on $[0,t]$ and nonzero everywhere. Then it suffices to use
    \begin{equation}
        \left\lceil
        \frac{\left \Vert \mathcal{L} \right \Vert_{\diamond,1,1}^{2}}
        {\varepsilon}
        \right\rceil
    \end{equation}
    gates of the form $e^{c(\tau)\mathcal{M}_{\ell}(\tau)}$ to simulate $\mathcal{E}(t,0)$ with diamond norm error at most $\varepsilon$, provided that one can efficiently sample from the joint distributions defined in~\eqref{eq:prob-den-lc-i}.
\end{corollary}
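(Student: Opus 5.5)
The corollary follows directly from Theorem~\ref{thm:error-analysis-lc}: pick the number of time slices $r$ so that the theorem's error bound falls below $\varepsilon$, then count the local evolutions the sampler uses.

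First, set
\begin{equation}
    r \coloneqq \left\lceil \frac{\left\Vert \mathcal{L}\right\Vert_{\diamond,1,1}^2}{\varepsilon}\right\rceil ,
\end{equation}
which is a positive integer. Here $\left\Vert \mathcal{L}\right\Vert_{\diamond,1,1}$ is taken over the full interval $[0,t]$, and it is strictly positive because each $\mathcal{M}_\ell(\tau)$ is nonzero. Fix a partition $0=t_0<\cdots<t_r=t$ satisfying the equipartition condition~\eqref{eq:partition-LC}. Such a partition exists by the intermediate value theorem applied to the continuous, nondecreasing cumulative function $F$, as noted before the theorem. Since $r\geq \left\Vert \mathcal{L}\right\Vert_{\diamond,1,1}^2/\varepsilon$, Theorem~\ref{thm:error-analysis-lc} gives
\begin{equation}
\left\Vert \mathcal{E}(t,0)-\prod_{i=0}^{r-1}\mathcal{A}^{(\mathrm{LC})}(t_{i+1},t_i)\right\Vert_\diamond \leq \frac{\left\Vert \mathcal{L}\right\Vert_{\diamond,1,1}^2}{r}\leq \varepsilon .
\end{equation}

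Second, I count gates. Each factor $\mathcal{A}^{(\mathrm{LC})}(t_{i+1},t_i)$ is a convex combination, under the joint density $p^{(i)}(\ell,\tau)$, of the channels $e^{\mathcal{M}_\ell(\tau)/p^{(i)}(\ell,\tau)}$. Each of these is a single gate of the form $e^{c(\tau)\mathcal{M}_\ell(\tau)}$ with $c(\tau)=1/p^{(i)}(\ell,\tau)$. It is a valid channel because $c(\tau)>0$, since $p^{(i)}(\ell,\tau)>0$ wherever $\mathcal{M}_\ell(\tau)\neq 0$.

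A run of the randomized procedure therefore works as follows. For each $i$, draw one pair $(\ell,\tau)\sim p^{(i)}$, which is possible by the sampling assumption, and apply the corresponding gate. This uses exactly $r$ gates per run. Averaging over the independent draws reproduces the product channel, so the output state, viewed as the ensemble average, is exactly $\prod_i \mathcal{A}^{(\mathrm{LC})}(t_{i+1},t_i)$ applied to the input. The error bound above then applies to this averaged channel.

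There is no real obstacle here; the only subtle point is interpretive. The diamond-norm guarantee holds for the averaged channel, that is, the mixture over the random choices, and not for each individual sampled circuit. I will state this explicitly, as the paper did for qDRIFT. The gate count $\lceil \left\Vert \mathcal{L}\right\Vert_{\diamond,1,1}^2/\varepsilon\rceil$ is the same for every realization.
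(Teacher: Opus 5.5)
Your proposal is correct and is the same argument the paper intends when it says the corollary ``immediately follows'': choose $r=\lceil \Vert\mathcal{L}\Vert_{\diamond,1,1}^2/\varepsilon\rceil$, apply Theorem~\ref{thm:error-analysis-lc} to the averaged sampler, and note that each of the $r$ slices uses exactly one gate $e^{c(\tau)\mathcal{M}_\ell(\tau)}$. Your explicit remark that the diamond-norm guarantee applies to the averaged channel, not to each sampled circuit, is a worthwhile clarification that the paper leaves implicit.
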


\subsubsection{Special Case}
We now focus on a physically motivated special case of the general linear-combination input model considered above. In this scenario, both the time-dependent Hamiltonian and Lindblad operators factorize into time-independent components modulated by scalar coefficient functions:
\begin{align}
    H(\tau) & = \sum_{j=1}^{J} \alpha_j(\tau) H_j\\
    L_k(\tau) & = \beta_k(\tau) L_k,
\end{align}
where $H_j$ and $L_k$ are time-independent operators and the coefficients are smooth functions $\alpha_j(\tau)$ and $\beta_k(\tau)$ that completely capture the time dependence. This structure emerges naturally in many physical systems where the types of interactions stay the same, but their strengths vary with time. We measure the cost of an algorithm in terms of the number of gates of the form $e^{c(\tau)\mathcal{M}}$, where $\mathcal{M}$ is now a time-independent Hamiltonian or Lindbladian generator.

To simulate this special model, we essentially employ the same sampler used for the general linear-combination input model, but now the sampling distribution can be written more explicitly in terms of the scalar coefficient functions.

First, define the time-independent superoperators
\begin{align}
    \mathcal{H}_j(\cdot) &\coloneqq -i[H_j,(\cdot)],\\
    \mathcal{L}_k(\cdot) &\coloneqq L_k(\cdot)L_k^\dagger -\frac{1}{2}\{L_k^\dagger L_k,(\cdot)\}.
\end{align}
Then the corresponding time-dependent Lindbladian can be written as
\begin{equation}
    \mathcal{L}(\tau) = \sum_{j=1}^{J} \alpha_j(\tau)\mathcal{H}_j +\sum_{k=1}^{K}|\beta_k(\tau)|^2\mathcal{L}_k,
\end{equation}
where $H_j$ and $L_k$ are time-independent operators, $\alpha_j(\tau)$ are time-dependent real-valued functions, and $\beta_k(\tau)$ are time-dependent scalar functions.
Equivalently, we can write
\begin{equation}
    \mathcal{L}(\tau) = \sum_{\ell=1}^{J+K} \gamma_\ell(\tau)\mathcal{M}_\ell,
\end{equation}
where
\begin{equation}
    \gamma_{\ell}(\tau) = 
    \begin{cases}
          \alpha_{\ell}(\tau), & 1\leq \ell\leq J,\\
          \left|\beta_{\ell-J}(\tau)\right|^2, & J+1\leq \ell \leq J+K,
    \end{cases}
\end{equation}
and
\begin{equation}
    \mathcal{M}_{\ell} = 
    \begin{cases}
          \mathcal{H}_{\ell}, & 1\leq \ell\leq J,\\
          \mathcal{L}_{\ell-J}, & J+1\leq \ell \leq J+K.
    \end{cases}
\end{equation}
For an interval $[v,t]$, define
\begin{equation}
    \left\Vert \mathcal{L} \right\Vert_{\diamond,1,1}^{\operatorname{(SC)}} \coloneqq \sum_{\ell=1}^{J+K} \int_v^t d\tau\, |\gamma_\ell(\tau)| \left\Vert \mathcal{M}_\ell \right\Vert_{\diamond}.
\end{equation}
The one-step averaged sampler for this special case is
\begin{equation}
    \mathcal{A}^{\operatorname{(SC)}}(t,v) \coloneqq \sum_{\ell=1}^{J+K} \int_v^t d\tau\, p(\ell,\tau) e^{\frac{\gamma_\ell(\tau)\mathcal{M}_\ell}{p(\ell,\tau)}},
\end{equation}
where
\begin{equation}
    p(\ell,\tau) \coloneqq \frac{|\gamma_\ell(\tau)|\left\Vert \mathcal{M}_\ell\right\Vert_{\diamond}}{\left\Vert \mathcal{L}\right\Vert_{\diamond,1,1}^{\operatorname{(SC)}}}.\label{eq:prob-den-SC}
\end{equation}

The complete sampler is formed by composing such channels $r$ times:
\begin{equation}
    \prod_{i=0}^{r-1} \mathcal{A}^{\operatorname{(SC)}}(t_{i+1}, t_{i}) = \prod_{i=0}^{r-1} \left[\sum_{\ell=1}^{J+K}\int_{t_{i}}^{t_{i+1}} d\tau\, p^{(i)}(\ell,\tau)\, e^{\frac{\gamma_{\ell}(\tau)\mathcal{M}_{\ell}}{p^{(i)}(\ell,\tau)}}\right]\label{eq:comp-protocol-SC},
\end{equation}
where each $p^{(i)}(\ell,\tau)$ is a joint probability distribution over $\ell\in\{1,\ldots,J+K\}$ and $\tau\in[t_i,t_{i+1}]$, defined analogously to~\eqref{eq:prob-den-SC} by
\begin{equation}
      p^{(i)}(\ell,\tau)  \coloneqq  
        \frac{\left|\gamma_{\ell}(\tau)\right|\left \Vert \mathcal{M}_{\ell} \right \Vert_{\diamond}}{\left \Vert \mathcal{L} \right \Vert_{\diamond, 1, 1}^{\operatorname{(SC)},(i)}},\label{eq:prob-den-SC-i}
\end{equation}
with 
\begin{equation}
    \left \Vert \mathcal{L} \right \Vert_{\diamond, 1, 1}^{\operatorname{(SC)},(i)} \coloneqq  \sum_{\ell=1}^{J+K} \int_{t_i}^{t_{i+1}} d\tau\,\left|\gamma_{\ell}(\tau)\right| \left \Vert \mathcal{M}_{\ell} \right \Vert_{\diamond}.
\end{equation}
We assume that, for each $i$, we can efficiently sample a pair $(\ell,\tau)$ from the joint distribution $p^{(i)}(\ell,\tau)$.
We choose the time points $0=t_0<t_1<\cdots<t_r=t$ such that, for all $i$,
\begin{equation}
    \left \Vert \mathcal{L} \right \Vert_{\diamond, 1, 1}^{\operatorname{(SC)},(i)} = \frac{1}{r}\left \Vert \mathcal{L} \right \Vert_{\diamond, 1, 1}^{\operatorname{(SC)}}.\label{eq:partition-SC}
\end{equation}
Such a partition exists by applying the intermediate value theorem to the cumulative function $F(x) = \sum_{\ell=1}^{J+K} \int_0^x d\tau\, \left|\gamma_{\ell}(\tau)\right|\left \Vert \mathcal{M}_{\ell} \right \Vert_{\diamond}.$

\begin{corollary}
    Consider the special case above, where $\mathcal{L}(\tau) = \sum_{\ell=1}^{J+K} \gamma_\ell(\tau)\mathcal{M}_\ell$.
    Suppose that the time-dependent summands $\gamma_{\ell}(\tau)\mathcal{M}_{\ell}$ satisfy the assumptions of Theorem~\ref{thm:error-analysis-lc}.
    Then, for every positive integer $r$, the sampler $\prod_{i=0}^{r-1}\mathcal{A}^{\operatorname{(SC)}}(t_{i+1},t_i)$ satisfies
    \begin{equation}
        \left\Vert \mathcal{E}(t,0) - \prod_{i=0}^{r-1}\mathcal{A}^{\operatorname{(SC)}}(t_{i+1},t_i)\right\Vert_{\diamond}
        \leq \frac{\left(\left \Vert \mathcal{L} \right \Vert_{\diamond, 1, 1}^{\operatorname{(SC)}}\right)^2}{r}.
    \end{equation}
    Consequently, it suffices to use
    \begin{equation}
        \left\lceil\frac{\left(\left \Vert \mathcal{L} \right \Vert_{\diamond, 1, 1}^{\operatorname{(SC)}}\right)^2}{\varepsilon}\right\rceil
    \end{equation}
    gates of the form $e^{c(\tau)\mathcal{M}_\ell}$ to simulate $\mathcal{E}(t,0)$ with diamond norm error at most $\varepsilon$, provided that one can efficiently sample from the joint distributions defined in~\eqref{eq:prob-den-SC-i}.
\end{corollary}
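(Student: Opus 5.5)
The plan is to deduce this corollary directly from Theorem~\ref{thm:error-analysis-lc} by specializing the linear-combination input model. I will set $\mathcal{M}_{\ell}(\tau) \coloneqq \gamma_{\ell}(\tau)\mathcal{M}_{\ell}$, where on the right $\mathcal{M}_{\ell}$ is the time-independent generator. By hypothesis these summands satisfy the assumptions of Theorem~\ref{thm:error-analysis-lc}: they are bounded, piecewise continuous and nonzero. Each summand is a genuine Hamiltonian or dissipative generator. For $\ell\le J$, $\alpha_\ell(\tau)\mathcal{H}_\ell = -i[\alpha_\ell(\tau)H_\ell,\cdot]$ with $\alpha_\ell$ real. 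For $\ell>J$, $|\beta_{\ell-J}(\tau)|^2\mathcal{L}_{\ell-J}$ is the single-jump dissipator with jump operator $\beta_{\ell-J}(\tau)L_{\ell-J}$. Hence $e^{c\,\gamma_\ell(\tau)\mathcal{M}_\ell}$ is a quantum channel whenever $c\,\gamma_\ell(\tau)$ has the correct sign. This is automatic for the dissipative part, and for the Hamiltonian part any real multiple gives a unitary channel.

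Next I will check that the objects coincide. The homogeneity of the diamond norm gives $\Vert\gamma_\ell(\tau)\mathcal{M}_\ell\Vert_\diamond = |\gamma_\ell(\tau)|\,\Vert\mathcal{M}_\ell\Vert_\diamond$. So the density $p^{(i)}(\ell,\tau)$ in \eqref{eq:prob-den-lc-i} becomes exactly \eqref{eq:prob-den-SC-i}. The norms $\Vert\mathcal{L}\Vert_{\diamond,1,1}^{(i)}$ and $\Vert\mathcal{L}\Vert_{\diamond,1,1}$ become their (SC) counterparts, and the partition \eqref{eq:partition-SC} coincides with \eqref{eq:partition-LC}. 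Consequently $\mathcal{A}^{(\mathrm{SC})}(t_{i+1},t_i)=\mathcal{A}^{(\mathrm{LC})}(t_{i+1},t_i)$ for each $i$. The diamond-norm bound $(\Vert\mathcal{L}\Vert^{(\mathrm{SC})}_{\diamond,1,1})^2/r$ then follows verbatim from Theorem~\ref{thm:error-analysis-lc}.

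The one point that needs care, and the main (mild) obstacle, is the sign issue for the Hamiltonian coefficients. The sampled channel is $e^{\gamma_\ell(\tau)\mathcal{M}_\ell/p^{(i)}(\ell,\tau)}$, and $\gamma_\ell$ may be negative when $\ell\le J$. I would note that $\gamma_\ell(\tau)/p^{(i)}(\ell,\tau) = \mathrm{sgn}(\gamma_\ell(\tau))\,\Vert\mathcal{L}\Vert^{(\mathrm{SC}),(i)}_{\diamond,1,1}/\Vert\mathcal{M}_\ell\Vert_\diamond$ is a real number. This gives a unitary evolution $e^{-ic H_\ell}$ with real $c$, so the channel property used in \eqref{eq:sec-term-bound-lc} still holds. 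For $\ell>J$ the coefficient is nonnegative, so the exponent has a nonnegative time and yields a CPTP semigroup element. The same remark covers the scaled maps with $s\in[0,1]$ that appear in the proof.

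Finally, for the gate count, each of the $r$ steps uses exactly one gate of the form $e^{c(\tau)\mathcal{M}_\ell}$. Choosing $r=\lceil(\Vert\mathcal{L}\Vert^{(\mathrm{SC})}_{\diamond,1,1})^2/\varepsilon\rceil$ makes the error bound at most $\varepsilon$, which gives the stated count, provided one can efficiently sample from the joint distributions \eqref{eq:prob-den-SC-i}.
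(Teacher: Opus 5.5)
Your proposal is correct and follows the paper's route. The paper gives no separate proof and treats this as a direct specialization. You identify $\mathcal{M}_\ell(\tau)=\gamma_\ell(\tau)\mathcal{M}_\ell$ and use $\Vert\gamma_\ell(\tau)\mathcal{M}_\ell\Vert_\diamond=|\gamma_\ell(\tau)|\,\Vert\mathcal{M}_\ell\Vert_\diamond$, so the densities, norms, partition and sampler coincide with the linear-combination ones. You then invoke Theorem~\ref{thm:error-analysis-lc} with $r=\lceil(\Vert\mathcal{L}\Vert^{\operatorname{(SC)}}_{\diamond,1,1})^2/\varepsilon\rceil$. Your explicit check that $\gamma_\ell(\tau)/p^{(i)}(\ell,\tau)$ is real, and nonnegative for dissipative terms, is also worth keeping. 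It confirms that every sampled and $s$-scaled map is still a channel, a point the paper leaves implicit.
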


\subsection{Second-Order Sampling Algorithm}\label{sec:sec-order-sampling-algorithm}

From Section~\ref{sec:first-order-sampling-algorithm}, recall the first-order sampling algorithm given by~\eqref{eq:comp-protocol}:
\begin{equation}
    \prod_{i=0}^{r-1} \mathcal{A}(t_{i+1}, t_{i}) =  \prod_{i=0}^{r-1} \left[\int_{t_i}^{t_{i+1}} d\tau\ p^{(i)}(\tau)\, e^{\frac{\mathcal{L}(\tau)}{p^{(i)}(\tau)}}\right].
\end{equation}
Also, recall that we choose the time points $0=t_0  < t_1 < \cdots < t_{r-1} < t_r = t $ such that the equality in~\eqref{eq:partition} holds for all $i$.
Next, we define the following map:
\begin{equation}
    \mathcal{R}(t_{i+1}, t_{i}) \coloneqq  \frac{1}{2}\left(\left.\frac{d^2}{ds^2} \mathcal{E}_{s}(t_{i+1}, t_{i})\right |_{s=0}  - \left.\frac{d^2}{ds^2} \mathcal{A}_{s}(t_{i+1}, t_{i})\right |_{s=0}\right).\label{eq:second-order-r-map}
\end{equation}
Using this correction, we define the map corresponding to our second-order sampling algorithm as
\begin{equation}
    \mathcal{A}^{(2)}(t,0)
    \coloneqq
    \prod_{i=0}^{r-1}\mathcal{A}(t_{i+1},t_i)
    + \sum_{k=0}^{r-1}
    \left(\prod_{i=k+1}^{r-1}\mathcal{A}(t_{i+1},t_i)\right)
    \mathcal{R}(t_{k+1},t_k)
    \left(\prod_{i=0}^{k-1}\mathcal{A}(t_{i+1},t_i)\right).
    \label{eq:sec-order-sampling-map}
\end{equation}
Here, an empty product is understood as the identity map, and the products are ordered so that the map corresponding to the earliest interval acts first.

Next, in the following theorem, we analyze how well this map approximates the target channel $\mathcal{E}(t,0)$.
\begin{theorem}[Second-order sampling error]\label{thm:sec-order-sampler}
    Let $\mathcal{L}(\tau)$ be a time-dependent Lindbladian on $\tau\in[0,t]$ satisfying the assumptions of Theorem~\ref{thm:error-analysis-gen}. Let $0=t_0<t_1<\cdots<t_r=t$ be a partition satisfying~\eqref{eq:partition}, and let $\mathcal{A}^{(2)}(t,0)$ be the second-order sampling map defined in~\eqref{eq:sec-order-sampling-map}. Then,
    \begin{equation}
        \frac{1}{2}\left\Vert\mathcal{E}(t,0)-\mathcal{A}^{(2)}(t,0)\right\Vert_{\diamond}
        \leq \frac{\left\Vert\mathcal{L}\right\Vert_{\diamond,1}^3}{6r^2} + \frac{(r-1)\left\Vert\mathcal{L}\right\Vert_{\diamond,1}^4}{4r^3}.
    \end{equation}
    Consequently,
    \begin{equation}
        \frac{1}{2}\left\Vert\mathcal{E}(t,0)-\mathcal{A}^{(2)}(t,0)\right\Vert_{\diamond}
        = O\!\left(\frac{\Vert\mathcal{L}\Vert_{\diamond,1}^3+\Vert\mathcal{L}\Vert_{\diamond,1}^4}{r^2}\right).
    \end{equation}
\end{theorem}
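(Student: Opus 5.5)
The plan mirrors the proof of Theorem~\ref{thm:qswift-map-error}, with the Taylor expansion in $\tau$ replaced by the interpolation in the scaling parameter $s$ used in Theorem~\ref{thm:error-analysis-gen}. Write $\mathcal{E}_i\coloneqq\mathcal{E}(t_{i+1},t_i)$, $\mathcal{A}_i\coloneqq\mathcal{A}(t_{i+1},t_i)$, $\mathcal{R}_i\coloneqq\mathcal{R}(t_{i+1},t_i)$ and $\Lambda\coloneqq\Vert\mathcal{L}\Vert_{\diamond,1}/r$, which by~\eqref{eq:partition} equals $\Vert\mathcal{L}\Vert^{(i)}_{\diamond,1}$ for every $i$. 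Using $\mathcal{E}(t,0)=\mathcal{E}_{r-1}\cdots\mathcal{E}_0$, I will telescope exactly as in~\eqref{eq:qswift-global-error-decomposition}:
\begin{equation*}
\mathcal{E}(t,0)-\mathcal{A}^{(2)}(t,0)=\sum_{k=0}^{r-1}\Big(\prod_{i>k}\mathcal{E}_i\Big)(\mathcal{E}_k-\mathcal{A}_k-\mathcal{R}_k)\Big(\prod_{i<k}\mathcal{A}_i\Big)+\sum_{k=0}^{r-1}\Big[\prod_{i>k}\mathcal{E}_i-\prod_{i>k}\mathcal{A}_i\Big]\mathcal{R}_k\Big(\prod_{i<k}\mathcal{A}_i\Big).
\end{equation*}
Since all $\mathcal{E}_i,\mathcal{A}_i$ are channels, the first sum is bounded by $\sum_k\Vert\mathcal{E}_k-\mathcal{A}_k-\mathcal{R}_k\Vert_\diamond$. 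For the second sum I will use the telescoping bound $\Vert\prod_{i>k}\mathcal{E}_i-\prod_{i>k}\mathcal{A}_i\Vert_\diamond\le(r-1-k)\Lambda^2$, which follows from the local bound proved in Theorem~\ref{thm:error-analysis-gen}, together with a bound on $\Vert\mathcal{R}_k\Vert_\diamond$.

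\textbf{Bounding $\mathcal{R}_k$.} By definition, $\mathcal{R}_k=\tfrac12\big(\partial_s^2\mathcal{E}_s|_0-\partial_s^2\mathcal{A}_s|_0\big)$. The estimates~\eqref{eq:bound-first-term-sec} and~\eqref{eq:bound-sec-term-sec} give $\Lambda^2$ for each second derivative, valid in particular at $s=0$. This yields $\Vert\mathcal{R}_k\Vert_\diamond\le\Lambda^2$. The second sum then contributes $\sum_k(r-1-k)\Lambda^4=\tfrac{r(r-1)}{2}\Lambda^4$, which after the factor $\tfrac12$ becomes $\tfrac{(r-1)\Vert\mathcal{L}\Vert_{\diamond,1}^4}{4r^3}$. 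This matches the second term of the claimed bound.

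\textbf{The main step: the third-order residual.} With $F(s)\coloneqq\mathcal{E}_s(t_{i+1},t_i)-\mathcal{A}_s(t_{i+1},t_i)$, we have $F(0)=0$, $F'(0)=0$ (shown in the proof of Theorem~\ref{thm:error-analysis-gen}) and $F''(0)=2\mathcal{R}_i$. Taylor's formula with integral remainder then gives
\begin{equation*}
\mathcal{E}_i-\mathcal{A}_i-\mathcal{R}_i=\int_0^1\frac{(1-s)^2}{2}F'''(s)\,ds .
\end{equation*}
It therefore suffices to show $\Vert\partial_s^3\mathcal{E}_s\Vert_\diamond\le\Lambda^3$ and $\Vert\partial_s^3\mathcal{A}_s\Vert_\diamond\le\Lambda^3$ for $s\in[0,1]$. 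The total is then at most $2\Lambda^3\cdot\tfrac16=\Lambda^3/3$ per slice. Summing over $r$ slices and halving gives $\tfrac{r\Lambda^3}{6}=\tfrac{\Vert\mathcal{L}\Vert_{\diamond,1}^3}{6r^2}$, the first term of the claim.

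For $\mathcal{A}_s$ the bound is immediate. Differentiating three times gives $\int p^{(i)}(\mathcal{L}/p^{(i)})^3e^{s\mathcal{L}/p^{(i)}}$, whose norm is at most $\int\Vert\mathcal{L}(\tau)\Vert_\diamond^3/(p^{(i)})^2\,d\tau=\Lambda^3$.

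For $\mathcal{E}_s$ I expect the main obstacle. I will differentiate the formula of Lemma~\ref{lem:time-dep-deriv-for-err-bnd} twice more, applying the lemma recursively to each factor $\mathcal{E}_s(\cdot,\cdot)$. The result is a sum of time-ordered triple integrals of the form
\begin{equation*}
\mathcal{E}_s\circ\mathcal{L}(\tau_3)\circ\mathcal{E}_s\circ\mathcal{L}(\tau_2)\circ\mathcal{E}_s\circ\mathcal{L}(\tau_1)\circ\mathcal{E}_s .
\end{equation*}
The combinatorics must be checked: there should be $3!$ such terms, one for each ordering of the three times. The integrals are over ordered simplices, and each is bounded by $\Lambda^3/3!$ because the $\mathcal{E}_s$ factors are channels, so the total is $\Lambda^3$. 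If this bookkeeping proves messy, I would instead argue via the Dyson series, $\mathcal{E}_s=\sum_n s^n\int_{\text{ordered}}\mathcal{L}\cdots\mathcal{L}$. Combined with the constraint that the resulting estimate be uniform in $s\in[0,1]$, this should reproduce the same constant.
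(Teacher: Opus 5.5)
Your proposal is correct and follows essentially the paper's route: the same telescoping decomposition, the bounds $\Vert\mathcal{R}_k\Vert_\diamond\le\Lambda^2$ and $\Vert\mathcal{E}_k-\mathcal{A}_k\Vert_\diamond\le\Lambda^2$, and the third-order Taylor remainder with both third derivatives bounded by $\Lambda^3$. Your $3!$-term simplex expansion of $\partial_s^3\mathcal{E}_s$ is equivalent to the paper's step of differentiating Lemma~\ref{lem:time-dep-deriv-for-err-bnd} twice and inserting the first- and second-derivative estimates on $[\tau,t_{i+1}]$ and $[t_i,\tau]$; drop the Dyson-series fallback, though, since bounding that series term by term loses the norm-one property of the intermediate propagators and gives $\Lambda^3e^{s\Lambda}$ rather than $\Lambda^3$, so it would not reproduce the stated constant.
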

\begin{proof}
We first bound the local error on each time interval after including the second-order correction. 
Using $\mathcal{E}_0(t_{i+1},t_i)=\mathcal{A}_0(t_{i+1},t_i)=\mathcal{I}$, the first-order matching identities~\eqref{eq:first-order-E-identity} and~\eqref{eq:first-order-A-identity}, and the definition of $\mathcal{R}(t_{i+1},t_i)$ in~\eqref{eq:second-order-r-map}, Taylor's formula with integral remainder gives
\begin{equation}
    \mathcal{E}(t_{i+1},t_i)-\mathcal{A}(t_{i+1},t_i)-\mathcal{R}(t_{i+1},t_i) =
    \int_0^1 ds\,\frac{(1-s)^2}{2}\frac{d^3}{ds^3}\left(\mathcal{E}_s(t_{i+1},t_i)-\mathcal{A}_s(t_{i+1},t_i)\right).
    \label{eq:td-second-order-local-remainder}
\end{equation}

To bound the third-order derivative of the exact propagator, we differentiate the propagator identity in Lemma~\ref{lem:time-dep-deriv-for-err-bnd} twice more. This gives
\begin{align}
    \frac{d^3}{ds^3}\mathcal{E}_s(t_{i+1},t_i)
    &=
    \int_{t_i}^{t_{i+1}}d\tau\,\Bigg[\frac{d^2}{ds^2}\mathcal{E}_s(t_{i+1},\tau)\circ\mathcal{L}(\tau)\circ\mathcal{E}_s(\tau,t_i)+2\frac{d}{ds}\mathcal{E}_s(t_{i+1},\tau)\circ\mathcal{L}(\tau)\circ\frac{d}{ds}\mathcal{E}_s(\tau,t_i)\notag\\
    &\qquad
    +\mathcal{E}_s(t_{i+1},\tau)\circ\mathcal{L}(\tau)\circ\frac{d^2}{ds^2}\mathcal{E}_s(\tau,t_i)\Bigg].
\end{align}
Applying the first- and second-derivative norm estimates in~\eqref{eq:bound-first-term}--\eqref{eq:bound-first-term-sec}, together with submultiplicativity of the diamond norm, we obtain
\begin{align}
    \left\Vert\frac{d^3}{ds^3}\mathcal{E}_s(t_{i+1},t_i)\right\Vert_{\diamond}
    &\leq
    \int_{t_i}^{t_{i+1}}d\tau\,\left\Vert\mathcal{L}(\tau)\right\Vert_{\diamond}\Bigg[\left(\int_{\tau}^{t_{i+1}}d\tau'\,\left\Vert\mathcal{L}(\tau')\right\Vert_{\diamond}\right)^2\notag\\
    &\qquad
    +2\left(\int_{\tau}^{t_{i+1}}d\tau'\,\left\Vert\mathcal{L}(\tau')\right\Vert_{\diamond}\right)\left(\int_{t_i}^{\tau}d\tau'\,\left\Vert\mathcal{L}(\tau')\right\Vert_{\diamond}\right)\notag\\
    &\qquad
    +\left(\int_{t_i}^{\tau}d\tau'\,\left\Vert\mathcal{L}(\tau')\right\Vert_{\diamond}\right)^2\Bigg]\\
    &=
    \int_{t_i}^{t_{i+1}}d\tau\,\left\Vert\mathcal{L}(\tau)\right\Vert_{\diamond}\left(\int_{t_i}^{t_{i+1}}d\tau'\,\left\Vert\mathcal{L}(\tau')\right\Vert_{\diamond}\right)^2\\
    &=
    \left(\left\Vert\mathcal{L}\right\Vert_{\diamond,1}^{(i)}\right)^3.
    \label{eq:sec-o-proof-first-term-decomp-1}
\end{align}
Similarly, differentiating the randomized channel under the integral sign, as in the derivation of~\eqref{eq:bound-sec-term-sec}, and using submultiplicativity of the diamond norm gives
\begin{align}
    \left\Vert\frac{d^3}{ds^3}\mathcal{A}_s(t_{i+1},t_i)\right\Vert_{\diamond}
    &\leq
    \int_{t_i}^{t_{i+1}}d\tau\,\frac{\left\Vert\mathcal{L}(\tau)\right\Vert_{\diamond}^3}{\left(p^{(i)}(\tau)\right)^2}\\
    &=
    \left(\left\Vert\mathcal{L}\right\Vert_{\diamond,1}^{(i)}\right)^3.
    \label{eq:sec-o-proof-first-term-decomp-2}
\end{align}
Applying the bounds in~\eqref{eq:sec-o-proof-first-term-decomp-1} and~\eqref{eq:sec-o-proof-first-term-decomp-2} to the integral remainder in~\eqref{eq:td-second-order-local-remainder}, and using $\int_0^1(1-s)^2/2\,ds=1/6$, we obtain
\begin{equation}
    \left\Vert\mathcal{E}(t_{i+1},t_i)-\mathcal{A}(t_{i+1},t_i)-\mathcal{R}(t_{i+1},t_i)\right\Vert_{\diamond}
    \leq
    \frac{1}{3}\left(\left\Vert\mathcal{L}\right\Vert_{\diamond,1}^{(i)}\right)^3.
    \label{eq:td-second-order-local-residual}
\end{equation}

We also require a bound on the first-order local error. Since the exact and randomized maps agree through first order in $s$, Taylor's formula gives
\begin{equation}
    \mathcal{E}(t_{i+1},t_i)-\mathcal{A}(t_{i+1},t_i)
    = \int_0^1 ds\,(1-s)\frac{d^2}{ds^2}\left(\mathcal{E}_s(t_{i+1},t_i)-\mathcal{A}_s(t_{i+1},t_i)\right).
\end{equation}
Using the second-derivative bounds in~\eqref{eq:bound-first-term-sec} and~\eqref{eq:bound-sec-term-sec}, together with $\int_0^1(1-s)\,ds=1/2$, we obtain
\begin{equation}
    \left\Vert\mathcal{E}(t_{i+1},t_i)-\mathcal{A}(t_{i+1},t_i)\right\Vert_{\diamond}
    \leq
    \left(\left\Vert\mathcal{L}\right\Vert_{\diamond,1}^{(i)}\right)^2.
    \label{eq:td-first-order-local-bound}
\end{equation}
Moreover, from the definition of $\mathcal{R}(t_{i+1},t_i)$ in~\eqref{eq:second-order-r-map} and the same second-derivative bounds,
\begin{align}
    \left\Vert\mathcal{R}(t_{i+1},t_i)\right\Vert_{\diamond}
    &\leq
    \frac{1}{2}\left(\left\Vert\left.\frac{d^2}{ds^2}\mathcal{E}_s(t_{i+1},t_i)\right|_{s=0}\right\Vert_{\diamond} + \left\Vert\left.\frac{d^2}{ds^2}\mathcal{A}_s(t_{i+1},t_i)\right|_{s=0}\right\Vert_{\diamond}\right)\\
    &\leq
    \left(\left\Vert\mathcal{L}\right\Vert_{\diamond,1}^{(i)}\right)^2.
    \label{eq:R-bound}
\end{align}

We now compare the full exact evolution with the second-order sampling map. 
Using the composition property of the time-ordered propagator and the definition of $\mathcal{A}^{(2)}(t,0)$ in~\eqref{eq:sec-order-sampling-map}, a telescoping expansion of the difference between the exact and first-order products gives
\begin{align}
    \mathcal{E}(t,0)-\mathcal{A}^{(2)}(t,0)
    &=
    \sum_{k=0}^{r-1}
    \left(\prod_{i=k+1}^{r-1}\mathcal{E}(t_{i+1},t_i)\right)
    \left[
    \mathcal{E}(t_{k+1},t_k)
    -\mathcal{A}(t_{k+1},t_k)
    -\mathcal{R}(t_{k+1},t_k)
    \right]
    \left(\prod_{i=0}^{k-1}\mathcal{A}(t_{i+1},t_i)\right)
    \notag\\
    &\quad+
    \sum_{k=0}^{r-1}
    \left[
    \prod_{i=k+1}^{r-1}\mathcal{E}(t_{i+1},t_i)
    -
    \prod_{i=k+1}^{r-1}\mathcal{A}(t_{i+1},t_i)
    \right]
    \mathcal{R}(t_{k+1},t_k)
    \left(\prod_{i=0}^{k-1}\mathcal{A}(t_{i+1},t_i)\right).
    \label{eq:td-second-order-global-decomposition}
\end{align}
The first sum contains the residual local error after subtracting the second-order correction. The second sum accounts for the difference between the exact and first-order sampled evolutions following the inserted correction.

Since both $\mathcal{E}(t_{i+1},t_i)$ and $\mathcal{A}(t_{i+1},t_i)$ are quantum channels, all surrounding factors in the first sum have diamond norm one. 
Therefore, using~\eqref{eq:td-second-order-local-residual}, its contribution to the normalized diamond-norm error is bounded by
\begin{equation}
    \frac{1}{2}
    \sum_{k=0}^{r-1}
    \left\Vert
    \mathcal{E}(t_{k+1},t_k)
    -\mathcal{A}(t_{k+1},t_k)
    -\mathcal{R}(t_{k+1},t_k)
    \right\Vert_{\diamond}
    \leq
    \frac{1}{6}
    \sum_{k=0}^{r-1}
    \left(\left\Vert\mathcal{L}\right\Vert_{\diamond,1}^{(k)}\right)^3.
\end{equation}

For the second sum, we again use a telescoping expansion. Since the exact and randomized maps on each interval are quantum channels,~\eqref{eq:td-first-order-local-bound} gives
\begin{align}
    \left\Vert\prod_{i=k+1}^{r-1}\mathcal{E}(t_{i+1},t_i) - \prod_{i=k+1}^{r-1}\mathcal{A}(t_{i+1},t_i)\right\Vert_{\diamond}
    &\leq
    \sum_{j=k+1}^{r-1} \left\Vert\mathcal{E}(t_{j+1},t_j) - \mathcal{A}(t_{j+1},t_j)\right\Vert_{\diamond}\\
    &\leq
    \sum_{j=k+1}^{r-1}\left(\left\Vert\mathcal{L}\right\Vert_{\diamond,1}^{(j)}\right)^2.
\end{align}
Combining this bound with~\eqref{eq:R-bound} and the decomposition in~\eqref{eq:td-second-order-global-decomposition}, we obtain
\begin{equation}
    \frac{1}{2}\left\Vert\mathcal{E}(t,0)-\mathcal{A}^{(2)}(t,0)\right\Vert_{\diamond}
    \leq
    \frac{1}{6}\sum_{k=0}^{r-1}\left(\left\Vert\mathcal{L}\right\Vert_{\diamond,1}^{(k)}\right)^3 
    + \frac{1}{2}\sum_{k=0}^{r-1}\left[\sum_{j=k+1}^{r-1}\left(\left\Vert\mathcal{L}\right\Vert_{\diamond,1}^{(j)}\right)^2\right]\left(\left\Vert\mathcal{L}\right\Vert_{\diamond,1}^{(k)}\right)^2.
\end{equation}

Finally, by the choice of partition in~\eqref{eq:partition}, $\left\Vert\mathcal{L}\right\Vert_{\diamond,1}^{(i)} =\left\Vert\mathcal{L}\right\Vert_{\diamond,1}/r$ for every $i$. 
Therefore, the first term becomes
\begin{equation}
    \frac{1}{6}\sum_{k=0}^{r-1}\left(\left\Vert\mathcal{L}\right\Vert_{\diamond,1}^{(k)}\right)^3 = \frac{\left\Vert\mathcal{L}\right\Vert_{\diamond,1}^3}{6r^2},
\end{equation}
while the second term contains one contribution for each index pair $(k,j)$ satisfying $0\leq k<j\leq r-1$. There are $\binom{r}{2}=r(r-1)/2$ such pairs, and therefore
\begin{equation}
    \frac{1}{2}\sum_{k=0}^{r-1}\left[\sum_{j=k+1}^{r-1}\left(\left\Vert\mathcal{L}\right\Vert_{\diamond,1}^{(j)}\right)^2\right]\left(\left\Vert\mathcal{L}\right\Vert_{\diamond,1}^{(k)}\right)^2
    = \frac{1}{2}\binom{r}{2}\frac{\left\Vert\mathcal{L}\right\Vert_{\diamond,1}^4}{r^4}
    = \frac{(r-1)\left\Vert\mathcal{L}\right\Vert_{\diamond,1}^4}{4r^3}.
\end{equation}
Thus,
\begin{equation}
    \frac{1}{2}\left\Vert\mathcal{E}(t,0)-\mathcal{A}^{(2)}(t,0)\right\Vert_{\diamond}
    \leq\frac{\left\Vert\mathcal{L}\right\Vert_{\diamond,1}^3}{6r^2} 
    + \frac{(r-1)\left\Vert\mathcal{L}\right\Vert_{\diamond,1}^4}{4r^3}.
\end{equation}
Since $(r-1)/r\leq1$, it follows that
\begin{equation}
    \frac{1}{2}\left\Vert\mathcal{E}(t,0)-\mathcal{A}^{(2)}(t,0)\right\Vert_{\diamond}
    = O\!\left(\frac{\left\Vert\mathcal{L}\right\Vert_{\diamond,1}^3 + \left\Vert\mathcal{L}\right\Vert_{\diamond,1}^4}{r^2}\right).
\end{equation}
This concludes the proof.
\end{proof}

\subsection{Second-Order Sampling Algorithm for Linear Combination Input Model}

Consider the setting described in Section~\ref{sec:first-order-lc}, where the Lindbladian $\mathcal{L}(\tau)$ is expressed as a sum of local terms $\mathcal{M}_{\ell}(\tau)$ as in~\eqref{eq:def-lind-lc}. Each local term $\mathcal{M}_{\ell}(\tau)$ corresponds either to a local Hamiltonian or a local Lindbladian. Furthermore, as mentioned in Section~\ref{sec:first-order-lc}, we evaluate the cost of an algorithm by counting local operations of the form $e^{c(\tau)\mathcal{M}_{\ell}(\tau)}$, where $c(\tau)$ is a time-dependent coefficient. Also recall that the first-order sampler in the linear combination input model is defined by
\begin{equation}
    \prod_{i=0}^{r-1} \mathcal{A}^{\operatorname{(LC)}}(t_{i+1}, t_i),
\end{equation}
where the channel $\mathcal{A}^{\operatorname{(LC)}}(t, v)$ is given in~\eqref{eq:rand-qc-lc}.

Next, we define the following map that we employ as the second-order correction:
\begin{equation}
    \mathcal{R}^{\operatorname{(LC)}}(t_{i+1}, t_i) \coloneqq \frac{1}{2}\left(\left. \frac{d^2}{ds^2} \mathcal{E}_{s}(t_{i+1}, t_i) \right|_{s=0} - \left. \frac{d^2}{ds^2} \mathcal{A}^{\operatorname{(LC)}}_{s}(t_{i+1}, t_i) \right|_{s=0}\right).
    \label{eq:second-order-r-map-lc}
\end{equation}
Using this correction, we define the second-order sampling map for the linear combination input model as
\begin{align}
    \mathcal{A}^{(2,\operatorname{LC})}(t,0)
    \coloneqq
    \prod_{i=0}^{r-1}\mathcal{A}^{\operatorname{(LC)}}(t_{i+1},t_i) + \sum_{k=0}^{r-1}\left(\prod_{i=k+1}^{r-1}\mathcal{A}^{\operatorname{(LC)}}(t_{i+1},t_i)\right)\mathcal{R}^{\operatorname{(LC)}}(t_{k+1},t_k)\left(\prod_{i=0}^{k-1}\mathcal{A}^{\operatorname{(LC)}}(t_{i+1},t_i)\right).
    \label{eq:sec-order-sampler-lc}
\end{align}
Here, an empty product is understood as the identity map, and the products are ordered so that the map corresponding to the earliest interval acts first.

The following theorem then shows how well this map approximates the target channel $\mathcal{E}(t,0)$.

\begin{theorem}[Second-order sampling error for the linear combination input model]
\label{thm:error-analysis-lc-2}
Let $\mathcal{L}(\tau)$ be a time-dependent Lindbladian as defined in~\eqref{eq:def-lind-lc} for all $\tau \in [0, t]$. We assume that each $\mathcal{M}_{\ell}(\tau)$ is bounded and piecewise continuous on $[0,t]$ and is nonzero everywhere. Let $\mathcal{E}(t, 0) =  \exp_{\mathcal{T}}\!\left(\int_0^t d\tau\, \mathcal{L}(\tau) \right)$ be the solution to the corresponding Lindblad master equation, and let $0=t_0<t_1<\cdots<t_r=t$ be a partition satisfying~\eqref{eq:partition-LC}. Let $\mathcal{A}^{(2,\operatorname{LC})}(t,0)$ be the second-order sampling map defined in~\eqref{eq:sec-order-sampler-lc}. Then,
\begin{equation}
    \frac{1}{2}\left\Vert\mathcal{E}(t,0)-\mathcal{A}^{(2,\operatorname{LC})}(t,0)\right\Vert_{\diamond}
    \leq
    \frac{\left\Vert\mathcal{L}\right\Vert_{\diamond,1,1}^3}{6r^2}
    + \frac{(r-1)\left\Vert\mathcal{L}\right\Vert_{\diamond,1,1}^4}{4r^3}.
\end{equation}
Consequently,
\begin{equation}
    \frac{1}{2}\left\Vert\mathcal{E}(t,0)-\mathcal{A}^{(2,\operatorname{LC})}(t,0)\right\Vert_{\diamond}
    =
    O\!\left(\frac{\left\Vert\mathcal{L}\right\Vert_{\diamond,1,1}^3+\left\Vert\mathcal{L}\right\Vert_{\diamond,1,1}^4}{r^2}\right).
\end{equation}
\end{theorem}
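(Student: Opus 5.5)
The plan is to transplant the proof of Theorem~\ref{thm:sec-order-sampler} almost verbatim, replacing $\mathcal{A}$ by $\mathcal{A}^{\operatorname{(LC)}}$, $\mathcal{R}$ by $\mathcal{R}^{\operatorname{(LC)}}$, and $\Vert\mathcal{L}\Vert^{(i)}_{\diamond,1}$ by $\Vert\mathcal{L}\Vert^{(i)}_{\diamond,1,1}$. The only genuinely new inputs are local derivative bounds for the linear-combination sampler. For the exact propagator, we simply reuse the earlier bounds together with the triangle inequality $\Vert\mathcal{L}(\tau)\Vert_\diamond\le\sum_\ell\Vert\mathcal{M}_\ell(\tau)\Vert_\diamond$, which gives $\Vert\mathcal{L}\Vert^{(i)}_{\diamond,1}\le\Vert\mathcal{L}\Vert^{(i)}_{\diamond,1,1}$, as already used in~\eqref{eq:first-term-bound-lc}.

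\textbf{Local residual.} The first step is the local identity. We have $\mathcal{A}^{\operatorname{(LC)}}_0=\mathcal{E}_0=\mathcal{I}$, and the first derivatives agree at $s=0$ (shown in the proof of Theorem~\ref{thm:error-analysis-lc}). By the definition~\eqref{eq:second-order-r-map-lc}, Taylor's formula with integral remainder then gives
\begin{equation*}
\mathcal{E}(t_{i+1},t_i)-\mathcal{A}^{\operatorname{(LC)}}(t_{i+1},t_i)-\mathcal{R}^{\operatorname{(LC)}}(t_{i+1},t_i)=\int_0^1 ds\,\frac{(1-s)^2}{2}\frac{d^3}{ds^3}\bigl(\mathcal{E}_s-\mathcal{A}^{\operatorname{(LC)}}_s\bigr)(t_{i+1},t_i).
\end{equation*}
The third derivative of $\mathcal{E}_s$ is bounded by $(\Vert\mathcal{L}\Vert^{(i)}_{\diamond,1})^3\le(\Vert\mathcal{L}\Vert^{(i)}_{\diamond,1,1})^3$, exactly as in~\eqref{eq:sec-o-proof-first-term-decomp-1}. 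For the sampler, we differentiate under the sum and integral. Each $e^{s\mathcal{M}_\ell(\tau)/p^{(i)}(\ell,\tau)}$ is a channel for $s\in[0,1]$, so
\begin{equation*}
\Bigl\Vert\frac{d^3}{ds^3}\mathcal{A}^{\operatorname{(LC)}}_s\Bigr\Vert_\diamond\le\sum_\ell\int d\tau\,\frac{\Vert\mathcal{M}_\ell(\tau)\Vert_\diamond^3}{p^{(i)}(\ell,\tau)^2}=\bigl(\Vert\mathcal{L}\Vert^{(i)}_{\diamond,1,1}\bigr)^3,
\end{equation*}
where the equality uses~\eqref{eq:prob-den-lc-i}. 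Since $\int_0^1(1-s)^2/2\,ds=1/6$, the local residual is at most $\tfrac13(\Vert\mathcal{L}\Vert^{(i)}_{\diamond,1,1})^3$.

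\textbf{First-order local error and size of the correction.} The first-order local error $\Vert\mathcal{E}-\mathcal{A}^{\operatorname{(LC)}}\Vert_\diamond\le(\Vert\mathcal{L}\Vert^{(i)}_{\diamond,1,1})^2$ was already obtained in the proof of Theorem~\ref{thm:error-analysis-lc}. For the correction, we apply the second-derivative bounds~\eqref{eq:first-term-bound-lc} and~\eqref{eq:sec-term-bound-lc} at $s=0$ inside the factor $\tfrac12$. This gives $\Vert\mathcal{R}^{\operatorname{(LC)}}\Vert_\diamond\le(\Vert\mathcal{L}\Vert^{(i)}_{\diamond,1,1})^2$.

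\textbf{Global assembly.} We use the same two-sum decomposition as~\eqref{eq:td-second-order-global-decomposition}, which is a purely algebraic telescoping identity valid for any family of channels. The first sum is bounded using that the surrounding maps are channels of diamond norm one. The second sum is bounded by the telescoping estimate $\Vert\prod\mathcal{E}-\prod\mathcal{A}^{\operatorname{(LC)}}\Vert_\diamond\le\sum_{j>k}(\Vert\mathcal{L}\Vert^{(j)}_{\diamond,1,1})^2$. Under the equipartition~\eqref{eq:partition-LC}, the first term becomes $\tfrac16\cdot r\cdot\Vert\mathcal{L}\Vert_{\diamond,1,1}^3/r^3$. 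The second term becomes $\tfrac12\binom{r}{2}\Vert\mathcal{L}\Vert_{\diamond,1,1}^4/r^4$. Together these give the stated bound, and the $O(\cdot)$ form follows from $(r-1)/r\le1$.

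\textbf{Main obstacle.} The only step requiring care is justifying termwise differentiation of $\mathcal{A}^{\operatorname{(LC)}}_s$ under the integral up to third order. This needs the integrand $p\,(\mathcal{M}_\ell/p)^3e^{s\mathcal{M}_\ell/p}$ to be integrable. It is, since $\Vert\mathcal{M}_\ell\Vert^3_\diamond/p^2=\Vert\mathcal{M}_\ell\Vert_\diamond(\Vert\mathcal{L}\Vert^{(i)}_{\diamond,1,1})^2$ is bounded, because $\mathcal{M}_\ell$ is bounded and nonzero, so $p^{(i)}(\ell,\tau)>0$ everywhere. Dominated convergence then applies.
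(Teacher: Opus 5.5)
Your proposal is correct and follows the paper's proof essentially step for step. You use the third-order Taylor remainder for the local residual, with both third derivatives bounded by $(\Vert\mathcal{L}\Vert^{(i)}_{\diamond,1,1})^3$; the second-derivative estimates for the first-order local error and for $\Vert\mathcal{R}^{\operatorname{(LC)}}\Vert_\diamond$; and the same two-sum telescoping decomposition assembled under the equipartition. Your dominated-convergence remark, which justifies differentiating $\mathcal{A}^{\operatorname{(LC)}}_s$ termwise, is a small addition that the paper leaves implicit.
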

\begin{proof}
We first bound the local error on each time interval after including the second-order correction. 
Using $\mathcal{E}_0(t_{i+1},t_i)=\mathcal{A}^{\operatorname{(LC)}}_0(t_{i+1},t_i)=\mathcal{I}$, the first-order matching identities established in Section~\ref{sec:first-order-lc}, and the definition of $\mathcal{R}^{\operatorname{(LC)}}(t_{i+1},t_i)$ in~\eqref{eq:second-order-r-map-lc}, Taylor's formula with integral remainder gives
\begin{equation}
    \mathcal{E}(t_{i+1},t_i)-\mathcal{A}^{\operatorname{(LC)}}(t_{i+1},t_i)-\mathcal{R}^{\operatorname{(LC)}}(t_{i+1},t_i)
    = \int_0^1 ds\,\frac{(1-s)^2}{2}\frac{d^3}{ds^3}\left(\mathcal{E}_s(t_{i+1},t_i)-\mathcal{A}^{\operatorname{(LC)}}_s(t_{i+1},t_i)\right).
    \label{eq:td-second-order-local-remainder-lc}
\end{equation}

For the third-order derivative of the exact propagator, the same calculation as in~\eqref{eq:sec-o-proof-first-term-decomp-1} gives
\begin{equation}
    \left\Vert\frac{d^3}{ds^3}\mathcal{E}_s(t_{i+1},t_i)\right\Vert_{\diamond}
    \leq \left(\left\Vert\mathcal{L}\right\Vert_{\diamond,1}^{(i)}\right)^3
    \leq \left(\left\Vert\mathcal{L}\right\Vert_{\diamond,1,1}^{(i)}\right)^3.
    \label{eq:sec-o-proof-first-term-decomp-1-lc}
\end{equation}
For the linear combination sampler, differentiating under the integral sign, as in the derivation of~\eqref{eq:sec-term-bound-lc}, and using submultiplicativity of the diamond norm gives
\begin{align}
    \left\Vert\frac{d^3}{ds^3}\mathcal{A}^{\operatorname{(LC)}}_s(t_{i+1},t_i)\right\Vert_{\diamond}
    &=
    \left\Vert\frac{d^3}{ds^3}\left(\sum_{\ell=1}^{J+K}\int_{t_i}^{t_{i+1}}d\tau\,p^{(i)}(\ell,\tau)e^{s\frac{\mathcal{M}_{\ell}(\tau)}{p^{(i)}(\ell,\tau)}}\right)\right\Vert_{\diamond}\\
    &\leq
    \sum_{\ell=1}^{J+K}\int_{t_i}^{t_{i+1}}d\tau\,p^{(i)}(\ell,\tau) \left\Vert\frac{d^3}{ds^3}e^{s\frac{\mathcal{M}_{\ell}(\tau)}{p^{(i)}(\ell,\tau)}}\right\Vert_{\diamond}\\
    &\leq
    \sum_{\ell=1}^{J+K}\int_{t_i}^{t_{i+1}}d\tau\,\frac{\left\Vert\mathcal{M}_{\ell}(\tau)\right\Vert_{\diamond}^3}{\left(p^{(i)}(\ell,\tau)\right)^2}\\
    &=
    \left(\left\Vert\mathcal{L}\right\Vert_{\diamond,1,1}^{(i)}\right)^3.
    \label{eq:sec-o-proof-first-term-decomp-2-lc}
\end{align}
Applying the bounds in~\eqref{eq:sec-o-proof-first-term-decomp-1-lc} and~\eqref{eq:sec-o-proof-first-term-decomp-2-lc} to the integral remainder in~\eqref{eq:td-second-order-local-remainder-lc}, and using $\int_0^1(1-s)^2/2\,ds=1/6$, we obtain
\begin{equation}
    \left\Vert\mathcal{E}(t_{i+1},t_i)-\mathcal{A}^{\operatorname{(LC)}}(t_{i+1},t_i)-\mathcal{R}^{\operatorname{(LC)}}(t_{i+1},t_i)\right\Vert_{\diamond}
    \leq \frac{1}{3}\left(\left\Vert\mathcal{L}\right\Vert_{\diamond,1,1}^{(i)}\right)^3.
    \label{eq:td-second-order-local-residual-lc}
\end{equation}

We also require a bound on the first-order local error. Since the exact and randomized maps agree through first order in $s$, Taylor's formula gives
\begin{equation}
    \mathcal{E}(t_{i+1},t_i)-\mathcal{A}^{\operatorname{(LC)}}(t_{i+1},t_i)
    = \int_0^1 ds\,(1-s)\frac{d^2}{ds^2}\left(\mathcal{E}_s(t_{i+1},t_i)-\mathcal{A}^{\operatorname{(LC)}}_s(t_{i+1},t_i)\right).
\end{equation}
Using the second-derivative bounds in~\eqref{eq:bound-first-term-sec} and~\eqref{eq:sec-term-bound-lc}, together with $\int_0^1(1-s)\,ds=1/2$ and $\Vert\mathcal{L}\Vert_{\diamond,1}^{(i)}\leq\Vert\mathcal{L}\Vert_{\diamond,1,1}^{(i)}$, we obtain
\begin{align}
    \left\Vert\mathcal{E}(t_{i+1},t_i)-\mathcal{A}^{\operatorname{(LC)}}(t_{i+1},t_i)\right\Vert_{\diamond}
    &\leq \frac{1}{2}\left[\left(\left\Vert\mathcal{L}\right\Vert_{\diamond,1}^{(i)}\right)^2+\left(\left\Vert\mathcal{L}\right\Vert_{\diamond,1,1}^{(i)}\right)^2\right]\\
    &\leq \left(\left\Vert\mathcal{L}\right\Vert_{\diamond,1,1}^{(i)}\right)^2.
    \label{eq:td-first-order-local-bound-lc}
\end{align}
Moreover, from the definition of $\mathcal{R}^{\operatorname{(LC)}}(t_{i+1},t_i)$ in~\eqref{eq:second-order-r-map-lc} and the same second-derivative bounds,
\begin{align}
    \left\Vert\mathcal{R}^{\operatorname{(LC)}}(t_{i+1},t_i)\right\Vert_{\diamond}
    &\leq
    \frac{1}{2}\left(\left\Vert\left.\frac{d^2}{ds^2}\mathcal{E}_s(t_{i+1},t_i)\right|_{s=0}\right\Vert_{\diamond}
    +
    \left\Vert\left.\frac{d^2}{ds^2}\mathcal{A}^{\operatorname{(LC)}}_s(t_{i+1},t_i)\right|_{s=0}\right\Vert_{\diamond}\right)\\
    &\leq
    \frac{1}{2}\left[\left(\left\Vert\mathcal{L}\right\Vert_{\diamond,1}^{(i)}\right)^2
    +
    \left(\left\Vert\mathcal{L}\right\Vert_{\diamond,1,1}^{(i)}\right)^2\right]\\
    &\leq
    \left(\left\Vert\mathcal{L}\right\Vert_{\diamond,1,1}^{(i)}\right)^2.
    \label{eq:R-bound-lc}
\end{align}

We now compare the full exact evolution with the second-order sampling map. Using the composition property of the time-ordered propagator and the definition of $\mathcal{A}^{(2,\operatorname{LC})}(t,0)$ in~\eqref{eq:sec-order-sampler-lc}, a telescoping expansion of the difference between the exact and first-order products gives
\begin{align}
    &\mathcal{E}(t,0)-\mathcal{A}^{(2,\operatorname{LC})}(t,0)\notag\\
    &=
    \sum_{k=0}^{r-1}\left(\prod_{i=k+1}^{r-1}\mathcal{E}(t_{i+1},t_i)\right)\left[\mathcal{E}(t_{k+1},t_k)-\mathcal{A}^{\operatorname{(LC)}}(t_{k+1},t_k)-\mathcal{R}^{\operatorname{(LC)}}(t_{k+1},t_k)\right]\left(\prod_{i=0}^{k-1}\mathcal{A}^{\operatorname{(LC)}}(t_{i+1},t_i)\right)\notag\\
    &\quad +
    \sum_{k=0}^{r-1}\left[\prod_{i=k+1}^{r-1}\mathcal{E}(t_{i+1},t_i) - \prod_{i=k+1}^{r-1}\mathcal{A}^{\operatorname{(LC)}}(t_{i+1},t_i)\right]\mathcal{R}^{\operatorname{(LC)}}(t_{k+1},t_k)\left(\prod_{i=0}^{k-1}\mathcal{A}^{\operatorname{(LC)}}(t_{i+1},t_i)\right).
    \label{eq:td-second-order-global-decomposition-lc}
\end{align}
The first sum contains the residual local error after subtracting the second-order correction. The second sum accounts for the difference between the exact and first-order sampled evolutions following the inserted correction.

Since both $\mathcal{E}(t_{i+1},t_i)$ and $\mathcal{A}^{\operatorname{(LC)}}(t_{i+1},t_i)$ are quantum channels, all surrounding factors in the first sum have diamond norm one. Therefore, using~\eqref{eq:td-second-order-local-residual-lc}, its contribution to the normalized diamond-norm error is bounded by
\begin{equation}
    \frac{1}{2}\sum_{k=0}^{r-1}\left\Vert\mathcal{E}(t_{k+1},t_k)-\mathcal{A}^{\operatorname{(LC)}}(t_{k+1},t_k)-\mathcal{R}^{\operatorname{(LC)}}(t_{k+1},t_k)\right\Vert_{\diamond}
    \leq \frac{1}{6}\sum_{k=0}^{r-1}\left(\left\Vert\mathcal{L}\right\Vert_{\diamond,1,1}^{(k)}\right)^3.
\end{equation}

For the second sum, we again use a telescoping expansion. Since the exact and randomized maps on each interval are quantum channels,~\eqref{eq:td-first-order-local-bound-lc} gives
\begin{align}
    \left\Vert\prod_{i=k+1}^{r-1}\mathcal{E}(t_{i+1},t_i)-\prod_{i=k+1}^{r-1}\mathcal{A}^{\operatorname{(LC)}}(t_{i+1},t_i)\right\Vert_{\diamond}
    &\leq
    \sum_{j=k+1}^{r-1}\left\Vert\mathcal{E}(t_{j+1},t_j)-\mathcal{A}^{\operatorname{(LC)}}(t_{j+1},t_j)\right\Vert_{\diamond}\\
    &\leq
    \sum_{j=k+1}^{r-1}\left(\left\Vert\mathcal{L}\right\Vert_{\diamond,1,1}^{(j)}\right)^2.
\end{align}
Combining this bound with~\eqref{eq:R-bound-lc} and the decomposition in~\eqref{eq:td-second-order-global-decomposition-lc}, we obtain
\begin{equation}
    \frac{1}{2}\left\Vert\mathcal{E}(t,0)-\mathcal{A}^{(2,\operatorname{LC})}(t,0)\right\Vert_{\diamond}
    \leq
    \frac{1}{6}\sum_{k=0}^{r-1}\left(\left\Vert\mathcal{L}\right\Vert_{\diamond,1,1}^{(k)}\right)^3
    + \frac{1}{2}\sum_{k=0}^{r-1}\left[\sum_{j=k+1}^{r-1}\left(\left\Vert\mathcal{L}\right\Vert_{\diamond,1,1}^{(j)}\right)^2\right]\left(\left\Vert\mathcal{L}\right\Vert_{\diamond,1,1}^{(k)}\right)^2.
\end{equation}

Finally, by the choice of partition in~\eqref{eq:partition-LC}, $\left\Vert\mathcal{L}\right\Vert_{\diamond,1,1}^{(i)}=\left\Vert\mathcal{L}\right\Vert_{\diamond,1,1}/r$ for every $i$. Therefore, the first term becomes
\begin{equation}
    \frac{1}{6}\sum_{k=0}^{r-1}\left(\left\Vert\mathcal{L}\right\Vert_{\diamond,1,1}^{(k)}\right)^3
    = \frac{\left\Vert\mathcal{L}\right\Vert_{\diamond,1,1}^3}{6r^2},
\end{equation}
while the second term contains one contribution for each index pair $(k,j)$ satisfying $0\leq k<j\leq r-1$. There are $\binom{r}{2}=r(r-1)/2$ such pairs, and therefore
\begin{equation}
    \frac{1}{2}\sum_{k=0}^{r-1}\left[\sum_{j=k+1}^{r-1}\left(\left\Vert\mathcal{L}\right\Vert_{\diamond,1,1}^{(j)}\right)^2\right]\left(\left\Vert\mathcal{L}\right\Vert_{\diamond,1,1}^{(k)}\right)^2
    = \frac{1}{2}\binom{r}{2}\frac{\left\Vert\mathcal{L}\right\Vert_{\diamond,1,1}^4}{r^4}
    = \frac{(r-1)\left\Vert\mathcal{L}\right\Vert_{\diamond,1,1}^4}{4r^3}.
\end{equation}
Thus,
\begin{equation}
    \frac{1}{2}\left\Vert\mathcal{E}(t,0)-\mathcal{A}^{(2,\operatorname{LC})}(t,0)\right\Vert_{\diamond}
    \leq \frac{\left\Vert\mathcal{L}\right\Vert_{\diamond,1,1}^3}{6r^2}
    + \frac{(r-1)\left\Vert\mathcal{L}\right\Vert_{\diamond,1,1}^4}{4r^3}.
\end{equation}
Since $(r-1)/r\leq1$, it follows that
\begin{equation}
    \frac{1}{2}\left\Vert\mathcal{E}(t,0)-\mathcal{A}^{(2,\operatorname{LC})}(t,0)\right\Vert_{\diamond}
    =    O\!\left(\frac{\left\Vert\mathcal{L}\right\Vert_{\diamond,1,1}^3+\left\Vert\mathcal{L}\right\Vert_{\diamond,1,1}^4}{r^2}\right).
\end{equation}
This concludes the proof.
\end{proof}

\subsubsection{Implementation of the Second-Order LC Sampler}

Recall that the second-order sampling algorithm for the linear combination input model is given by~\eqref{eq:sec-order-sampler-lc}. Since the map is not necessarily CPTP, it cannot in general be implemented directly as a quantum channel. As in Section~\ref{sec:qswift-implementation}, we instead consider estimating expectation values. Let $G$ be an observable and let $\rho$ be an initial state. Our goal is to estimate
\begin{equation}
    g \coloneqq \operatorname{Tr}\!\left[G\mathcal{E}(t, 0)(\rho)\right].
\end{equation}
We define the corresponding second-order approximation as
\begin{equation}
    g^{(2)} \coloneqq \operatorname{Tr}\!\left[G\mathcal{A}^{(2,\operatorname{LC})}(t,0)(\rho)\right].
\end{equation}
From Theorem~\ref{thm:error-analysis-lc-2}, the deviation between $g$ and $g^{(2)}$ is bounded by
\begin{align}
    \left | g - g^{(2)}\right| 
    &\leq \left \Vert G \right \Vert \left\Vert \mathcal{E}(t, 0) - \mathcal{A}^{(2,\operatorname{LC})}(t,0)\right\Vert_{\diamond} \\
    &\leq \left \Vert G \right \Vert \left(\frac{\left \Vert \mathcal{L} \right \Vert_{\diamond, 1, 1}^3}{3r^2} + \frac{(r-1)\left \Vert \mathcal{L} \right \Vert_{\diamond, 1, 1}^4}{2r^3}\right).
\end{align}
Consequently,
\begin{equation}
    \left|g-g^{(2)}\right|=O\!\left(\frac{\left\Vert G\right\Vert\left(\Vert\mathcal{L}\Vert_{\diamond,1,1}^3+\Vert\mathcal{L}\Vert_{\diamond,1,1}^4\right)}{r^2}\right).
\end{equation}
Thus, choosing $r$ so that
\begin{equation}
    r=O\!\left(\sqrt{\frac{\left\Vert G\right\Vert\left(\Vert\mathcal{L}\Vert_{\diamond,1,1}^3+\Vert\mathcal{L}\Vert_{\diamond,1,1}^4\right)}{\varepsilon}}\right)
\end{equation}
makes the approximation bias at most $\varepsilon$.

In what follows, we demonstrate how to estimate $g^{(2)}$. Consider that
\begin{align}
     g^{(2)} & = \operatorname{Tr}\!\left[G\mathcal{A}^{(2,\operatorname{LC})}(t,0)(\rho)\right]\\
     & = \operatorname{Tr}\!\left[G\prod_{i=0}^{r-1}\mathcal{A}^{\operatorname{(LC)}}(t_{i+1},t_i)(\rho)
    \right]\notag\\
    &\quad +
    \sum_{k=0}^{r-1}\operatorname{Tr}\!\left[G\left(\prod_{i=k+1}^{r-1}\mathcal{A}^{\operatorname{(LC)}}(t_{i+1},t_i)\right)\mathcal{R}^{\operatorname{(LC)}}(t_{k+1},t_k)\left(\prod_{i=0}^{k-1}\mathcal{A}^{\operatorname{(LC)}}(t_{i+1},t_i)\right)(\rho)\right]\\
     & \eqqcolon g^{(1)} + \delta g 
\end{align}
Here, $g^{(1)}\coloneqq \operatorname{Tr}\!\left[G \prod_{i=0}^{r-1} \mathcal{A}^{\operatorname{(LC)}}(t_{i+1}, t_i)(\rho)\right]$ denotes the expectation value associated with the first-order continuous qDRIFT approximation, while 
\begin{equation}
    \delta g\coloneqq \sum_{k=0}^{r-1}\operatorname{Tr}\!\left[G\left(\prod_{i=k+1}^{r-1}\mathcal{A}^{\operatorname{(LC)}}(t_{i+1},t_i)\right)\mathcal{R}^{\operatorname{(LC)}}(t_{k+1},t_k)\left(\prod_{i=0}^{k-1}\mathcal{A}^{\operatorname{(LC)}}(t_{i+1},t_i)\right)(\rho)\right]
\end{equation}
denotes the second-order correction.
To estimate $g^{(1)}$, we apply the continuous qDRIFT sampling procedure corresponding to $\prod_{i=0}^{r-1} \mathcal{A}^{\operatorname{(LC)}}(t_{i+1}, t_i)$ to the initial state, measure $G$, repeat this process multiple times, and then take the sample mean. This yields an unbiased estimator of $g^{(1)}$. However, the estimation of the second term $\delta g$ is more involved and requires further decomposition.

To estimate $\delta g$, we express it as a sum of contributions, each denoted by $\delta g_k$:
\begin{equation}
    \delta g = \sum_{k=0}^{r-1} \delta g_k\label{eq:g-g_k-conti},
\end{equation}
where each term $\delta g_k$ is given by
\begin{equation}
\delta g_k \coloneqq \operatorname{Tr}\!\left[ G\left(\prod_{i=k+1}^{r-1} \mathcal{A}^{\operatorname{(LC)}}(t_{i+1}, t_i)\right) \circ \mathcal{R}^{\operatorname{(LC)}}(t_{k+1}, t_k) \circ \left( \prod_{i=0}^{k-1}   \mathcal{A}^{\operatorname{(LC)}}(t_{i+1}, t_i)\right)(\rho) \right].
\end{equation}
Using~\eqref{eq:second-order-r-map-lc}, we have
\begin{multline}
\delta g_k = \frac{1}{2}\operatorname{Tr}\!\left[ G\left(\prod_{i=k+1}^{r-1} \mathcal{A}^{\operatorname{(LC)}}(t_{i+1}, t_i)\right) \circ \left. \frac{d^2}{ds^2} \mathcal{E}_{s}(t_{k+1}, t_k) \right|_{s=0} \circ \left( \prod_{i=0}^{k-1}   \mathcal{A}^{\operatorname{(LC)}}(t_{i+1}, t_i)\right)(\rho) \right]\\
- \frac{1}{2}\operatorname{Tr}\!\left[ G\left(\prod_{i=k+1}^{r-1} \mathcal{A}^{\operatorname{(LC)}}(t_{i+1}, t_i)\right) \circ \left. \frac{d^2}{ds^2} \mathcal{A}^{\operatorname{(LC)}}_{s}(t_{k+1}, t_k) \right|_{s=0} \circ \left( \prod_{i=0}^{k-1}   \mathcal{A}^{\operatorname{(LC)}}(t_{i+1}, t_i)\right)(\rho) \right]\label{eq:conti-delta-gk-Rlc}.
\end{multline}

We now expand the two second derivative terms appearing in~\eqref{eq:conti-delta-gk-Rlc}. First, using the derivative identity for the scaled propagator from Lemma~\ref{lem:time-dep-deriv-for-err-bnd}, we have
\begin{equation}
    \frac{d}{ds}\mathcal{E}_{s}(b,a)=\int_a^b d\tau\,\mathcal{E}_{s}(b,\tau)\circ\mathcal{L}(\tau)\circ\mathcal{E}_{s}(\tau,a).
    \label{eq:propagator-deriv-recall-lc}
\end{equation}
Applying this identity once more and then setting $s=0$, where $\mathcal{E}_0(\cdot,\cdot)=\mathcal{I}$, gives
\begin{align}
    & \left. \frac{d^2}{ds^2} \mathcal{E}_{s}(t_{k+1}, t_k) \right|_{s=0} \nonumber\\
    & =  \int_{t_{k}}^{t_{k+1}} d\tau_1\ \left[\left.\frac{d}{ds} \mathcal{E}_{s}(t_{k+1}, \tau_1)\right|_{s=0} \circ \mathcal{L}(\tau_1)  +  \mathcal{L}(\tau_1) \circ \left.\frac{d}{ds} \mathcal{E}_{s}(\tau_1, t_{k}) \right|_{s=0}\right] \\
    & = \int_{t_{k}}^{t_{k+1}} d\tau_1\ \left(  \int_{\tau_1}^{t_{k+1}} d\tau_2\ \mathcal{L}(\tau_2) \circ \mathcal{L}(\tau_1)  +  \int_{t_{k}}^{\tau_1} d\tau_2\   \mathcal{L}(\tau_1) \circ \mathcal{L}(\tau_2) \right).
\end{align}
The two terms correspond to the two possible time orderings of the pair $(\tau_1,\tau_2)$.
Equivalently, we can rewrite the same expression as an integral over the full square $[t_k,t_{k+1}]^2$:
\begin{align}
    &\left.\frac{d^2}{ds^2}\mathcal{E}_{s}(t_{k+1},t_k)\right|_{s=0}\notag\\
    &=\int_{t_k}^{t_{k+1}} d\tau_1\int_{t_k}^{t_{k+1}} d\tau_2\left(\mathcal{L}(\tau_1)\circ\mathcal{L}(\tau_2)\mathbf{1}_{\tau_2\leq\tau_1}+\mathcal{L}(\tau_2)\circ\mathcal{L}(\tau_1)\mathbf{1}_{\tau_2\geq\tau_1}\right).
    \label{eq:E-second-deriv-indicator-lc}
\end{align}
Here $\mathbf{1}_{\tau_2\leq\tau_1}$ is the indicator function that equals one when $\tau_2\leq\tau_1$ and zero otherwise, and similarly for $\mathbf{1}_{\tau_2\geq\tau_1}$. 
The diagonal $\tau_1=\tau_2$ has measure zero and therefore does not affect the value of the double integral.

We now rewrite~\eqref{eq:E-second-deriv-indicator-lc} in terms of the LC sampling distribution. 
Recall from~\eqref{eq:def-lind-lc} that $\mathcal{L}(\tau) = \sum_{\ell=1}^{J+K}\mathcal{M}_{\ell}(\tau)$.
For each nonzero local term, define the normalized generator
\begin{equation}
    \overline{\mathcal{M}}_{\ell}(\tau)\coloneqq\frac{\mathcal{M}_{\ell}(\tau)}{\left\Vert\mathcal{M}_{\ell}(\tau)\right\Vert_{\diamond}}.
    \label{eq:normalized-M-lc}
\end{equation}
Also recall from~\eqref{eq:prob-den-lc-i} that, on the $k$-th interval $[t_k,t_{k+1}]$, the joint probability density over $(\ell,\tau)$ is $p^{(k)}(\ell,\tau)=\left\Vert\mathcal{M}_{\ell}(\tau)\right\Vert_{\diamond}/\left\Vert\mathcal{L}\right\Vert_{\diamond,1,1}^{(k)}$.
Equivalently, for $\tau\in[t_k,t_{k+1}]$,
\begin{equation}
    \mathcal{M}_{\ell}(\tau)=\left\Vert\mathcal{L}\right\Vert_{\diamond,1,1}^{(k)}p^{(k)}(\ell,\tau)\overline{\mathcal{M}}_{\ell}(\tau).
    \label{eq:M-factorization-lc}
\end{equation}
Substituting this expression into~\eqref{eq:E-second-deriv-indicator-lc}, we obtain
\begin{align}
    & \left. \frac{d^2}{ds^2} \mathcal{E}_{s}(t_{k+1}, t_k) \right|_{s=0} \nonumber\\
    & = \left(\left \Vert \mathcal{L} \right \Vert_{\diamond, 1, 1}^{(k)}\right)^2 \Bigg(\sum_{\ell_1, \ell_2 = 1}^{J+K} \int_{t_{k}}^{t_{k+1}} d\tau_1\, p^{(k)}(\ell_1,\tau_1)  \int_{t_{k}}^{t_{k+1}} d\tau_2\, p^{(k)}(\ell_2,\tau_2) \Bigg(\overline{\mathcal{M}}_{\ell_1}(\tau_1) \circ \overline{\mathcal{M}}_{\ell_2}(\tau_2) \mathbf{1}_{\tau_2 \leq \tau_1} \nonumber\\
    & \qquad +  \overline{\mathcal{M}}_{\ell_2}(\tau_2) \circ  \overline{\mathcal{M}}_{\ell_1}(\tau_1) \mathbf{1}_{\tau_2 \geq \tau_1} \Bigg) \Bigg)\\
    & = \frac{\left \Vert \mathcal{L} \right \Vert_{\diamond, 1, 1}^2}{r^2} \Bigg(\sum_{\ell_1, \ell_2 = 1}^{J+K} \int_{t_{k}}^{t_{k+1}} d\tau_1\, p^{(k)}(\ell_1,\tau_1)  \int_{t_{k}}^{t_{k+1}} d\tau_2\, p^{(k)}(\ell_2,\tau_2) \Bigg(\overline{\mathcal{M}}_{\ell_1}(\tau_1) \circ \overline{\mathcal{M}}_{\ell_2}(\tau_2) \mathbf{1}_{\tau_2 \leq \tau_1} \nonumber\\
    & \qquad +  \overline{\mathcal{M}}_{\ell_2}(\tau_2) \circ  \overline{\mathcal{M}}_{\ell_1}(\tau_1) \mathbf{1}_{\tau_2 \geq \tau_1} \Bigg) \Bigg),
\end{align}
where the last equality follows from the partition condition~\eqref{eq:partition-LC}.

We next expand the corresponding second derivative of the LC sampler. From the definition of
$\mathcal{A}^{\operatorname{(LC)}}_s(t_{k+1},t_k)$, we have
\begin{align}
 \left. \frac{d^2}{ds^2} \mathcal{A}^{\operatorname{(LC)}}_{s}(t_{k+1}, t_k) \right|_{s=0}& =  \left(\left. \sum_{\ell = 1}^{J+K} \int_{t_k}^{t_{k+1}} d\tau\ p^{(k)}(\ell,\tau)\  \frac{\mathcal{M}_{\ell}(\tau)}{p^{(k)}(\ell,\tau)} \circ \frac{\mathcal{M}_{\ell}(\tau)}{p^{(k)}(\ell,\tau)}  \circ
 e^{s\frac{\mathcal{M}_{\ell}(\tau)}{p^{(k)}(\ell,\tau)}}\right) \right|_{s=0}\\
 & =  \sum_{\ell = 1}^{J+K} \int_{t_k}^{t_{k+1}} d\tau\ p^{(k)}(\ell,\tau)\  \frac{\mathcal{M}_{\ell}(\tau)}{p^{(k)}(\ell,\tau)} \circ \frac{\mathcal{M}_{\ell}(\tau)}{p^{(k)}(\ell,\tau)}\\
 & = \left(\left \Vert \mathcal{L} \right \Vert_{\diamond, 1, 1}^{(k)}\right)^2 \sum_{\ell = 1}^{J+K} \int_{t_k}^{t_{k+1}} d\tau\ p^{(k)}(\ell,\tau)\,  \overline{\mathcal{M}}_{\ell}(\tau) \circ \overline{\mathcal{M}}_{\ell}(\tau)\\
 & = \frac{\left \Vert \mathcal{L} \right \Vert_{\diamond, 1, 1}^2}{r^2} \sum_{\ell = 1}^{J+K} \int_{t_k}^{t_{k+1}} d\tau\ p^{(k)}(\ell,\tau)\  \overline{\mathcal{M}}_{\ell}(\tau) \circ \overline{\mathcal{M}}_{\ell}(\tau).
\end{align}

Using the above and simplifying it further, we can rewrite~\eqref{eq:conti-delta-gk-Rlc} as follows:
\begin{align}
    \delta g_k & = \frac{\left \Vert \mathcal{L} \right \Vert_{\diamond, 1, 1}^2}{2r^2} \Bigg(\sum_{\ell_1, \ell_2 = 1}^{J+K} \int_{t_{k}}^{t_{k+1}} d\tau_1\, p^{(k)}(\ell_1,\tau_1)  \int_{t_{k}}^{t_{k+1}} d\tau_2\, p^{(k)}(\ell_2,\tau_2) \nonumber\\ 
    & \qquad \times\operatorname{Tr}\!\Bigg[ G\left(\prod_{i=k+1}^{r-1} \mathcal{A}^{\operatorname{(LC)}}(t_{i+1}, t_i)\right) \circ \Bigg(\overline{\mathcal{M}}_{\ell_1}(\tau_1) \circ  \overline{\mathcal{M}}_{\ell_2}(\tau_2)\ \mathbf{1}_{\tau_2 \leq \tau_1} \nonumber\\
      & \qquad +  \overline{\mathcal{M}}_{\ell_2}(\tau_2) \circ  \overline{\mathcal{M}}_{\ell_1}(\tau_1)\  \mathbf{1}_{\tau_2 \geq \tau_1} \Bigg)  \circ  \left( \prod_{i=0}^{k-1}   \mathcal{A}^{\operatorname{(LC)}}(t_{i+1}, t_i)\right)(\rho) \Bigg] \nonumber\\
      & \qquad - \sum_{\ell = 1}^{J+K} \int_{t_k}^{t_{k+1}} d\tau\ p^{(k)}(\ell,\tau)\ \operatorname{Tr}\!\Bigg[ G\left(\prod_{i=k+1}^{r-1} \mathcal{A}^{\operatorname{(LC)}}(t_{i+1}, t_i)\right) \circ \overline{\mathcal{M}}_{\ell}(\tau) \circ \overline{\mathcal{M}}_{\ell}(\tau) \nonumber\\
      & \qquad \circ\left( \prod_{i=0}^{k-1}   \mathcal{A}^{\operatorname{(LC)}}(t_{i+1}, t_i)\right)(\rho) \Bigg]\Bigg).\label{eq:delta-gk-expanded-lc}
\end{align}

Thus, the basic quantity we need to estimate has the form
\begin{equation}
    \operatorname{Tr}\!\left[ G\left(\prod_{i=k+1}^{r-1} \mathcal{A}^{\operatorname{(LC)}}(t_{i+1}, t_i)\right) \circ \overline{\mathcal{M}}_{\ell_1}(\tau_1) \circ \overline{\mathcal{M}}_{\ell_2}(\tau_2) \circ \left( \prod_{i=0}^{k-1}   \mathcal{A}^{\operatorname{(LC)}}(t_{i+1}, t_i)\right)(\rho) \right].
    \label{eq:building-block-conti}
\end{equation}
This can be estimated using the same Hadamard-test-type construction as in Section~\ref{sec:qswift-implementation}. 
We initialize a qSWIFT control qubit in $|+\rangle=(|0\rangle+|1\rangle)/\sqrt{2}$, so that
\begin{equation}
    |+\rangle\!\langle+| \otimes \rho
    = \frac{1}{2}
    \begin{pmatrix}
        \rho & \rho \\
        \rho & \rho
    \end{pmatrix}.
\end{equation}
As in Section~\ref{sec:qswift-implementation}, the diagonal blocks will not be relevant for the final $X\otimes G$ measurement, and we display only the action on the off-diagonal blocks.
The full normalized swift maps on arbitrary joint control-system operators will be defined below.
The normalized swift maps below are obtained from the swift maps in Section~\ref{sec:qswift-implementation} by including the additional factor $1/\left\Vert\mathcal{M}_{\ell}(\tau)\right\Vert_{\diamond}$, so that their sum realizes $\overline{\mathcal{M}}_{\ell}(\tau)$ on the off-diagonal blocks.

For a Hamiltonian generator $\mathcal{M}_{\ell}(\tau)(\sigma) = -i[H_{\ell}(\tau),\sigma]$, let $\mathcal{B}_{\ell}=\{0,1\}$. The corresponding normalized swift maps act as
\begin{align}
    \begin{pmatrix}
        \cdot & \sigma \\
        \sigma & \cdot
    \end{pmatrix}
    &\xrightarrow{\overline{\mathcal{S}}^{(0)}_{\ell}(\tau)} \frac{1}{\left\Vert\mathcal{M}_{\ell}(\tau)\right\Vert_{\diamond}}
    \begin{pmatrix}
        \cdot & i\sigma H_{\ell}(\tau) \\
        -iH_{\ell}(\tau)\sigma & \cdot
    \end{pmatrix}, \\[0.3em]
    \begin{pmatrix}
        \cdot & \sigma \\
        \sigma & \cdot
    \end{pmatrix}
    &\xrightarrow{\overline{\mathcal{S}}^{(1)}_{\ell}(\tau)} \frac{1}{\left\Vert\mathcal{M}_{\ell}(\tau)\right\Vert_{\diamond}}
    \begin{pmatrix}
        \cdot & -iH_{\ell}(\tau)\sigma \\
        i\sigma H_{\ell}(\tau) & \cdot
    \end{pmatrix}.
\end{align}
Therefore,
\begin{equation}
    \begin{pmatrix}
        \cdot & \sigma \\
        \sigma & \cdot
    \end{pmatrix}
    \xrightarrow{
    \overline{\mathcal{S}}^{(0)}_{\ell}(\tau) + \overline{\mathcal{S}}^{(1)}_{\ell}(\tau) }
    \begin{pmatrix}
        \cdot &
        \overline{\mathcal{M}}_{\ell}(\tau)(\sigma) \\
        \overline{\mathcal{M}}_{\ell}(\tau)(\sigma) &
        \cdot
    \end{pmatrix}.
    \label{eq:normalized-swift-H-lc}
\end{equation}

For a Lindbladian generator $\mathcal{M}_{\ell}(\tau)(\sigma)=L_{\ell-J}(\tau)\sigma L_{\ell-J}^{\dagger}(\tau)-\tfrac{1}{2}\{L_{\ell-J}^{\dagger}(\tau)L_{\ell-J}(\tau),\sigma\}$, let $\mathcal{B}_{\ell}=\{2,3,4\}$. The normalized swift maps act as
\begin{align}
    \begin{pmatrix}
        \cdot & \sigma \\
        \sigma & \cdot
    \end{pmatrix}
    &\xrightarrow{\overline{\mathcal{S}}^{(2)}_{\ell}(\tau)}
    \frac{1}{\left\Vert\mathcal{M}_{\ell}(\tau)\right\Vert_{\diamond}}
    \begin{pmatrix}
        \cdot & L_{\ell-J}(\tau)\sigma L_{\ell-J}^{\dagger}(\tau) \\
        L_{\ell-J}(\tau)\sigma L_{\ell-J}^{\dagger}(\tau) & \cdot
    \end{pmatrix}, \\[0.3em]
    \begin{pmatrix}
        \cdot & \sigma \\
        \sigma & \cdot
    \end{pmatrix}
    &\xrightarrow{\overline{\mathcal{S}}^{(3)}_{\ell}(\tau)}
    \frac{1}{\left\Vert\mathcal{M}_{\ell}(\tau)\right\Vert_{\diamond}}
    \begin{pmatrix}
        \cdot & -\tfrac12 L_{\ell-J}^{\dagger}(\tau)L_{\ell-J}(\tau)\sigma \\
        -\tfrac12 \sigma L_{\ell-J}^{\dagger}(\tau)L_{\ell-J}(\tau) & \cdot
    \end{pmatrix}, \\[0.3em]
    \begin{pmatrix}
        \cdot & \sigma \\
        \sigma & \cdot
    \end{pmatrix}
    &\xrightarrow{\overline{\mathcal{S}}^{(4)}_{\ell}(\tau)}
    \frac{1}{\left\Vert\mathcal{M}_{\ell}(\tau)\right\Vert_{\diamond}}
    \begin{pmatrix}
        \cdot & -\tfrac12 \sigma L_{\ell-J}^{\dagger}(\tau)L_{\ell-J}(\tau) \\
        -\tfrac12 L_{\ell-J}^{\dagger}(\tau)L_{\ell-J}(\tau)\sigma & \cdot
    \end{pmatrix}.
\end{align}
Summing these maps gives
\begin{equation}
    \begin{pmatrix}
        \cdot & \sigma \\
        \sigma & \cdot
    \end{pmatrix}
    \xrightarrow{
    \sum_{b\in\mathcal{B}_{\ell}}
    \overline{\mathcal{S}}^{(b)}_{\ell}(\tau)
    }
    \begin{pmatrix}
        \cdot &
        \overline{\mathcal{M}}_{\ell}(\tau)(\sigma) \\
        \overline{\mathcal{M}}_{\ell}(\tau)(\sigma) &
        \cdot
    \end{pmatrix}.
    \label{eq:normalized-swift-D-lc}
\end{equation}

We next describe how the normalized swift maps can be implemented. 
Using the block operator $\mathsf{K}(A,B)$ defined in~\eqref{eq:swift-block-operator}, the normalized Hamiltonian swift maps correspond to
\begin{align}
    \mathsf{K}_{\ell}^{(0)}(\tau)&\coloneqq
    \mathsf{K}\!\left(I,-\frac{iH_{\ell}(\tau)}{\left\Vert\mathcal{M}_{\ell}(\tau)\right\Vert_{\diamond}}\right),\\
    \mathsf{K}_{\ell}^{(1)}(\tau)&\coloneqq
    \mathsf{K}\!\left(-\frac{iH_{\ell}(\tau)}{\left\Vert\mathcal{M}_{\ell}(\tau)\right\Vert_{\diamond}}, I\right).
\end{align}
For the normalized Lindbladian swift maps, writing $L=L_{\ell-J}(\tau)$ for brevity, we similarly have
\begin{align}
    \mathsf{K}_{\ell}^{(2)}(\tau)&\coloneqq
    \mathsf{K}\!\left(\frac{L}{\sqrt{\left\Vert\mathcal{M}_{\ell}(\tau)\right\Vert_{\diamond}}},\frac{L}{\sqrt{\left\Vert\mathcal{M}_{\ell}(\tau)\right\Vert_{\diamond}}}
    \right),\\
    \mathsf{K}_{\ell}^{(3)}(\tau)&\coloneqq
    \mathsf{K}\!\left(-\frac{L^\dagger L}{2\left\Vert\mathcal{M}_{\ell}(\tau)\right\Vert_{\diamond}}, I\right),\\
    \mathsf{K}_{\ell}^{(4)}(\tau)&\coloneqq
    \mathsf{K}\!\left(I,-\frac{L^\dagger L}{2\left\Vert\mathcal{M}_{\ell}(\tau)\right\Vert_{\diamond}}\right).
\end{align}
The corresponding normalized swift maps are defined on an arbitrary joint control-system operator $\Omega$ by
\begin{equation}
    \overline{\mathcal{S}}_{\ell}^{(b)}(\tau)(\Omega)
    \coloneqq
    \mathsf{K}_{\ell}^{(b)}(\tau)\Omega\mathsf{K}_{\ell}^{(b)}(\tau)^\dagger.
\end{equation}
These definitions reproduce the normalized swift maps defined above on the off-diagonal blocks.

Since $H_{\ell}(\tau)$ and $L_{\ell-J}(\tau)$ are local, the corresponding operators $\mathsf{K}_{\ell}^{(b)}(\tau)$ also act only on the qSWIFT control qubit and a constant-size subsystem.
However, the normalization by $\left\Vert\mathcal{M}_{\ell}(\tau)\right\Vert_{\diamond}$ does not in general guarantee that $\mathsf{K}_{\ell}^{(b)}(\tau)$ has operator norm at most one.
We therefore choose a normalization factor $\alpha_{\ell}^{(b)}(\tau)\geq\left\Vert \mathsf{K}_{\ell}^{(b)}(\tau)\right\Vert$.
For the observable estimator below, we additionally assume that there exists a finite constant $\alpha_{\ast}$ such that
\begin{equation}
    \left\Vert \mathsf{K}_{\ell}^{(b)}(\tau)\right\Vert
    \leq
    \alpha_{\ell}^{(b)}(\tau)
    \leq
    \alpha_{\ast}
    <\infty
\end{equation}
for almost every sampled pair $(\ell,\tau)$ with respect to the joint distribution $p^{(k)}(\ell,\tau)$ and every $b\in\mathcal{B}_{\ell}$ selected with nonzero probability, uniformly over all intervals $k$.
This is an additional implementation assumption and does not follow from boundedness and piecewise continuity of the generators.

Then $\mathsf{K}_{\ell}^{(b)}(\tau)/\alpha_{\ell}^{(b)}(\tau)$ is a contraction, so the unitary dilation in~\eqref{eq:unitary-dilation-swift} can be applied with $C=\mathsf{K}_{\ell}^{(b)}(\tau)/\alpha_{\ell}^{(b)}(\tau)$.
The $|0\rangle$ heralding outcome therefore realizes the corresponding normalized swift-map contribution scaled by $1/\left(\alpha_{\ell}^{(b)}(\tau)\right)^2$.
Accordingly, a successful measurement outcome is multiplied by $\left(\alpha_{\ell}^{(b)}(\tau)\right)^2$, while a failed heralding outcome is assigned the value zero.

We now use these normalized swift maps to estimate the building block in~\eqref{eq:building-block-conti}. 
For any fixed tuple $(k,\ell_1,\tau_1,\ell_2,\tau_2)$, the contribution involving $\overline{\mathcal{M}}_{\ell_1}(\tau_1)\circ \overline{\mathcal{M}}_{\ell_2}(\tau_2)$ can be written as
\begin{align}
    & \operatorname{Tr}\!\left[ G\left(\prod_{i=k+1}^{r-1} \mathcal{A}^{\operatorname{(LC)}}(t_{i+1}, t_i)\right) \circ \overline{\mathcal{M}}_{\ell_1}(\tau_1) \circ \overline{\mathcal{M}}_{\ell_2}(\tau_2) \circ \left( \prod_{i=0}^{k-1}   \mathcal{A}^{\operatorname{(LC)}}(t_{i+1}, t_i)\right)(\rho) \right]\nonumber \\
    & = \sum_{\substack{b_1 \in \mathcal{B}_{\ell_1}\\b_2 \in \mathcal{B}_{\ell_2}}
    } \operatorname{Tr}\!\Bigg[( X\otimes G)\left( \mathrm{id} \otimes \prod_{i=k+1}^{r-1} \mathcal{A}^{\operatorname{(LC)}}(t_{i+1}, t_i)\right)\circ\overline{\mathcal{S}}_{\ell_1}^{(b_1)}(\tau_1) \circ \overline{\mathcal{S}}_{\ell_2}^{(b_2)}(\tau_2) \nonumber\\
    & \qquad \qquad \circ\left( \mathrm{id} \otimes \prod_{i=0}^{k-1}   \mathcal{A}^{\operatorname{(LC)}}(t_{i+1}, t_i)\right)\left(|+\rangle\!\langle+|\otimes \rho\right) \Bigg].\label{eq:building-block-swift-lc}
\end{align}
Here the products before and after the normalized swift maps act only on the system register, while the normalized swift maps act on the joint qSWIFT control qubit and system register. The contribution involving the opposite ordering $\overline{\mathcal{M}}_{\ell_2}(\tau_2)\circ \overline{\mathcal{M}}_{\ell_1}(\tau_1)$ is obtained by reversing the order of the two normalized swift maps:
\[
    \overline{\mathcal{S}}_{\ell_2}^{(b_2)}(\tau_2) 
    \circ \overline{\mathcal{S}}_{\ell_1}^{(b_1)}(\tau_1).
\]
Similarly, the contribution involving $\overline{\mathcal{M}}_{\ell}(\tau)\circ \overline{\mathcal{M}}_{\ell}(\tau)$ is obtained by setting $(\ell_1,\tau_1)=(\ell_2,\tau_2)=(\ell,\tau)$ in~\eqref{eq:building-block-swift-lc}.

From~\eqref{eq:delta-gk-expanded-lc} and~\eqref{eq:building-block-swift-lc}, we can construct an unbiased estimator of $\delta g$. 
For each $\ell$, define $p_b^{(\ell)}\coloneqq 1/|\mathcal{B}_{\ell}|$ for $b\in\mathcal{B}_{\ell}$.
We use a binary variable $\zeta\in\{0,1\}$ to sample between the two contributions in~\eqref{eq:delta-gk-expanded-lc}.

\medskip
\noindent
\textbf{Unbiased Estimator for $\delta g$.}  
To estimate $\delta g$, repeat the following procedure independently several times and take the sample mean of the recorded outcomes:
\begin{enumerate}
    \item Sample $\zeta\in\{0,1\}$ uniformly and sample $k\in\{0,\ldots,r-1\}$ uniformly.
    \item If $\zeta=0$, sample $(\ell_1,\tau_1)$ and $(\ell_2,\tau_2)$ independently from the joint distribution $p^{(k)}(\ell,\tau)$. 
    If $\zeta=1$, sample $(\ell_1,\tau_1)$ from $p^{(k)}(\ell,\tau)$ and set $(\ell_2,\tau_2)=(\ell_1,\tau_1)$.
    \item Sample $b_1\in\mathcal{B}_{\ell_1}$ and $b_2\in\mathcal{B}_{\ell_2}$ uniformly. 
    Denote the corresponding probabilities by $p_{b_1}^{(\ell_1)}$ and $p_{b_2}^{(\ell_2)}$.
    \item Prepare the joint initial state $|+\rangle\!\langle+|\otimes\rho$, where the qSWIFT control qubit is initialized in $|+\rangle$ and the system in $\rho$.
    \item Apply the continuous qDRIFT sampling procedure over the intervals $i=0,\ldots,k-1$, so that the averaged map is $\prod_{i=0}^{k-1}\mathcal{A}^{\operatorname{(LC)}}(t_{i+1},t_i)$.
    \item Apply the two selected normalized swift-map primitives in time order. 
    If $\tau_2\leq\tau_1$, apply $\overline{\mathcal{S}}_{\ell_2}^{(b_2)}(\tau_2)$ followed by $\overline{\mathcal{S}}_{\ell_1}^{(b_1)}(\tau_1)$. 
    If $\tau_2>\tau_1$, apply $\overline{\mathcal{S}}_{\ell_1}^{(b_1)}(\tau_1)$ followed by $\overline{\mathcal{S}}_{\ell_2}^{(b_2)}(\tau_2)$. 
    For each normalized swift-map primitive, initialize the heralding qubit in $|0\rangle$, apply the unitary dilation corresponding to $\mathsf{K}_{\ell}^{(b)}(\tau)/\alpha_{\ell}^{(b)}(\tau)$, and measure the heralding qubit.
    \item If any heralding measurement fails, record $0$ for this shot and retain it in the sample mean.
    Otherwise, apply the continuous qDRIFT sampling procedure over the intervals $i=k+1,\ldots,r-1$, so that the averaged map is $\prod_{i=k+1}^{r-1}\mathcal{A}^{\operatorname{(LC)}}(t_{i+1},t_i)$, and measure $X\otimes G$, with $X$ acting on the qSWIFT control qubit and $G$ acting on the system.
    \item For a successful shot, multiply the measurement outcome by
    \[
        \frac{(-1)^{\zeta}\left(\alpha_{\ell_1}^{(b_1)}(\tau_1)\right)^2\left(\alpha_{\ell_2}^{(b_2)}(\tau_2)\right)^2}
        {p_{b_1}^{(\ell_1)}p_{b_2}^{(\ell_2)}},
    \]
    scale it by $\left\Vert\mathcal{L}\right\Vert_{\diamond,1,1}^2/r$, and record the result.
\end{enumerate}
Let $Y$ denote the recorded correction outcome from a single trial of the above procedure. 
Equivalently, after incorporating the sampling weights and the heralding rescaling, the recorded outcome can be written as
\begin{equation}
    Y = \frac{\left\Vert\mathcal{L}\right\Vert_{\diamond,1,1}^{2}}{r} \frac{(-1)^\zeta\left(\alpha_{\ell_1}^{(b_1)}(\tau_1)\right)^2\left(\alpha_{\ell_2}^{(b_2)}(\tau_2)\right)^2}{p_{b_1}^{(\ell_1)}p_{b_2}^{(\ell_2)}}Z,
\end{equation}
where $Z$ denotes the final $X\otimes G$ measurement outcome for a successful heralding event and is set to zero if any heralding measurement fails.
The factor $\left\Vert\mathcal{L}\right\Vert_{\diamond,1,1}^{2}/r$ accounts for the uniform sampling of $\zeta\in\{0,1\}$ and $k\in\{0,\ldots,r-1\}$ together with the prefactor in~\eqref{eq:delta-gk-expanded-lc}, while the factors $1/p_{b_j}^{(\ell_j)}$ and $\left(\alpha_{\ell_j}^{(b_j)}(\tau_j)\right)^2$ respectively undo the component sampling and heralding rescaling.

Under the bounded-normalization assumption above, the magnitude of each recorded correction outcome is bounded by
\begin{equation}
    |Y| \leq \frac{9\alpha_{\ast}^{4}\left\Vert\mathcal{L}\right\Vert_{\diamond,1,1}^{2}\left\Vert G\right\Vert}{r}.
\end{equation}
Indeed, $1/p_b^{(\ell)}=|\mathcal{B}_{\ell}|\leq3$, each normalization factor is bounded by $\alpha_{\ast}$, and $|Z|\leq\left\Vert G\right\Vert$.
Thus the recorded outcome is integrable, and the fixed-parameter identities above can be averaged over the algorithmic sampling.
Under the stated sampling and bounded-normalization assumptions, the expectation value of the recorded outcome is $\delta g$, and hence the sample mean gives an unbiased estimator of $\delta g$. 
Combining this with the sample-mean estimator of $g^{(1)}$ gives an unbiased estimator of $g^{(2)}$.

\section{Conclusion}

In this work, we developed randomized sampling algorithms for Lindbladian simulation in both the time-independent and time-dependent settings. 
For time-independent Lindbladians with a finite local decomposition, we introduced a qSWIFT-type second-order correction to the averaged qDRIFT channel by matching the leading terms in the one-step expansions of the exact Lindbladian evolution and the randomized approximation. 
We proved that the resulting approximation has normalized diamond-norm error $O((\lambda^3t^3+\lambda^4t^4)/r^2)$ under the stated conditions. 
In particular, when $\lambda t\geq 1$, a normalized diamond-norm error of at most $\varepsilon$ can be achieved using $r=O(\lambda^2t^2/\sqrt{\varepsilon})$ time slices, improving the $\varepsilon^{-1}$ precision dependence of the first-order qDRIFT-type approximation to $\varepsilon^{-1/2}$.

For general time-dependent Lindbladians, we introduced a continuous-time randomized sampler whose averaged channel matches the first-order term of the time-ordered evolution, with an error bound determined by the integrated diamond norm of the Lindbladian. 
We then constructed a second-order correction from the difference between the second derivatives of the exact time-ordered propagator and the randomized one-step channel. 
The resulting second-order approximation achieves $O(r^{-2})$ convergence in normalized diamond norm and therefore an $\varepsilon^{-1/2}$ dependence on the target precision in the number of time slices. 
We further extended both the first- and second-order constructions to a linear-combination input model over local Hamiltonian and dissipative generators, allowing the randomized sampler to be formulated in terms of sampled local evolutions rather than direct implementation of the full time-dependent Lindbladian.

Because the second-order corrected maps are not necessarily CPTP, we also considered their implementation through observable estimation rather than deterministic state preparation. 
Under the stated implementation assumptions, using swift-map decompositions together with local unitary dilations and heralded implementations when needed, we constructed unbiased estimators of the corresponding second-order expectation values. 
This provides an operational way to use the second-order corrections without requiring the corrected maps themselves to be physical quantum channels.

The present results suggest several directions for further development. 
The expansion underlying the time-independent construction naturally suggests higher-order qSWIFT-type corrections beyond the second-order scheme analyzed here. 
Another important direction is to reduce the statistical overhead of the observable estimation procedures, for example by optimizing the sampling of correction terms and the corresponding swift map contributions. 
A sharper analysis of the variance and sample complexity of these estimators would further characterize the overall resource requirements of the second-order randomized algorithms.

\section*{Acknowledgements}
This work was supported by the National Research Foundation of Korea (NRF) Grants (No.~RS-2024-00438415). HK is supported by the KIAS Individual Grant No.~CG085302 at Korea Institute for Advanced Study. DP and MMW acknowledge support from the Air Force Office of Scientific Research under agreement no.~FA2386-24-1-4069.    
    
The U.S.~Government is authorized to reproduce and distribute reprints for Governmental purposes notwithstanding any copyright notation thereon. The views and conclusions contained herein are those of the authors and should not be interpreted as necessarily representing the official policies or endorsements, either expressed or implied, of the United States Air Force.

\section*{Declaration of AI use}
During the final preparation of this manuscript, the authors used ChatGPT-6 Astra to assist with proofreading and consistency checks. All suggested revisions were reviewed and edited as appropriate by the authors, who take full responsibility for the content of the manuscript.

\bibliography{ref}
\bibliographystyle{alphaurl}

\appendix

\section{Proof of Proposition~\ref{prop:qdrift-t-algo-gc}}\label{app:qdrift-t-algo-gc}
For clarity, we suppress system and reference-system labels throughout the proof. 
We begin by comparing one step of the exact Lindbladian evolution with the averaged one-step qDRIFT channel.

Defining the superoperator
\begin{equation}
    \mathcal{H}_j(\omega)\coloneqq -i[H_j,\omega],
\end{equation}
the Lindbladian can be written as
\begin{equation}
    \mathcal{L} = \lambda
    \left(\sum_{j=1}^{J}p_j\mathcal{H}_j + \sum_{k=1}^{K}p_{J+k}\mathcal{L}_k\right),
\end{equation}
while the averaged one-step qDRIFT channel is
\begin{equation}
    \mathcal{Q}_{\tau} = \sum_{j=1}^{J} p_j e^{\lambda\tau\mathcal{H}_j} + \sum_{k=1}^{K}p_{J+k}e^{\lambda\tau\mathcal{L}_k}.
\end{equation}

Separating the constant and linear terms from each exponential, using the normalization of the sampling probabilities and the expression for $\mathcal{L}$ above, we can therefore rewrite the averaged qDRIFT channel as
\begin{equation}
    \mathcal{Q}_{\tau} =
    \mathcal{I} + \tau\mathcal{L} 
    + \sum_{j=1}^{J}p_j\left(e^{\lambda\tau\mathcal{H}_j}-\mathcal{I}-\lambda\tau\mathcal{H}_j\right)
    + \sum_{k=1}^{K}p_{J+k}\left(e^{\lambda\tau\mathcal{L}_k}-\mathcal{I}-\lambda\tau\mathcal{L}_k\right).
\end{equation}
Thus, the constant term $\mathcal{I}$ and the term linear in $\tau$, namely $\tau\mathcal{L}$, agree with those of the exact one-step evolution $e^{\tau\mathcal{L}}$.
Subtracting $\mathcal{Q}_{\tau}$ from $e^{\tau\mathcal{L}}$ therefore gives
\begin{equation}
    e^{\tau\mathcal{L}}-\mathcal{Q}_{\tau} =
    \left(e^{\tau\mathcal{L}}-\mathcal{I}-\tau\mathcal{L}\right)
    -\sum_{j=1}^{J}p_j\left(e^{\lambda\tau\mathcal{H}_j}-\mathcal{I}-\lambda\tau\mathcal{H}_j\right)
    -\sum_{k=1}^{K}p_{J+k}\left(e^{\lambda\tau\mathcal{L}_k}-\mathcal{I}-\lambda\tau\mathcal{L}_k\right).
    \label{eq:qdrift-one-step-difference}
\end{equation}

We now bound each of the remainder terms in~\eqref{eq:qdrift-one-step-difference}.
For any Lindbladian generator $\mathcal{G}$ and $u\geq0$, Taylor's formula with integral remainder gives
\begin{equation}
    e^{u\mathcal{G}} - \mathcal{I} - u\mathcal{G}
    =\int_0^u dv\,(u-v)e^{v\mathcal{G}}\circ\mathcal{G}^2.
\end{equation}
Since $e^{v\mathcal{G}}$ is a quantum channel for $v\geq0$, its diamond norm is one. Using the submultiplicativity of the diamond norm, we therefore obtain
\begin{align}
    \left\Vert e^{u\mathcal{G}} - \mathcal{I} - u\mathcal{G} \right\Vert_{\diamond}
    &\leq \int_0^u dv\, (u-v)\left\Vert e^{v\mathcal{G}}\right\Vert_{\diamond}\left\Vert\mathcal{G}^2\right\Vert_{\diamond} \\
    &\leq \int_0^u dv\, (u-v)\left\Vert\mathcal{G}\right\Vert_{\diamond}^2 \\
    &= \frac{u^2}{2}\left\Vert\mathcal{G}\right\Vert_{\diamond}^2.
    \label{eq:exp-integral-remainder-bound}
\end{align}

Since $\Vert H_j\Vert=1$ and $\Vert L_k\Vert=1$, the submultiplicativity of the trace norm gives
\begin{equation}
    \Vert\mathcal{H}_j\Vert_{\diamond}\leq2,
    \qquad
    \Vert\mathcal{L}_k\Vert_{\diamond}\leq2.
\end{equation}
Consequently,
\begin{equation}
    \Vert\mathcal{L}\Vert_{\diamond}
    \leq \lambda\left(\sum_{j=1}^{J}p_j\Vert\mathcal{H}_j\Vert_{\diamond} + \sum_{k=1}^{K}p_{J+k}\Vert\mathcal{L}_k\Vert_{\diamond}\right)
    \leq 2\lambda.
\end{equation}

Applying~\eqref{eq:exp-integral-remainder-bound} to the three types of terms in~\eqref{eq:qdrift-one-step-difference}, we obtain
\begin{align}
    \left\Vert e^{\tau\mathcal{L}} - \mathcal{Q}_{\tau} \right\Vert_{\diamond}
    &\leq \frac{\tau^2}{2} \Vert\mathcal{L}\Vert_{\diamond}^2 +
    \frac{\lambda^2\tau^2}{2}\left(\sum_{j=1}^{J}p_j\Vert\mathcal{H}_j\Vert_{\diamond}^2+\sum_{k=1}^{K}p_{J+k}\Vert\mathcal{L}_k\Vert_{\diamond}^2\right)\\
    &\leq \frac{\tau^2}{2}(2\lambda)^2+\frac{\lambda^2\tau^2}{2}\left(4\sum_{j=1}^{J}p_j+4\sum_{k=1}^{K}p_{J+k}\right)\\
    &\leq 4\lambda^2\tau^2.
    \label{eq:qdrift-one-step-bound}
\end{align}

Finally, since both $e^{\tau\mathcal{L}}$ and $\mathcal{Q}_{\tau}$ are quantum channels, a telescoping expansion together with~\eqref{eq:qdrift-one-step-bound} gives
\begin{align}
    \frac{1}{2}\left\Vert e^{t\mathcal{L}} - \mathcal{Q}_{\tau}^{r} \right\Vert_{\diamond}
    &= \frac{1}{2}\left\Vert \left(e^{\tau\mathcal{L}}\right)^r - \mathcal{Q}_{\tau}^{r} \right\Vert_{\diamond}\\
    &\leq \frac{r}{2}\left\Vert e^{\tau\mathcal{L}} - \mathcal{Q}_{\tau} \right\Vert_{\diamond}\\
    &\leq 2r\lambda^2\tau^2\\
    &= \frac{2\lambda^2t^2}{r}.
\end{align}
Thus, if
\begin{equation}
    r \geq \frac{2\lambda^2t^2}{\varepsilon},
\end{equation}
then
\begin{equation}
    \frac{1}{2}\left\Vert e^{t\mathcal{L}} - \mathcal{Q}_{\tau}^{r} \right\Vert_{\diamond} \leq \varepsilon,
\end{equation}
which proves the proposition.

\end{document}